\newcommand{\BB}{{\mathbb B}}
\newcommand{\II}{{\mathbb I}}
\newcommand{\CC}{{\mathbb C}}
\newcommand{\RR}{{\mathbb R}}
\newcommand{\NN}{{\mathbb N}}
\newcommand{\ZZ}{{\mathbb Z}}
\newcommand{\cB}{{\mathcal{B}}}
\newcommand{\cD}{{\mathcal{D}}}
\newcommand{\cF}{{\mathcal{F}}}
\newcommand{\cH}{{\mathcal{H}}}
\newcommand{\cK}{{\mathcal{K}}}
\newcommand{\bia}{\mbox{\boldmath $a$}}
\newcommand{\bib}{\mbox{\boldmath $b$}}
\newcommand{\bix}{\mbox{\boldmath $x$}}
\newcommand{\sbix}{{\mbox{\footnotesize \boldmath $x$}}}
\newcommand{\biy}{\mbox{\boldmath $y$}}
\newcommand{\sbiy}{{\mbox{\footnotesize \boldmath $y$}}}
\newcommand{\biK}{\mbox{\boldmath $K$}}
\newcommand{\biO}{\mbox{\boldmath $O$}}
\newcommand{\biP}{\mbox{\boldmath $P$}}
\newcommand{\sbiP}{\mbox{\footnotesize \boldmath $P$}}
\newcommand{\bip}{\mbox{\boldmath $p$}}
\newcommand{\biQ}{\mbox{\boldmath $Q$}}
\newcommand{\biq}{\mbox{\boldmath $q$}}
\newcommand{\biR}{\mbox{\boldmath $R$}}
\newcommand{\biV}{\mbox{\boldmath $V$}}
\newcommand{\one}{\mbox{\boldmath $1$}}
\newcommand{\bfA}{\mbox{\boldmath $\mathfrak A$}}
\newcommand{\obfA}{\overline{\bfA}}
\newcommand{\fC}{{\mathfrak C}}
\newcommand{\fF}{{\mathfrak F}}
\newcommand{\fJ}{{\mathfrak J}}
\newcommand{\bfJ}{\mbox{\boldmath $\mathfrak J$}}
\newcommand{\fK}{{\mathfrak K}}
\newcommand{\bfK}{\mbox{\boldmath $\mathfrak K$}}
\newcommand{\fR}{{\mathfrak R}}
\newcommand{\bfR}{\mbox{\boldmath $\mathfrak R$}}
\newcommand{\balpha}{\mbox{\boldmath $\alpha$}}
\newcommand{\bdelta}{\mbox{\boldmath $\delta$}}
\newcommand{\bDelta}{\mbox{\boldmath $\Delta$}}
\newcommand{\bPsi}{\mbox{\boldmath $\Psi$}}
\newcommand{\bPhi}{\mbox{\boldmath $\Phi$}}
\newcommand{\bSigma}{\mbox{\boldmath $\Sigma$}}
\newcommand{\sbSigma}{\mbox{\footnotesize \boldmath $\Sigma$}}
\newcommand{\bpartial}{\mbox{\boldmath $\partial$}}
\newcommand{\scirc}{\mbox{\footnotesize $\circ$}}
\def\eg{{\it e.g.\ }}
\def\ie{{\it i.e.\ }}
\def\viz{{\it viz.\ }}
\title{The resolvent algebra of non-relativistic Bose fields: 
observables, dynamics and states \\[3mm] 
{\normalsize Dedicated to Klaus Fredenhagen on his seventieth 
birthday}}
\author{Detlev Buchholz}
\institute{Mathematisches Institut, 
Universit\"at G\"ottingen, \\ 37073 G\"ottingen, Germany}
\authorrunning{Detlev Buchholz} 
\titlerunning{The resolvent algebra of non-relativistic Bose fields}
\date{}
\begin{document}

\maketitle

\begin{abstract} 
\noindent
The structure of the gauge invariant (particle number preserving) 
C*-algebra generated by the resolvents of a non-relativistic 
Bose field is analyzed. It is shown to form a dense subalgebra of 
the bounded inverse limit of a directed system of approximately finite 
dimensional C*-algebras. Based on this observation, it is proven 
that the closure of the gauge invariant algebra is stable 
under the dynamics induced by Hamiltonians involving pair potentials. 
These facts allow to proceed to a description of interacting Bosons 
in terms of C*-dynamical systems. 
It is outlined how the present approach leads to simplifications in 
the construction of infinite bosonic states and sheds new light 
on topics in many body theory. 
\end{abstract}
\keywords{resolvent algebra, Bose fields, dynamics, ground states,
infra-vacua, particle interpretation, KMS states}

\section{Introduction}
\setcounter{equation}{0}

We continue in this article our study of the resolvent algebras 
of canonical quantum systems, which 
provide a natural C*-algebraic framework for 
the construction of non-trivial dynamics and of the corresponding states 
\cite{BuGr2, BuGr3, Bu1}. Moreover, they encode 
characteristic kinematical properties of the underlying   
finite or infinite quantum systems  
and have an intriguing algebraic structure~\cite{Bu2}. 
In the present article,  we supplement these results by a 
study of the resolvent algebra of a scalar non-relativistic 
quantum field, satisfying canonical commutation relations. 
This framework provides an efficient alternative to the particle 
picture, based on position and momentum operators, which is  
frequently used in the analysis of bosonic many body systems. 
Let us briefly recall some facts supporting this view.

First, the particle picture generically looses its significance 
when one is dealing with infinite systems, such as in case of 
the thermodynamic limit of equilibrium states or in the presence of an 
infinity 
of low energy excitations (infrared clouds) in states of finite energy. 
Strictly speaking, even in systems with a finite particle number
the particle interpretation of states acquires 
significance only at asymptotic times when the interaction between 
subsystems fades away and particle features emerge. The field 
theoretic formalism provides a complementary point of view. It is based 
on the concept of localized operations and observables and thereby 
tries to model what actually happens in laboratories. Moreover, 
it simultaneously covers finite and infinite systems and thus 
provides a uniform basis for their analysis and interpretation. 

Second, in order that some  
algebra generated by field operators may be regarded as a suitable 
kinematical  framework for the formulation of dynamics, 
it ought to incorporate the solutions of the Heisenberg equations  
for a large class of Hamiltonians of physical 
interest. If one considers 
the polynomial algebra generated by canonical field operators, 
one finds, however, that it is stable only under a small family of 
rather trivial dynamics (inducing symplectic transformations).
It is less known that this is also the case for the 
Weyl algebra of exponentials of the fields \cite{BuGr2,FaVe}. 
The resolvent algebra is much better behaved in this 
respect. As we shall show, a slight extension of 
its gauge invariant (particle number preserving)
subalgebra is stable under the dynamics induced by 
a large family of Hamiltonians describing pair interactions. 
Thus it comprises kinematical operators which can be   
used at any time to describe the underlying system. In simple words: 
the algebra does not only contain the initial values 
but also the solutions of the Heisenberg equations. 

Third, in the analysis of infinite systems one frequently relies 
on finite volume approximations (boxes). This 
approach requires the consideration of boundary conditions and 
modifications of kinematical algebras if one wants to proceed 
to the thermodynamic limit; it is a somewhat 
cumbersome procedure. In the field theoretic setting of the resolvent
algebra one deals from the outset with infinite space. Trapped systems
can be described nevertheless by adding to the Hamiltonians   
external confining potentials. The resulting dynamics still act on 
the extended algebra. Thus the resolvent algebra 
provides a convenient framework for the study of infinite systems
and their approximations. 

In the case considered here, the resolvent algebra is generated by 
the resolvents of a quantum field which is 
formed by linear combinations of 
creation and annihilation operators in $s$ spatial 
dimensions. 
These operators satisfy standard canonical 
commutation relations in position space,
informally given by 
$$
[a(\bix), a^*(\biy)] = \delta(\bix - \biy) \, 1 \, , \quad
[a(\bix), a(\biy)] = [a^*(\bix), a^*(\biy)] = 0 \, , \quad 
\bix, \biy \in \RR^s \, . 
$$
The advantage gained by using the resolvents of field operators,
smoothed with test functions, rests upon the fact that, in contrast 
to the exponential Weyl operators, large values of the fields 
are suppressed from the outset. The expectation values of the 
corresponding resolvents simply vanish in singular states. 
Thereby the apparent obstructions to a C*-algebraic
treatment of interacting bosonic systems, 
envisaged for example in \cite[Sec.~6.3]{BrRo} and \cite{NaTh}, 
become irrelevant.  
The simplifications, which arise by using the resolvents, have
not yet been exploited in the literature, 
to the best of our knowledge. Furthermore, this framework 
can be applied to an arbitrary number and 
arbitrary types of Bose fields.

As already mentioned, we will consider pair interactions of the
field. The generators of the time translations thus have 
on their standard domains of definition the form 
\begin{equation} \label{e1.1}
H = \int \! d\bix \, \bpartial a^*(\bix) \, \bpartial a(\bix) + 
\int \! d\bix \! \int \! d\biy \ a^*(\bix) a^*(\biy) \, V(\bix - \biy) \,
a(\bix) a(\biy) \, .  
\end{equation}
Here $\bpartial$ denotes the gradient with regard
to $\bix$ and we assume for simplicty that the potential 
$V$ is a real, continuous and symmetric function which vanishes 
at infinity. Singular and unbounded 
potentials can be treated in a similar manner, 
cf.\ \cite[Sec.~6]{Bu1}, but they are not considered here. 
We will make substantial use of the fact that $H$ commutes 
with the particle number operator, which is given by 
$N = \int \! d\bix \, a^*(\bix) a(\bix)$. 
 
In the subsequent section we will explicate these structures 
in more precise mathematical terms. We will make use of the fact
that the (abstractly defined) C*-algebra,   
describing the resolvents of a canonical quantum field, smeared with 
test functions, is faithfully represented in any regular 
representation, such as the Fock representation \cite[Thm.~4.10]{BuGr2}.
We will deal with the resolvent algebra in this fixed 
representation, denoting it by $\bfR$, since we can take 
advantage there of the simple structure of the 
underlying states. 

We focus in this article on the gauge invariant (particle number 
preserving) subalgebra \mbox{$\bfA \subset \bfR$} and determine its structure.
This algebra is equipped with a C*-norm and an increasing family
of C*-seminorms defining the norm in the limit. It will be convenient 
in the construction of dynamics to complete the algebra 
$\bfA$ in the topology induced by the seminorms, 
leading to a slight extension $\overline{\bfA} \supset \bfA$.
The algebra $\overline{\bfA}$ 
is again a C*-algebra. The two algebras coincide (as sets)
on all subspaces of Fock space with finite particle number,
\ie the extension becomes visible only in states describing an infinity
of particles. We will show that the algebra $\overline{\bfA}$,
henceforth called observable algebra, has 
a comfortable mathematical structure: it is isomorphic to the (bounded) 
inverse limit of a directed system of approximately finite dimensional 
C*-algebras. Note that this type of algebras goes 
by differing names in the 
literature, cf.~\cite{Ph} and references quoted there.

These observations facilitate the construction of dynamics of the 
observable algebra $\overline{\bfA}$.
Relying on arguments given in~\cite{Bu2,BuGr3}, we will show that 
the unitary time evolution operators $e^{itH}$, which are determined by 
the above Hamiltonians on Fock space, induce by their adjoint action 
$\balpha(t) \doteq \mbox{Ad} \, e^{itH}$, $t \in \RR$, 
automorphisms of the faithfully represented algebra $\overline{\bfA}$. 
Moreover, this action is pointwise continuous 
in the topology given by the 
seminorms. This fact allows to proceed from $\overline{\bfA}$ to 
C*-dynamical subsystems $(\bfA_\alpha, \balpha)$
for any given dynamics~$\balpha$. The dynamics 
also preserves the asymptotic commutativity of 
observables at large spatial distances, thereby complying 
with a principle of kinematical
causality. The existence of an important family of dynamics 
is thereby established, showing that the algebra of observables 
$\overline{\bfA}$ is an acceptable kinematical algebra
in the sense explained above. 

Having settled the framework, we will outline some  
applications of the present formalism to the construction of 
states and their interpretation.
The topics covered are ground states, approximate
ground states consisting of an infinity of low energy Bosons
(infra vacua) and the recovery of particle 
observables in the field theoretic setting at asymptotic times. 
We also indicate how thermal equilibrium states can be 
constructed in this setting by considering Gibbs-von 
Neumann ensembles for Hamiltonians with some additional confining
potential. Turning off these external potentials, the automorphic 
action of the dynamics converges on the observables 
to its original form and, by compactness arguments, one obtains 
limit states on the algebra of observables, describing systems 
in the thermodynamic limit. The mathematical arguments entering in 
these constructions are based on familiar methods in the theory of 
operator algebras and are largely omitted. 

Our article is organized as follows. In the subsequent section we 
recall some definitions and 
facts about the resolvent algebra, which are used in the 
present invesitigation. Section~3 contains
the structural analysis of the algebra of gauge invariant operators
generated by the resolvents of a Bose field. 
The automorphic action of the dynamics on the algebra of 
observables is established in Sec.~4. 
In Sec.~5 some topics in many body theory are addressed, 
whereby certain technical points are deferred to an appendix.  
The article closes with a brief summary and outlook.

\section{Framework}
\setcounter{equation}{0}

We consider the representation of the resolvent algebra on Fock space
\mbox{$\cF \subset \cH$}, \ie the totally symmetric subspace of the 
unsymmetrized space 
$\cH = \oplus_{n = 0}^\infty  \, \cH_n$. 
In more detail, the normalized 
vacuum vector corresponding to $n=0$ is denoted by 
$\Omega$, the single particle space is 
$\cH_1 \simeq L^2(\RR^s)$
and the unsymmetrized $n$-particle space is the 
$n$-fold tensor product of the single particle space, 
$$
\cH_n \simeq 
L^2(\RR^s) \otimes \cdots \otimes L^2(\RR^s) \, , \quad  
n \in \NN \, .
$$ 
On $\cH_n$ acts the unitary representation $U_n$
of the symmetric group $\bSigma_n$, which for 
$\Psi_1, \dots , \Psi_n \in \cH_1$  is given by 
$$ 
U_n(\pi) \, \Psi_1 \otimes \cdots \otimes \Psi_n
=  \Psi_{\pi(1)} \otimes \cdots \otimes \Psi_{\pi(n)} \, , 
\quad \pi \in \bSigma_n \, .
$$
Its mean 
\ $\overline{U}_n(\Sigma_n) \doteq (1/n!) \, \sum_{\pi \in \sbSigma_n} U_n(\pi)$ 
\ over the group $\Sigma_n$ is the orthogonal projection
in $\cH_n$ onto the totally symmetric subspace $\cF_n$.
We define the $n$-fold symmetric tensor product 
of vectors  $\Psi_1, \dots , \Psi_n \in \cH_1$, putting
$$
\Psi_1 \otimes_s \cdots \otimes_s \Psi_n \doteq
(1/n!) \, \sum_{\pi \in \sbSigma_n} 
\Psi_{\pi(1)} \otimes \cdots \otimes \Psi_{\pi(n)} 
\in \cF_n \, .
$$
Similarly, we define the symmetric (symmetrized) tensor product of operators
$A_1, \dots , A_n$, acting on $\cH_1 = \cF_1$, by
$$
A_1 \otimes_s \cdots \otimes_s A_n \doteq
(1/n!) \sum_{\pi \in \sbSigma_n} A_{\pi(1)} \otimes \cdots \otimes A_{\pi(n)} \, .
$$
The symmetric subspace $\cF_n \subset \cH_n$ is stable under the 
action of these operators, $n \in \NN$. 

\medskip 
\noindent \textit{Fields:} \
On Fock space $\cF$ there act 
the creation and annihilation operators 
$a^*$ and $a$, which are regularized 
with complex-valued test functions $f,g \in \cD(\RR^s)$,
having compact support. 
They satisfy on their standard domain 
of definition the commutation relations given by 
$$
[a(f), a^*(g)] = \langle f, g \rangle \, \one \, ,
\quad [a(f), a(g)] = [a^*(f), a^*(g)] = 0 \, ,
$$
where \ $\langle f, g \rangle = \int \! d\bix \, \overline{f}(\bix) g(\bix)$. 
We recall that $a^*(f)$ is complex linear in $f$ whereas~$a(f)$, being
the hermitean conjugate of $a^*(f)$, is antilinear in $f$.
The operators~$a(f)$ annihilate the vacuum vector $\Omega$ 
and, for given $n \in \NN$ and $f_1, \dots, f_n \in \cD(\RR^s)$,  
the corresponding vectors in $\cF_n$ are created from $\Omega$ 
according to  
$$
|f_1 \rangle \otimes_s \cdots \otimes_s  | f_n \rangle =
(1 / n!)^{1/2} \ a^*(f_1) \cdots a^*(f_n) \, \Omega \, . 
$$

This algebraic structure can be rephrased in terms of a single real linear,  
symmetric field operator~$\phi$ given by 
$\phi(f) \doteq (a^*(f) + a(f))$ with $f \in \cD(\RR^s)$. 
It satisfies the commutation relations
$$
[\phi(f) , \phi(g)] = i \sigma(f,g) \, \one
\, , \quad f,g \in \cD(\RR^s) \, ,
$$
where $\sigma(f,g) \doteq i (\langle g, f \rangle - \langle f,g \rangle )$
is a non-degenerate real linear symplectic form on $\cD(\RR^s)$ 
(being regarded as a real symplectic space). One can recover the 
creation and annihilation operators from the field operator through
the relations
$$
2 a^*(f) = \phi(f) - i \phi(i f) \, , \quad 
2 a(f) = \phi(f) + i \phi(i f) \, , \quad f \in \cD(\RR^s) \, .
$$
The resolvents of the field operators,  
$$
R(\lambda,f) \doteq (i \lambda  + \phi(f))^{-1} \, , \quad 
\lambda  \in \RR \backslash \{ 0 \} \, , f \in \cD(\RR^s) \, , 
$$
generate, by taking their sums and products and
proceeding to the norm closure, the resolvent 
algebra $\bfR$, based on the symplectic space 
$(\cD(\RR^s), \sigma)$. As already mentioned, the 
algebra $\bfR$ provides a concrete and faithful
representation of the abstractly defined resolvent 
algebra \cite[Thm.~4.10]{BuGr2}. We therefore
fix it throughout the subsequent analysis.

On Fock space there acts the particle number operator $N$.
It is the generator of the group $\Gamma \simeq U(1)$ of gauge
transformations given by  
$$
e^{i s N} \phi(f) e^{-is N} = \phi( e^{is}  f) \, , \quad
s \in [0, 2 \pi] \, ,  \, f \in \cD(\RR^s) \, .
$$
We denote by $\bfA \subset \bfR$ the C*-subalgebra of operators,  
which are invariant under these gauge transformations. It contains
for example the resolvents of the operators 
$$
\big( \phi(f)^2 + \phi(if)^2 \big) = 4 \, a^*(f)a(f) + 2 \langle f, f \rangle 
\, \one \, , \quad f \in \cD(\RR^s) \, ,
$$ 
which are gauge invariant 
elements of the resolvent algebra $\bfR$. This fact can 
be inferred form  \cite[Prop.~4.1]{BuGr3},  
taking into account that $\sigma(f, i f) \neq 0$ if $f \neq 0$, whence 
$\phi(f)$ and $\phi(if)$ are canonically conjugate 
operators. 

\medskip 
\noindent \textit{Position and momentum:}  \ 
It will be convenient in our analysis of dynamics to deal
also with the quantum mechanical position and 
momentum operators. These operators are denoted 
by $\biQ$, $\biP$ and satisfy canonical 
commutation relations which, in an obious notation, are  
$$
[\bia \biQ, \, \bib \biP ] = i \, ( \bia \bib )
\, \one \, , \quad 
\bia, \bib \in \RR^s \, ,
$$
all other commutators being $0$. We make use of the 
Schr\"odinger representation of these operators, 
$$
\bix \mapsto (\bia  \biQ f)(\bix) = (\bia \bix)  f(\bix) \, , \quad
\bix \mapsto (\bib \biP f)(\bix) = -i (\bib \, \bpartial) f(\bix) \, ,
$$
where $f \in L^2(\RR^s)$ lies in their respective domains.
Given $n \in \NN$, we consider pairs of these operators,  
$\biQ_1, \biP_1$, \dots , $\biQ_n, \biP_n$, 
which act on the correspondingly numbered components of the 
unsymmetric $n$-fold tensor product 
\mbox{$\cH_n \simeq L^2(\RR^s) \otimes \cdots \otimes L^2(\RR^s)$}  
and commute amongst each other.
Symmetrized functions of these operators leave the 
Fock space $\cF_n \subset \cH_n$ invariant, a prominent
example being the restriction of the Hamiltonian in equation 
\eqref{e1.1} to $\cF_n$, 
\begin{equation} \label{e2.1}
H_n \doteq H \upharpoonright \cF_n 
= \sum_i \biP_i^2 + \sum_{j \neq k} V(\biQ_j - \biQ_k) \, .
\end{equation}
The sums involved here extend over $i,j,k \in \{1, \dots , n\}$.
As explained in \cite{BuGr2}, one can also define resolvent
algebras of position and momentum operators, but we make no use 
of this formalism here.

\section{Structure of observables}
\setcounter{equation}{0}

In this section we will clarify the structure of 
the gauge invariant subalgebra $\bfA$ of 
the resolvent algebra of Bose fields. In a first step we determine the  
properties of special elements of this algebra.

\begin{lemma} \label{l3.1}
Let $M = \prod_{k=1}^m R(\lambda_k, f_k) \in \bfR$
be any ordered product (monomial) of resolvents of the field,
$\lambda_k \in \RR \backslash \{ 0 \}$, 
$f_k \in \cD(\RR^s) \backslash \{ 0 \}$, $k = 1, \dots , m$. Its mean over the 
gauge group,   
$\overline{M}^\Gamma \doteq (2 \pi)^{-1} \int_0^{2 \pi} \! dt \, 
e^{itN} M e^{-itN}$, which is defined in the strong operator topology,  
is an element of $\bfA$. 
Moreover, denoting by~$L$ the complex linear span  
generated by $(f_1, \dots , f_m)$ and by
$\cF(L) \subset \cF$ the Fock space based on the subspace 
$L \subset L^2(\RR^s)$, the mean 
$\overline{M}^\Gamma$ acts on $\cF(L)$ as a compact operator. 
\end{lemma}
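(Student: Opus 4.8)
The plan is to treat the two assertions separately: first the membership $\overline{M}^\Gamma\in\bfA$, then the compactness on $\cF(L)$.

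For the first assertion, gauge invariance of $\overline{M}^\Gamma$ is immediate from the invariance of $dt$ on the circle, so the only point is that $\overline{M}^\Gamma$ lies in $\bfR$ (and hence in its gauge-invariant part $\bfA$). Since $\bfR$ is norm closed, it suffices to show that the integrand $t\mapsto e^{itN}Me^{-itN}=\prod_{k=1}^m R(\lambda_k,e^{it}f_k)$ is norm continuous; the strong integral is then a norm limit of Riemann sums of elements of $\bfR$. Each factor moves only inside the complex line $\CC f_k\subset L$, so by the resolvent identity $R(\lambda,f)-R(\lambda,g)=R(\lambda,f)\,\phi(g-f)\,R(\lambda,g)$ with $g-f=(e^{i(t'-t)}-1)f_k$ I would split $\phi(g-f)$ into its $\phi(f_k)$- and $\phi(if_k)$-components, whose coefficients vanish as $t'\to t$. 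The $\phi(f_k)$-term is controlled by the bounded operator $R(\lambda,f)\phi(f)=\one-i\lambda R(\lambda,f)$; the $\phi(if_k)$-term requires only the uniform boundedness of $R(\lambda,f)\,\phi(if_k)\,R(\lambda,g)$, which is a one-mode (two-dimensional symplectic) estimate. Norm continuity of the product then follows because the $m$ factors are uniformly bounded by $\prod_k|\lambda_k|^{-1}$.

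For the compactness I would first use the factorization $\cF\cong\cF(L)\otimes\cF(L^\perp)$: the fields $\phi(f_k)$ with $f_k\in L$ act only on the first tensor factor, so $M=M_L\otimes\one$ and $\overline{M}^\Gamma=\overline{M_L}^{\,\Gamma}\otimes\one$, reducing everything to the Fock space over the finite-dimensional space $L$. On $\cF(L)$ the gauge average is exactly the number-diagonal part, $\overline{M_L}^{\,\Gamma}=\sum_n E_n M_L E_n$, with $E_n$ the projection onto the $n$-particle sector, since averaging the phases $e^{it(p-q)}$ over $t$ annihilates all off-diagonal blocks. Because each sector $\cF(L)_n$ is finite dimensional, this block-diagonal operator is compact if and only if $\|E_n M_L E_n\|\to0$ as $n\to\infty$, and this norm decay is the statement that must be proved.

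Establishing $\|E_n M_L E_n\|\to0$ is the main obstacle. The mechanism I would exploit is that the spanning property of $(f_1,\dots,f_m)$ forces the particle distribution to spread over the modes. Taking expectations in the $n$-particle sector, the number-changing parts $a^*(f_k)^2$ and $a(f_k)^2$ drop out, so $\sum_k\|\phi(f_k)\Psi\|^2=2\langle\Psi,\sum_k a^*(f_k)a(f_k)\Psi\rangle+\sum_k\|f_k\|^2$; and since $\sum_k a^*(f_k)a(f_k)$ is the second quantization of the strictly positive operator $\sum_k|f_k\rangle\langle f_k|$ on $L$, one has $\sum_k a^*(f_k)a(f_k)\ge cN$ on $\cF(L)$ for some $c>0$. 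Hence for a unit vector in $\cF(L)_n$ one gets $\sum_k\|\phi(f_k)\Psi\|^2\ge 2cn$, so at least one quadrature $\phi(f_k)$ has a widely spread spectral distribution and the corresponding resolvent suppresses its number-diagonal part. The delicate point is that the spreading of a single resolvent is not by itself enough: a state can avoid any one prescribed mode $f_k$, on which that resolvent acts almost like $(i\lambda_k)^{-1}$, so one cannot argue factor by factor. I therefore expect the proof to require the full product together with total-number conservation: a nonvanishing number-diagonal matrix element would force the intermediate vectors to keep all $m$ quadratures small simultaneously, which is incompatible with the lower bound above. I would make this quantitative by peeling off resolvents through the bounded operators $\phi(f_k)R(\lambda_k,f_k)=\one-i\lambda_k R(\lambda_k,f_k)$ and estimating the resulting diagonal blocks, with the orthogonal case $\cF(L)=\bigotimes_k\cF(\CC f_k)$ — where $M_L$ is a tensor product of single-mode resolvents and the decay $\langle n|R(\lambda_k,f_k)|n\rangle\sim n^{-1/2}$ is explicit — serving as the guiding model.
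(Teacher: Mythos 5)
Your overall organization (membership in $\bfA$, then compactness on $\cF(L)$) is reasonable, but both halves contain genuine gaps, and the first rests on a false premise. The map $t\mapsto e^{itN}Me^{-itN}=\prod_k R(\lambda_k,e^{it}f_k)$ is \emph{not} norm continuous, so $\overline{M}^\Gamma$ is not a norm limit of Riemann sums of elements of $\bfR$. Already for a single resolvent the gauge orbit is norm discontinuous at every point: testing $R(\lambda,f)-R(\lambda,e^{it}f)$ on coherent vectors displaced far in the $\phi(if)$-direction, the first resolvent stays close to $(i\lambda)^{-1}$ times the vector while the second tends to $0$, so $\|R(\lambda,f)-R(\lambda,e^{it}f)\|\geq |\lambda|^{-1}+o(1)$ for every $t$ with $\sin t\neq 0$. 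Your own estimate reveals the same obstruction once the constants are tracked: the coefficient of the $\phi(if_k)$-component of $\phi(g-f)$ is $O(\sin\delta)$, but the sandwiched operator $R(\lambda,f)\phi(if_k)R(\lambda,g)$ is \emph{not} uniformly bounded --- to bound it one must re-express $\phi(if_k)$ through the pair $\phi(g),\phi(ig)$ (or $\phi(f),\phi(if)$), and inverting that change of basis costs a factor $1/\sin\delta$ --- so the dangerous term is $O(1)$ and does not vanish as $\delta\to 0$. Membership in $\bfR$ therefore cannot be obtained by norm approximation of the integral; the paper gets it the other way around, by first proving compactness of $\overline{M}^\Gamma\upharpoonright\cF(L)$ and then invoking the fact that for finite dimensional non-degenerate $L$ the compact operators on $\cF(L)$ form the minimal ideal of $\fR(L)\upharpoonright\cF(L)$ (\cite[Thm.~5.4]{BuGr2}), so that compactness already yields $\overline{M}^\Gamma\in\fR(L)\subset\bfR$.

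For the compactness itself, your reduction to the decay $\|E_nM_LE_n\|\to 0$ of the number-diagonal blocks is correct, but that decay is precisely the hard content of the lemma and you leave it as a programme (``I would make this quantitative by peeling off resolvents\dots''); the single-mode heuristics and the bound $\sum_k a^*(f_k)a(f_k)\geq cN$ do not by themselves combine into a proof, as you acknowledge. The paper circumvents the diagonal-block estimate entirely by a positivity trick: it writes $\overline{M}^{\Gamma *}\overline{M}^\Gamma$ as a double integral over $(s,t)$ of $e^{isN}M^*e^{-isN}\,e^{itN}Me^{-itN}$. For $s\neq t$ (mod $\pi$) each pair $R(\lambda_k,e^{is}f_k)^*$, $R(\lambda_k,e^{it}f_k)$ consists of resolvents of canonically conjugate operators, since $\sigma(e^{is}f_k,e^{it}f_k)\neq 0$; by the ideal structure of the resolvent algebra (\cite[Prop.~4.4]{Bu2}) the integrand lies in the principal ideal generated by the reordered product $\prod_k R(\lambda_k,e^{is}f_k)^*R(\lambda_k,e^{it}f_k)$, which is compact on $\cF(L)$ by \cite[Thm.~5.4]{BuGr2} because the $f_k$ span $L$. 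A bounded function with values in the compacts almost everywhere integrates (in the strong topology) to a compact operator, so $\overline{M}^{\Gamma *}\overline{M}^\Gamma$ is compact, and the square root together with a polar decomposition gives compactness of $\overline{M}^\Gamma$ itself. To salvage your route you would have to supply the quantitative decay of the diagonal blocks of the full product; the double-integral argument is designed exactly to avoid having to do so.
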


\begin{proof}
Noticing that the space $(L, \sigma)$ is a finite dimensional non-degenerate
symplectic subspace of $(\cD(\RR^s),\sigma)$, let $\fR(L) \subset \bfR$ 
be the resolvent algebra generated by the resolvents 
$R(\lambda,h)$, where $\lambda \in \RR \backslash \{ 0 \}$, $h \in L$.
This algebra acts irreducibly on 
$\cF(L)$ and one has $e^{isN} M e^{-isN} \in \fR(L)$, $s \in [0, 2 \pi]$. 
Consider now the function
$$
s,t \mapsto e^{isN} M^* e^{-isN} e^{itN} M e^{-itN} \, , \quad
s,t \in [0, 2 \pi] \, .
$$ 
Since 
$e^{itN} R(\lambda, h) e^{-itN} = R(\lambda, e^{it} h)$,
and similarly for the adjoints, 
the values of this function lie in the intersection of the 
principal ideals in 
$\fR(L)$, which are generated
by the individual gauge-transformed resolvents in the 
above product. According
to \cite[Prop.~4.4]{Bu2}, this intersection coincides 
with the principal ideal generated by the reordered product 
$$
R(\lambda_1, e^{is} f_1)^*  R(\lambda_1, e^{it} f_1)
\cdots R(\lambda_m, e^{is} f_m)^*  R(\lambda_m, e^{it} f_m) \, .
$$
According to~\cite[Thm.~5.4]{BuGr2}  
the latter operator acts as a compact operator on $\cF(L)$
if all adjacent pairs of resolvents are generated by canonically 
conjugate operators, \ie if 
$$
\sigma(e^{is} f_k, e^{it} f_k) = 
i (e^{i(t-s)} - e^{i(s-t)}) \, \langle f_k, f_k \rangle  \neq 0 
\quad \text{for} \quad k = 1, \dots, m \, .
$$ 
So the above function
has, for almost all $(s,t) \in [0, 2 \pi] \times [0, 2 \pi]$, 
values in compact operators on $\cF_L$ and it is bounded. 
Hence the double integral (defined in the strong operator 
topology on $\cF_L$) 
$$
\overline{M}^{\Gamma *} \, \overline{M}^\Gamma = 
\int_0^{2 \pi} \! \! ds \int_0^{2 \pi} \! \! dt \, 
e^{isN} M^* e^{-isN} e^{itN} M e^{-itN}
$$
is a compact operator as well. Taking its square root and performing a polar 
decomposition, we find that 
$\overline{M}^\Gamma \upharpoonright \cF(L)$ is compact and consequently
an element of the compact ideal of 
$\fR(L) \upharpoonright \cF(L)$, 
cf.~\cite[Thm.~5.4]{BuGr2}. Since $\fR(L)$
is faithfully represented on $\cF(L)$, we conclude that 
$\overline{M}^\Gamma \in \fR(L) \subset \bfR$. 
Moreover,~$\overline{M}^\Gamma$ commutes
by construction with the gauge transformations which 
shows that $\overline{M}^\Gamma \in \bfA$. \qed 
\end{proof} 

Every element of $\bfR$ and hence \textit{a fortiori} of $\bfA$
can be approximated in norm by sums of monomials of resolvents
and the unit operator. It therefore follows from Lemma~\ref{l3.1} 
that the finite sums of operators of the form $\sum_i c_i \overline{M_i}^\Gamma$
with $\overline{M_i}^\Gamma$ as in the lemma,  
including the unit operator $\one$, are norm dense in~$\bfA$. 
In fact, given any $A \in \bfA$ and $\varepsilon > 0$ there exists a 
sum of monomials $\sum_i c_i M_i$ such that 
$\| \sum_i c_i M_i - A \| < \varepsilon$. Since $A$ is gauge invariant,  
we obtain, taking a mean over the gauge group, \ 
$\| \sum_i c_i \overline{M_i}^\Gamma - A \| < \varepsilon$, as claimed.
Thus we conclude that the algebra $\bfA$ is generated by the 
unit operator and the gauge invariant operators in the minimal compact 
ideals of all subalgebras of $\fR(L) \subset \bfR$ corresponding 
to the finite dimensional non-degenerate symplectic subspaces 
$(L, \sigma) \subset (\cD(\RR^s), \sigma)$. 

In the next step we analyze in detail the restrictions
of the algebra $\bfA$ to the subspaces $\cF_n \subset \cF$ 
for fixed  $n \in \NN$. 
From the algebraic point of view, these restrictions define  
irreducible 
representations of $\bfA$ which are disjoint for different values of 
$n$. In order to describe their structure, we introduce
the following quantities: given $1 \leq m \leq n$, 
let $\fC_m$ be the C*-algebra of compact operators on $\cF_m$.
This algebra coincides with the unique $m$-fold symmetric 
(symmetrized) tensor product
of the algebra of compact operators on~$\cF_1$, 
$$
\fC_m = \underbrace{\fC_1 \otimes_s \cdots \otimes_s \fC_1}_m \, .
$$ 
For $m=0$ we put $\fC_0 = \CC \, 1$. 
We embed these algebras into the algebra of bounded operators 
on $ \cF_n $, putting
\begin{equation} \label{e3.1}
\fC_{m,n} \doteq  \, \fC_m \otimes_s 
\underbrace{1 \otimes_s \cdots \otimes_s 1}_{n-m}  
= \, \underbrace{\fC_1 \otimes_s \cdots \otimes_s \fC_1}_m \otimes_s 
\underbrace{1 \otimes_s \cdots \otimes_s 1}_{n-m} \, , \quad
m \leq n \, .
\end{equation}
In particular, $\fC_{0,n} = \CC \, \one_n$ and $\fC_{n,n} = \fC_n$.
The subspaces $\fC_{m,n} \subset \cB(\cF_n)$ 
are linearily independent for different 
$m \leq n$ and the embeddings are continuous,~\ie 
$$
\| C \otimes_s \underbrace{1 \otimes_s \cdots \otimes_s 1}_{n-m} \|_n \leq  
\| C \|_m \, , \quad C \in \fC_m \, , \ 0 \leq m \leq n \, ,
$$
where $\| \, \cdot \, \|_n$ denotes the operator norm on 
$\cF_n$, $n \in \NN_0$. 
The latter assertion follows from the
fact that the operators in $\fC_{m,n}$ are obtained by taking
a mean over the permutation group $\Sigma_n$ of the corresponding
unsymmetrized tensor products on~$\cH_n$.
With these ingredients we proceed to the following definition.

\medskip 
\noindent \textbf{Definition:} \ Let $n \in \NN_0$. The algebra 
$\fK_n$ is the unital C*-algebra on $\cF_n$ that is generated by 
$\fC_{m,n}$, $0 \leq m \leq n$. It coincides with the
span of $\fC_{m,n}$, $0 \leq m \leq n$. 
The canonical (positive, linear) embeddings 
of these algebras into each other are 
denoted by $\epsilon_n: \fK_n \rightarrow \fK_{n+1}$. They are defined by 
$\epsilon_n(K_n) \doteq K_n \otimes_s 1$
and satisfy $\| \epsilon_n(K_n)\|_{n+1} \leq \| K_n \|_n$, $K_n \in \fK_n$.
The family $(\fK_n, \epsilon_n)_{n \in \NN_0}$ constitutes a 
directed system of C*-algebras. 

\medskip 
Note that the algebras  $\fK_n$ are AF-algebras, \ie
approximately finite dimensional, since the algebra of compact
operators is of this type. 
After these preparations we can establish the following fact, 
making use of the preceding lemma.

\begin{lemma} \label{l3.2}
Let $n \in \NN_0$. Then $\bfA \upharpoonright \cF_n \subseteq \fK_n$. 
\end{lemma}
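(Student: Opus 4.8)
The plan is to reduce the assertion to the generators of $\bfA$ identified after Lemma~\ref{l3.1}, and then to track how each generator decomposes under the tensor factorization of Fock space attached to a finite-dimensional subspace. Since $\fK_n$ is norm closed and $X \mapsto X \upharpoonright \cF_n$ is a norm-contractive linear map (indeed a $*$-homomorphism on the gauge-invariant operators, which preserve $\cF_n$), it suffices to check that $\one \upharpoonright \cF_n$ and every $\overline{M}^\Gamma \upharpoonright \cF_n$ lie in $\fK_n$. The unit restricts to $\one_n \in \fC_{0,n} \subseteq \fK_n$, so only the gauge means $\overline{M}^\Gamma$ of monomials $M = \prod_{k=1}^m R(\lambda_k, f_k)$ need attention. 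Let $L$ be the complex linear span of $f_1, \dots, f_m$ and use the exponential law $\cF \simeq \cF(L) \otimes \cF(L^\perp)$, under which the operators $a^*(h), a(h)$ with $h \in L$ act only on the first factor. By Lemma~\ref{l3.1}, $\overline{M}^\Gamma \in \fR(L)$ and is compact on $\cF(L)$; as it is generated by resolvents $R(\lambda, h)$ with $h \in L$, it factorizes as $A \otimes \one$ with $A \doteq \overline{M}^\Gamma \upharpoonright \cF(L)$, and gauge invariance forces $A$ to commute with the number operator on $\cF(L)$, so that $A = \bigoplus_{p \geq 0} A_p$ with $A_p$ acting on $\cF_p(L)$.

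Because $L$ is finite-dimensional, each space $\cF_p(L)$ is finite-dimensional and every $A_p$ is of finite rank. Grading $\cF_n$ by the number of particles in $L$, one has $\cF_n = \bigoplus_{p=0}^{n} \cF_p(L) \otimes \cF_{n-p}(L^\perp)$ and hence the finite sum $\overline{M}^\Gamma \upharpoonright \cF_n = \bigoplus_{p=0}^{n} A_p \otimes \one_{\cF_{n-p}(L^\perp)}$. The task is thus reduced to showing that every single-block operator $A_p \otimes \one_{\cF_{n-p}(L^\perp)}$, supported on the block with $p$ particles in $L$, acting as a finite-rank operator on the factor $\cF_p(L)$ and as the identity on $\cF_{n-p}(L^\perp)$, is a member of $\fK_n$.

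To this end I would use the triangular block structure of the subalgebras $\fC_{m,n}$ localized to $L$. For $g_1, \dots, g_p, h_1, \dots, h_p \in L$ the normal-ordered monomial $a^*(g_1) \cdots a^*(g_p)\, a(h_1) \cdots a(h_p) \upharpoonright \cF_n$ equals, up to a nonzero combinatorial constant, the symmetric embedding $Y \mapsto Y \otimes_s \one^{\otimes(n-p)}$ of the rank-one operator $Y = | g_1 \otimes_s \cdots \otimes_s g_p \rangle \langle h_1 \otimes_s \cdots \otimes_s h_p |$ into $\fC_{p,n}$. Since such rank-one operators span the finite-rank operators on $\cF_p(L)$, the embedding of an arbitrary finite-rank $Y$ on $\cF_p(L)$ yields an element of $\fC_{p,n} \subseteq \fK_n$. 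Being normal ordered with equally many creators and annihilators, this element is block-diagonal for the $L$-grading, acts as the identity on the $L^\perp$-factors, is supported on the blocks with at least $p$ particles in $L$, and on the lowest of these reproduces $Y \otimes \one$ (times the constant); all its remaining contributions live on strictly higher blocks and are again single-block operators of the finite-rank-times-identity type considered above.

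The claim then follows by downward induction on $p$. For $p = n$ the embedding is supported on the top block alone, where it gives $A_n \otimes \one$, a finite-rank operator on $\cF_n$ lying in $\fC_{n,n} = \fC_n \subseteq \fK_n$. Assuming that the single-block operators on all blocks above $p$ already lie in $\fK_n$, I choose $Y$ on $\cF_p(L)$ whose embedding has leading block $A_p \otimes \one$; subtracting its higher-block corrections, which lie in $\fK_n$ by the induction hypothesis, exhibits $A_p \otimes \one_{\cF_{n-p}(L^\perp)}$ as an element of $\fK_n$. Summing over $p$ gives $\overline{M}^\Gamma \upharpoonright \cF_n \in \fK_n$, and density of the generators together with the closedness of $\fK_n$ yields $\bfA \upharpoonright \cF_n \subseteq \fK_n$. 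I expect the main obstacle to be the correspondence asserted at the start of the third paragraph---verifying, with the correct constants, that the $L$-localized part of $\fC_{p,n}$ consists precisely of block-diagonal, $L^\perp$-trivial operators with this triangular leading-block behaviour---since it is exactly this triangularity, used through the full span $\fK_n = \sum_{m} \fC_{m,n}$ rather than any single $\fC_{m,n}$, that makes it possible to isolate the individual blocks.
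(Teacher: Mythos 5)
Your argument is correct, and its skeleton coincides with the paper's: reduce to the gauge means $\overline{M}^\Gamma$ of monomials, invoke Lemma~\ref{l3.1} to place $\overline{M}^\Gamma$ in $\fR(L)$ (so that it acts only on the $\cF(L)$ factor and commutes with $N(L)$), and grade $\cF_n$ by the number of particles in the finite-dimensional space $L$, reducing everything to single-block operators of the form $A_p \otimes \one$ with $A_p$ finite rank on $\cF_p(L)$. Where you diverge is in the last step. You realize each block operator as the leading term of the second quantization of a finite-rank operator on $\cF_p(L)$ (a normal-ordered monomial in $a^*, a$ with arguments in $L$) and peel off the higher-block contributions by downward induction on $p$, exploiting the triangularity that you correctly single out as the crux. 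The paper instead writes the block operator in closed form as $(C_p \, F_L \otimes_s \cdots \otimes_s F_L) \otimes_s (1-F_L) \otimes_s \cdots \otimes_s (1-F_L)$, with $F_L$ the finite-rank projection onto $L$; the factors $1-F_L$ annihilate the higher blocks automatically, and multilinear expansion of the $(1-F_L)$-part exhibits the operator directly as a finite sum of elements of $\fC_{m',n}$ with $p \leq m' \leq n$. The two routes carry the same combinatorial content --- your inductive subtraction is precisely the inclusion--exclusion that the expansion of the $(1-F_L)$ factors performs in one stroke --- but the projection trick dispenses with both the induction and the normalization constants of the second-quantization formula, which is the one place where your write-up would still require careful bookkeeping.
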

\begin{proof}
As already explained, the elements of $\bfA$ can be approximated in the 
norm topology by sums of gauge-averaged monomials $\overline{M}^\Gamma$.
So it suffices to show that 
$\overline{M}^\Gamma \upharpoonright \cF_n \in \fK_n$ for any such 
operator. 

Let $L \subset L^2(\RR^s)$ be the 
subspace generated by all test functions appearing 
in the resolvents, which are factors of the given monomial 
$M$, cf.~Lemma~\ref{l3.1}. This space determines the resolvent algebra 
$\fR(L)$, the Fock space $\cF(L)$ and the particle number 
operator $N(L)$ acting on it. Since $L$ is finite dimensional, 
the resolvent of $N(L)$ is a compact operator on $\cF(L)$ and hence 
belongs to the compact ideal of $\fR(L)$, cf.~\cite[Thm.~5.4]{BuGr2}.
As a matter of fact, since this resolvent is also gauge invariant, it 
belongs to $\bfA(L)$. 
The same is true for its finite dimensional spectral
projections $E_m(L)$, commuting with $\overline{M}^\Gamma$,
which are attached to the
spectral values $m \in \NN_0$. We decompose 
$\overline{M}^\Gamma$ into operators of finite rank,
$\overline{M}^\Gamma = \sum_{m=0}^\infty \overline{M}^\Gamma E_m(L)$.
The sum converges in norm in view of the compactness
of $\overline{M}^\Gamma$ on~$\cF(L)$, proved 
in the preceding lemma. 

We can determine now the action of the operators 
$\overline{M}^\Gamma E_m(L)$, $m \in \NN_0$, on $\cF_n$.
To this end we decompose $\cF_n$ into the orthogonal sum 
of symmetric tensor products 
$$ 
\cF_n = \sum_{k = 0}^n \underbrace{L \otimes_s \cdots \otimes_s L}_k \otimes_s 
\underbrace{L^\perp \otimes_s \cdots \otimes_s L^\perp }_{n-k} \, ,
$$
where $L^\perp$ denotes the orthogonal complement of $L$ in 
$L^2(\RR^s)$. Since the elements of $\bfA(L)$ commute with the 
creation operators 
$a^*(h^\perp)$, $h^\perp \in L^\perp$, 
the operators in~$\bfA(L)$ act non-trivially 
(\ie differ from a multiple of the identity) only 
on factors in the tensor products which are contained in $L$. Thus,
taking also into account the 
action of the projections $E_m(L) \in \bfA(L)$, $m \in \NN_0$, 
on the subspaces 
$\cF_k(L) \subset \cF(L)$, $k \in \NN_0$, we obtain for $m \leq n$  
$$
\overline{M}^\Gamma E_m(L) \upharpoonright \cF_n  =
\big( \overline{M}^\Gamma E_m(L) \upharpoonright 
\underbrace{L \otimes_s \cdots \otimes_s L}_m \big) \otimes_s 
\underbrace{L^\perp \otimes_s \cdots \otimes_s L^\perp }_{n-m}  \, ,
$$
where  
$\, \overline{M}^\Gamma E_{0}(L) = 
\langle \Omega, \overline{M}^\Gamma \Omega \rangle \, E_{0}(L)$. 
If $m > n$ we have   
$\overline{M}^\Gamma E_m(L) \upharpoonright \cF_n = 0$. 

Now, given $m \in \NN_0$, one clearly has 
$E_m(L) \cF_m \subset \cF_m(L)$. Hence $\overline{M}^\Gamma E_m(L)$
acts as a compact (even finite rank) operator $C_m$ on $\cF_m$
according to Lemma~\ref{l3.1}. 
Denoting by $F_L$ the finite dimensional projection on
$L^2(\RR^s)$ onto 
$L \subset L^2(\RR^s)$, the preceding step therefore implies that 
for any $m \leq n$
$$
\overline{M}^\Gamma E_m(L) \upharpoonright \cF_n 
= (C_m \underbrace{F_L \otimes_s \cdots \otimes_s F_L}_m \, ) \otimes_s 
\underbrace{(1 -F_L) \otimes_s \cdots \otimes_s (1 - F_L)}_{n-m}  \in
\fK_n \, .
$$ 
Since $\overline{M}^\Gamma \upharpoonright \cF_n 
= \big( \sum_{m = 0}^n \, \overline{M}^\Gamma E_m(L) \big) \upharpoonright \cF_n $,
this completes the proof of the statement. \qed 
\end{proof}

In the next step we show that $\bfA \upharpoonright \cF_n$ 
actually coincides with $\fK_n$. 
Before we turn to the proof, let us point to the following facts:
wheras the algebra~$\bfA$ is faithfully represented on $\cF$
and acts irreducibly on each $\cF_n$, $n \in \NN_0$, its action 
on these subspaces is not faithful. We denote 
by $\bfJ_{n} \subset \bfA$ the closed two-sided ideal of operators 
that are annihilated on $\cF_n$ and show below that these ideals 
are nested, \ie $\bfJ_{n+1} \subset \bfJ_{n}$. Hence the  
seminorms $\| \, \cdot \, \|_n$ on $\bfA$
are increasing, $n \in \NN_0$, and \
$\| \, \cdot \, \| \doteq \sup_{n} \| \, \cdot \, \|_n$ 
agrees with the C*-norm on $\bfA$.

\begin{lemma} \label{l3.3}
Let $n \in \NN_0$. Then $\bfA \upharpoonright \cF_n = \fK_n$. 
\end{lemma}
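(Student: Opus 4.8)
Since Lemma~\ref{l3.2} already gives the inclusion $\bfA \upharpoonright \cF_n \subseteq \fK_n$, the plan is to establish the reverse inclusion $\fK_n \subseteq \bfA \upharpoonright \cF_n$. I would first record the structural facts that turn this into an explicit construction. Gauge invariance means that the elements of $\bfA$ commute with $e^{isN}$ and hence preserve each sector $\cF_n$, so the restriction map $A \mapsto A \upharpoonright \cF_n$ is a representation of $\bfA$; its image $\bfA \upharpoonright \cF_n$ is therefore a C*-subalgebra of $\cB(\cF_n)$, in particular norm closed. Moreover $\fK_n$ is the span of the pieces $\fC_{m,n}$, $0 \le m \le n$. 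Since the finite rank operators supported on $\cF_m(L)$ for finite dimensional subspaces $L \subset L^2(\RR^s)$ are norm dense in $\fC_m$, and the embedding $\fC_m \ni C \mapsto C \otimes_s 1^{\otimes_s(n-m)} \in \fC_{m,n}$ is norm continuous, the whole problem reduces to exhibiting each such operator $C \otimes_s 1^{\otimes_s(n-m)}$ as a restriction of an element of $\bfA$, for arbitrary finite rank $C$ on $\cF_m(L)$ and arbitrary finite dimensional $L$ (which I enlarge to $L + iL$ so that $(L,\sigma)$ becomes non-degenerate symplectic, using $\sigma(f,if)\neq 0$).

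The construction rests on a single observation. Let $\hat C$ be the operator on the Fock space $\cF(L)$ that equals $C$ on the sector $\cF_m(L)$ and vanishes on all other sectors. Then $\hat C$ is compact, hence lies in $\fR(L)$ by \cite[Thm.~5.4]{BuGr2}; and it is gauge invariant, because on the fixed particle number sector $\cF_m(L)$ the gauge group acts through the scalar $e^{ism}$, so that conjugation by $e^{isN}$ leaves $\hat C$ fixed. Consequently $\hat C \in \bfA$. Arguing as in Lemma~\ref{l3.2} --- using that elements of $\bfA(L)$ commute with the creation operators $a^*(h^\perp)$, $h^\perp \in L^\perp$ --- I would compute its restriction to be $\hat C \upharpoonright \cF_n = C \otimes_s (1 - F_L)^{\otimes_s(n-m)}$, the spectator particles being forced into $L^\perp$ by the sectorial support of $\hat C$. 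Thus every operator of this special form already lies in $\bfA \upharpoonright \cF_n$.

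To pass from these operators to the generators of $\fC_{m,n}$, I would expand the identity on each spectator factor as $1 = F_L + (1 - F_L)$. Multiplying out the symmetric tensor power gives
$$
C \otimes_s 1^{\otimes_s(n-m)}
= \sum_{j=0}^{n-m} \binom{n-m}{j} \,
\big( C \otimes_s \underbrace{F_L \otimes_s \cdots \otimes_s F_L}_{j} \big)
\otimes_s \underbrace{(1 - F_L) \otimes_s \cdots \otimes_s (1 - F_L)}_{n-m-j} \, ,
$$
and each summand has the form $D \otimes_s (1-F_L)^{\otimes_s(n-m-j)}$ with $D \doteq C \otimes_s F_L^{\otimes_s j}$ a finite rank operator on $\cF_{m+j}(L)$. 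By the previous paragraph each summand is the restriction to $\cF_n$ of an element of $\bfA$, so the left hand side lies in $\bfA \upharpoonright \cF_n$ as well. Letting $C$ and $L$ vary and invoking the density and closedness noted above yields $\fC_{m,n} \subseteq \bfA \upharpoonright \cF_n$ for every $m$, hence $\fK_n \subseteq \bfA \upharpoonright \cF_n$; together with Lemma~\ref{l3.2} this gives the asserted equality.

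I expect the main obstacle to be the construction in the second paragraph: correctly identifying the gauge invariant compacts of $\fR(L)$ supported in a single particle number sector, and pinning down that their restriction to $\cF_n$ carries the spectator projection $(1 - F_L)$ rather than the full identity. The enabling point is the triviality of the gauge action within a fixed sector; once this and the restriction formula are in place, the bookkeeping with $1 = F_L + (1-F_L)$ and the density argument are comparatively routine.
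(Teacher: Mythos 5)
Your proof is correct, but it takes a genuinely different route from the paper's. The paper works bottom-up from $m=1$: it builds truncated annihilation operators $X_n(L_j) = n^{-1/2}E_{[0,n]}(L_j)\,a(e_j) \in \fR(L_j)$, shows that the gauge invariant products $W_n(i,k) = X_n(L_i)^*X_n(L_k)$ restrict to the embedded matrix units $M_{ik}\otimes_s 1\otimes_s\cdots\otimes_s 1$ on $\cF_n$, and then reaches general $m$ by multiplying the $W_n(i,k)$ and recursively subtracting the lower-order cross terms produced by $M_{i'k'}M_{i''k''} = \delta_{k'i''}M_{i'k''}$. You instead go straight to the source: any compact operator on $\cF(L)$ supported on a single $N(L)$-sector is a gauge invariant element of the compact ideal of $\fR(L)$, hence lies in $\bfA$, and its restriction to $\cF_n$ is (a positive multiple of) $C\otimes_s(1-F_L)^{\otimes_s(n-m)}$ --- the same computation the paper itself performs in Lemma~\ref{l3.2} --- after which the binomial expansion $1 = F_L + (1-F_L)$ in each spectator slot recovers $C\otimes_s 1^{\otimes_s(n-m)}$ as a finite linear combination of such restrictions. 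Your approach handles all $m$ uniformly and replaces the paper's only schematically described recursive elimination by an exact multilinear identity, which is arguably cleaner; what it gives up is the explicit family of generators $W_n(i,k)$, whose concrete form (and the norm continuity of $f\mapsto X_n(f)$) the paper reuses later in the proof of Lemma~\ref{l4.5}. One small imprecision: your restriction formula $\hat C\upharpoonright\cF_n = C\otimes_s(1-F_L)^{\otimes_s(n-m)}$ holds only up to a combinatorial constant depending on the normalization of the symmetrized tensor product, but since $\bfA\upharpoonright\cF_n$ is a linear subspace and your binomial identity is exact, this does not affect the conclusion (the paper's Lemma~\ref{l3.2} is equally cavalier on this point).
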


\begin{proof}
Since the quotient C*-algebra 
$\bfA / \bfJ_{n}$ is faithfully represented on $\cF_n$, it suffices
for the proof of the statement to show   
that $\bfA \upharpoonright \cF_n$ is dense in $\fK_n$ with 
regard to the operator norm on $\cF_n$. We will accomplish this by showing
that for each $0 \leq m \leq n$ the algebra $\bfA \upharpoonright \cF_n$
includes all operators in $\fC_{m,n} \subset \fK_n$ that arise from operators  
in $\fC_m$ having finite rank. Since the embedding of $\fC_m$ 
into $\fC_{m,n}$ is norm continuous, the statement then follows. 

For $m = 0$ we have $\fC_{0, n} = \CC \, \one_n$, so there is nothing
to prove. Next, let $m = 1$ and let $\{e_j \in \cD(\RR^s) \}_{j \in \NN}$ be  
an orthonormal basis in $L^2(\RR^s) \simeq \cF_1$. The 
algebra $\fC_1$ contains in particular the matrix units corresponding 
to this basis and their linear span is norm dense in $\fC_1$. 
The action of these matrix units on the vectors $\Phi_1 \in \cF_1$ is given by
$$
M_{i k} \Phi_1 = \langle e_k, \Phi_1 \rangle \, e_i \, , \quad
i, k \in \NN \, .
$$
We recall that their embeddings into $\fC_{1,n}$ are defined by the
symmetric (symmetrized) tensor product 
$M_{i k} \otimes_s \underbrace{1 \otimes_s \dots \otimes_s 1}_{n-1}$, \
$i,k \in \NN$. 

In order to establish the existence of
operators in $\bfA$ that induce the same action
on $\cF_n$ for given $n \in \NN$, 
we consider the complex rays $L_j \doteq \, \CC \, e_j$ and the 
corresponding particle number operators $N(L_j) = a^*(e_j) a(e_j)$ 
on the Fock spaces $\cF(L_j)$, $j \in \NN$. 
As already explained, all spectral projections of $N(L_j)$
corresponding to spectral values in the finite interval $[0, n]$
are finite dimensional and thus are 
elements of the compact ideal of the resolvent algebra $\fR(L_j)$.
Denoting the sum of these projections by $E_{[0,n]}(L_j)$, we 
therefore have 
$$
X_n(L_j) \doteq 
n^{-1/2} \, E_{[0,n]}(L_j) \, a(e_j) \, \in \, 
\fR(L_j) \subset \fR \, , \quad 
j \in \NN_0 \, .
$$
The resulting 
operators $W_n(i,k) \doteq X_n(L_i)^* X_n(L_k)$ are gauge invariant, hence 
elements of $\bfA$, and a routine computation shows that 
$$
W_n(i,k) \upharpoonright \cF_n = 
M_{i k} \otimes_s \underbrace{1 \otimes_s \dots \otimes_s 1}_{n-1} \, ,
\quad i,k \in \NN \, .
$$
Since all operators in $\fC_1$ of finite rank can be decomposed into 
finite sums of matrix units, this proves the statement for $m = 1$.

For fixed $n \in \NN$, the operators $W_n(i,k)$ can now be used in order to 
obtain operators of finite rank in the spaces $\fC_{m, n}$ for any
$m \leq n$. One proceeds from the equality
\begin{align*}
W_n(i_1, k_1) & \cdots W_n(i_m, k_m) \upharpoonright 
\cF_n \\
= & \, (M_{i_1 k_1} \otimes_s \underbrace{1 \otimes_s \dots \otimes_s 1}_{n-1})
\cdots (M_{i_m k_m} \otimes_s \underbrace{1 \otimes_s \dots \otimes_s 1}_{n-1}) \, .
\end{align*}
Due to the symmetrization and the relation 
$M_{i^\prime k^\prime} M_{i^{\prime \prime} k^{\prime \prime}} = 
\delta_{k^\prime i^{\prime \prime}} M_{i^\prime k^{\prime \prime}}$, 
the right hand side of this 
equality is a linear combination of operators in 
$\fC_{m^\prime, n}$ which arise from matrix units 
in $\fC_{m^\prime}$, $1 \leq   m^\prime \leq m$. By an obvious recursive  
procedure one can determine certain specific linear combinations of 
these operators where all contributions involving products of
the matrix units are eliminated and only their $m$-fold tensor 
product remains. The result is of the form
$$
W_n(i_1, \dots i_m, k_1, \dots k_m) \! \doteq \! \sum c_{j_1, \dots j_{m^\prime}, 
l_1, \dots l_{m^\prime}}
W_n(j_1,l_1) \! \cdots \! W_n(j_{m^\prime}, l_{m^\prime}) \, ,
$$ 
where the sum extends over  
$j_1, \dots , j_{m^\prime} \in \{ i_1, \dots i_m  \}$, \ 
$l_1, \dots , l_{m^\prime} \in \{ k_1, \dots k_m  \}$
and $m^\prime \leq m$. 
With properly determined coefficents, these operators satisfy   
$$
W_n(i_1, \dots i_m, k_1, \dots k_m) 
\upharpoonright \cF_n
= M_{i_1 k_1} \otimes_s \cdots \otimes_s M_{i_m k_m}
\otimes_s \underbrace{1 \otimes_s \cdots \otimes_s 1}_{n-m} \, .
$$
Hence $\bfA \upharpoonright \cF_n$ contains
for any given $0 \leq m \leq n$ 
all operators in $\fC_{m, n}$ that arise from operators 
in $\fC_m$ of finite rank. Recalling that these operators are 
norm dense in the compact operators, we conclude that 
$\bfA \upharpoonright \cF_n$
is norm dense in $\fK_n$ with regard to the operator norm
on $\cF_n$. Bearing in mind the initial remarks, this completes 
the proof of the statement.  \qed \end{proof}

The preceding lemma shows that the restrictions of the
elements $A \in \bfA$ to the $n$-particle subspaces of Fock space 
correspond to elements of the directed system 
$(\fK_n, \epsilon_n)_{n \in \NN_0}$.
These restrictions are representations
$\rho_n$ of the algebra $\bfA$, $n \in \NN_0$, 
$$
\rho_n(A) \doteq A \upharpoonright \cF_n \, , \quad A \in \bfA \, .
$$
Lemma \ref{l3.3} implies that for any $A \in \bfA$ one has
$\rho_n(A) = \sum_{m=0}^n C_{m,n} \in \fK_n$ for elements 
$C_{m,n} \in \fC_{m,n}$, $0 \leq m \leq n$. Conversely, given 
any element of $\fK_n$, there is some operator 
$A \in \bfA$ satisfying this equality.
In order to better understand the global structure of
this correspondence, we make use of the kinematical 
locality properties of $\bfA$ and clustering properties of the 
states in $\cF$. The results will put us into the position  
to introduce canonical maps from 
$\fK_n$ to $\fK_{n-1}$,  $n \in  \NN$. So these maps go into 
the inverse direction of the above directed system. 

Given $n \in \NN$, we consider vectors in $\cF_n$,    
where one of the single particle components undergoes large spatial
translations. Picking $f_1, \dots , f_n \in \cD(\RR^s)$,  we define 
$$
\bPhi^n(\bix) \doteq | f \rangle_1 \otimes_s \dots \otimes_s | f_{n-1} \rangle 
\otimes_s | f_n(\bix) \rangle \, \in \cF_n \, ,
$$
where $f_n(\bix)$ denotes the translated function $f_n$, hence 
$| f_n(\bix) \rangle = e^{i \sbix \sbiP} \, | f_n \rangle$, 
\mbox{$\bix \in \RR^s$}. We also define 
$$
\bPhi^{n-1} \doteq | f_1 \rangle \otimes_s \dots \otimes_s |f _{n-1} \rangle 
\in \cF_{n-1} \, ,
$$
where we put $\bPhi^0 \doteq \Omega$. Replacing the functions 
$f_k$ by $g_k \in \cD(\RR^s)$, $k = 1, \dots , n$,  we 
get in a similar manner vectors
$\Psi^n(\bix) \in \cF_n$ and $\Psi^{n-1} \in \cF_{n-1}$.
With this notation, the following lemma obtains.

\begin{lemma} \label{l3.4}
Let $A \in \bfA$ and let $n \in \NN$. Then 
$$ \lim_{\sbix \rightarrow \infty}  \,
\langle \bPsi^n(\bix), \, 
A \ \bPhi^n(\bix) \rangle = n^{-1} \langle \bPsi^{n-1}, \, 
A \, \bPhi^{n-1} \rangle \
\langle g_n , f_n \rangle \, .
$$ 
As a consequence, \ $\| A \|_{n-1} 
\leq \| A \|_n$.
Recalling that $A \upharpoonright \cF_n 
= \sum_{m = 0}^n C_{m,n}$, where $C_{m,n} \in \fC_{m,n}$, 
and omitting from each operator $C_{m,n}$ 
a tensor factor~$1$, one obtains operators  
$C_{m,n-1} \in \fC_{m,n-1}$, \ $0 \leq m \leq n-1$,
and 
$$
A \upharpoonright \cF_{n-1} = 
\sum_{m = 0}^{n-1} \, (n-m)/n \ C_{m, n-1} \, .  \hspace{10mm} \qed
$$
\end{lemma}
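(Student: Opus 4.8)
The engine of the argument is to peel the far-translated particle off as a creation operator. Because $\bPhi^n(\bix) = |f_n(\bix)\rangle \otimes_s \bPhi^{n-1} = n^{-1/2}\, a^*(f_n(\bix))\,\bPhi^{n-1}$, and likewise $\bPsi^n(\bix) = n^{-1/2} a^*(g_n(\bix))\,\bPsi^{n-1}$, one has for every bounded operator $B$ on $\cF$
$$
\langle \bPsi^n(\bix),\, B\, \bPhi^n(\bix)\rangle = n^{-1}\,\langle \bPsi^{n-1},\, a(g_n(\bix))\, B\, a^*(f_n(\bix))\,\bPhi^{n-1}\rangle \, .
$$
Everything reduces to controlling $a(g_n(\bix))\, B\, a^*(f_n(\bix))$ on the fixed vector $\bPhi^{n-1} \in \cF_{n-1}$ as $\bix \to \infty$. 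The far mode $f_n(\bix)$ tends weakly to $0$ while $\langle g_n(\bix), f_n(\bix)\rangle = \langle g_n, f_n\rangle$ stays constant, so the plan is to show that the far mode either passes through $B$ untouched and then contracts with its partner, producing the scalar $\langle g_n, f_n\rangle$, or else it couples to $B$ and is thereby suppressed.

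For the limit formula I first take $B = A$ a gauge-averaged monomial $\overline{M}^\Gamma$. All resolvents entering $\overline{M}^\Gamma$ are smeared with test functions supported in one fixed compact set, and since $e^{itN}R(\lambda,f)e^{-itN} = R(\lambda, e^{it}f)$ preserves supports, the symplectic form of each of them with $f_n(\bix)$ (and with $i f_n(\bix)$) vanishes once $\bix$ is large enough to move $f_n(\bix)$ off that set; hence $a^*(f_n(\bix))$ commutes with $\overline{M}^\Gamma$ for all large $\bix$. Moving it to the left and using $a(g_n(\bix))\,a^*(f_n(\bix)) = \langle g_n, f_n\rangle\,\one + a^*(f_n(\bix))\,a(g_n(\bix))$, the matrix element reduces to $n^{-1}\langle g_n,f_n\rangle\langle\bPsi^{n-1},\overline{M}^\Gamma\bPhi^{n-1}\rangle$ up to a remainder in which $a(g_n(\bix))$ acts first on the fixed vector $\overline{M}^\Gamma\bPhi^{n-1}$ and so vanishes as $\bix\to\infty$. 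For a general $A \in \bfA$ I approximate it in norm by such finite sums $A_\varepsilon$; since $\bPhi^n(\bix), \bPsi^n(\bix)$ are uniformly bounded in $\bix$ (their norms even converge), the errors on both sides are $O(\varepsilon)$ uniformly in $\bix$, and the formula passes to the limit. This is the step I expect to be delicate: the decoupling is transparent only for strictly localized observables, and the extension must be routed through the bounded bilinear form, keeping the unbounded $a^*(f_n(\bix))$ absorbed into the bounded vector $\bPhi^n(\bix)$, rather than through commutators with the norm-limit $A$.

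The norm inequality is then immediate: specializing to $g_k = f_k$ gives $\|\bPhi^n(\bix)\|^2 \to n^{-1}\,\langle f_n, f_n\rangle\,\|\bPhi^{n-1}\|^2$, so $\bPhi^{n-1}\mapsto \bPhi^n(\bix)$ is asymptotically $n^{-1/2}$ times an isometry; feeding this into the limit formula and taking the supremum over a total set of vectors in $\cF_{n-1}$ yields $\|A\|_{n-1}\le\|A\|_n$, and hence the nesting $\bfJ_{n+1}\subset\bfJ_n$.

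Finally, the component formula is obtained by running the same computation with $B = C_{m,n}$, the homogeneous pieces of $A\upharpoonright\cF_n$ supplied by Lemma~\ref{l3.3}. For a symmetrized $m$-body compact operator one has the exact relation $C_{m,n}\,a^*(h) = \tfrac{n-m}{n}\,a^*(h)\,C_{m,n-1} + (\text{a term in which } C_m \text{ contracts the new leg } h)$, where $C_{m,n-1}$ arises from $C_{m,n}$ by deleting one factor $\one$; with $h = f_n(\bix)$ the contraction term vanishes in norm because $C_m$ is compact and $f_n(\bix)\rightharpoonup 0$ (reduce to finite rank first and extend by density). The weight $\tfrac{n-m}{n}$ is just the fraction of spectator slots and is the one combinatorial constant that must be pinned down carefully. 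Proceeding as before gives $\lim_{\sbix\to\infty}\langle \bPsi^n(\bix), C_{m,n}\,\bPhi^n(\bix)\rangle = \tfrac{n-m}{n^2}\,\langle g_n, f_n\rangle\,\langle \bPsi^{n-1}, C_{m,n-1}\,\bPhi^{n-1}\rangle$, the term $m=n$ contributing nothing since $C_{n,n}$ has no spectator slot. Summing over $m$ and comparing with the limit formula already proved, one cancels the common factor $\langle g_n, f_n\rangle$ (choosing $g_n=f_n\neq0$) and, since the product vectors are total in $\cF_{n-1}$, reads off the operator identity $A\upharpoonright\cF_{n-1} = \sum_{m=0}^{n-1}\tfrac{n-m}{n}\,C_{m,n-1}$.
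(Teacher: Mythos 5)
Your proof is correct and follows essentially the same route as the paper: peel off the far-translated particle via the canonical commutation relations and the weak convergence $f_n(\bix)\rightharpoonup 0$, use compactness of the tensor factors to suppress the terms where that particle couples to the observable, and count the spectator slots to get the weight $(n-m)/n$. The only difference is presentational — you package the combinatorics into the pull-through identity $C_{m,n}\,a^*(h) = \tfrac{n-m}{n}\,a^*(h)\,C_{m,n-1} + (\text{contraction})$, whereas the paper counts the permutations $\pi,\pi'$ with $\pi(n)=\pi'(n)\in\{m+1,\dots,n\}$ directly — and both yield the same limit $\tfrac{n-m}{n^2}\langle g_n,f_n\rangle\langle\bPsi^{n-1},C_{m,n-1}\bPhi^{n-1}\rangle$.
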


\noindent \textbf{Remark:} \ It follows from this 
lemma that every element $A \in \bfA$ which annihilates all vectors
in~$\cF_n$ also annihilates all vectors in $\cF_{n-1}$.
This implies the inclusion of ideals  
\mbox{$\fJ_n \subset \fJ_{n-1}$} mentioned above, $n \in \NN$. 

\begin{proof}
The first equality is a well known consequence of the canonical 
commutation relations. In the case at hand, these relations imply 
$$
[R(\lambda, f), a^*(f_n(\bix))] = 
\langle f, f_n(\bix) \rangle \, R(\lambda, f)^2 \, ,
\quad f \in \cD(\RR^s) \, .
$$
Now $|f_n(\bix) \rangle = e^{i \sbix \sbiP} | f_n \rangle$ and 
$e^{i \sbix \sbiP} \rightarrow 0$  weakly on $L^2(\RR^s)$
as $\bix \rightarrow \infty$; hence 
$\langle f, f_n(\bix) \rangle \rightarrow 0$. One can therefore commute 
in this limit the creation operator~$a^*(f_n(\bix))$, 
used in creating the component $| f_n(\bix) \rangle$ of
$\Phi^n(\bix)$ from the vacuum, to the left of 
any given monomial $M$ in the resolvents. Its adjoint then acts on the 
vector $ \bPsi^n(\bix)$. Commuting it through the
corresponding creation operators to the vacuum, the statement 
follows in this special case since 
$\langle g_k, f_n(\bix) \rangle \rightarrow 0$, $k = 1, \dots , n-1$, 
and $\langle g_n(\bix), f_n(\bix) \rangle 
= \langle g_n, f_n \rangle$.
Since the finite sums of monomials are norm-dense in~$\bfA$,
this result extends to all $A \in \bfA$. 

The inequality involving the norms of $A$ follows from this result 
by fixing normalized functions $g_n = f_n$. Noticing that
the stated equality then holds for arbitrary linear combinations 
$\bPhi^{n-1}, \bPsi^{n-1} \in \cF_{n-1}$ 
of the respective factors in the vectors
$\bPhi^n(\bix), \bPsi^n(\bix) \in \cF_n$, one obtains 
\begin{align*}
& |\langle \bPsi^{n-1}, A \, \bPhi^{n-1} \rangle| \, \big/ \, 
\| \bPsi^{n-1} \| \, \| \bPhi^{n-1} \| \\
& = \lim_{\sbix \rightarrow \infty} 
| \langle \bPsi^n(\bix), A \, \bPhi^n(\bix) \rangle | \, \big/ \, 
\| \bPsi^n(\bix) \| \, \| \bPhi^n (\bix) \| \ \leq \ \| A \|_n \, .
\end{align*}
This implies $\| A \|_{n-1} \leq \| A \|_n$.

Finally, let 
$A \upharpoonright \cF_n = \sum_{m = 0}^n C_{m,n}$. 
Replacing $A$ by this sum, one
obtains matrix elements of the form 
$ \langle \bPsi^n(\bix), \, C_{m,n} \ \bPhi^n(\bix) \rangle$, \
$0 \leq m \leq n$. Since the algebra $\fC_m$ is the 
$m$-fold symmetric C*-tensor product of the algebra
of compact operators $\fC_1$ on $\cF_1$, it suffices to 
consider operators $C_{m,n}$ which are the $n$-fold symmetric
tensor product of $m$ compact operators 
$C^{(1)}, \dots , C^{(m)}$ on $\cF_1$ and $n-m$ unit 
operators. Now if one of the operators 
$C^{(k)}$, $k = 1, \dots, m$, acts on the 
component $| f_n(\bix) \rangle$ of
$\bPhi^n(\bix)$ it maps, being compact, these weakly 
convergent vectors into vectors which converge strongly 
to $0$ for $\bix \rightarrow \infty$. An 
analogous statement holds for the adjoints of 
$C^{(k)}$, $k = 1, \dots, m$, acting on~$\bPsi^n(\bix)$. 

In order to see which terms in the above matrix element 
do not vanish 
in the limit $\bix \rightarrow \infty$, we consider on $\cH_n$ the 
unymmetrized tensor product 
$$
C^{(1)} \otimes \cdots \otimes C^{(m)}
\otimes \underbrace{1 \otimes \cdots \otimes 1}_{n-m} 
$$
between the symmetrized vectors. The subsequent symmetrization of this 
operator therefore does not change the result. 
Let $\pi, \pi^\prime \in \Sigma_n$ be the permutations
used in the symmetrization of $\bPhi^n(\bix)$, respectively
$\bPsi^n(\bix)$. According to the preceding remarks, 
the terms surviving in the limit arise for 
permutations $\pi$ and $\pi^\prime$
satisfying $\pi(n) = \pi^\prime(n)  \in \{m+1, \dots, n \}$.  
If $m = n$ there are no such permutations, \ie the corresponding
matrix element vanishes in the limit. If $0 \leq m < n$ one
obtains for each of the 
$n - m$ admissible values $\pi(n) = \pi^\prime(n)$ a total of 
$(n - 1)!$ permutations $\pi$, respectively $\pi^\prime$, of the
remaining indices which satisfy 
this condition. This gives
$$
\lim_{\sbix \rightarrow \infty} 
\langle \bPsi^n(\bix), \, C_{m,n} \ \bPhi^n(\bix) \rangle 
= (n-m)/n^2 \, \langle  \bPsi^{n-1}, \, C_{m,n-1} \, \bPhi^{n-1} \rangle \ 
\langle g_n, f_n \rangle \, .
$$
The formula for the restriction of $A$ to $\cF_{n-1}$ now follows 
by taking the sum of these terms for $m = 0, \dots ,n$ and 
comparing the resulting equality with the one obtained 
for $A$ in the first step.
\qed \end{proof}

The preceding lemma establishes a relation between the representations
$\rho_n$ and $\rho_{n-1}$ of $\bfA$ and hence for all representations with 
smaller particle number. Based on the information obtained so far, we 
define now inverse maps $\kappa_n$ between the corresponding 
algebras $\fK_n$ and $\fK_{n-1}$, $n \in \NN$.\footnote{The maps $\kappa_n$ 
go into the opposite direction of the maps 
$\epsilon_n$, defined above, but they are not 
their (left) inverses. This could be rectified by linear 
transformations on $\fK_n$.} 

\medskip 
\noindent \textbf{Definition:} Let $n \in \NN_0$. The (surjective) map 
$\kappa_n : \fK_n \rightarrow \fK_{n-1}$ is defined by 
$$ 
\kappa_n\big(\sum_{m = 0}^n C_{m,n} \big)
\doteq \sum_{m = 0}^{n-1} (n-m)/n \ C_{m,n-1} \, ,
$$
where $C_{m,n} \in \fC_{m, n}$,  and 
$C_{m,n-1} \in \fC_{m, n-1}$ is obtained from
$C_{m,n}$ by omitting a tensor factor $1$, $0 \leq m \leq n-1$. 
For $n=0$ we put $\kappa_0 = 0$. The Banach space of bounded sequences 
$\biK \doteq \{ K_n \in \fK_n \}_{n \in \NN_0}$
satisfying the coherence condition 
$\kappa_n(K_n) = K_{n-1}$, $n \in \NN_0$, 
is denoted by~$\bfK$. 
The corresponding norm is given by   
$\| \biK \|_\infty \doteq \sup_n \| K_n \|_n$, $ \biK \in \bfK$. 
By some abuse of the terminology used 
in \cite{Ph}, $\bfK$ is called the (bounded) inverse  
limit of the inverse system $( \fK_n, \kappa_n )_{n \in \NN_0}$.

\medskip 
The following picture emerges from these results.  
According to Lemma \ref{l3.3} one has $\rho_n(\bfA) = \fK_n$. 
Hence, by the above definition, $\kappa_n$ maps 
$\rho_n(\bfA)$ onto $\rho_{n-1}(\bfA)$. As a matter of fact, 
it follows from Lemma \ref{l3.4} that this map is a homomorphism
since $\kappa_n(\rho_n(A)) = \rho_{n-1}(A)$, $A \in \bfA$. 
Thus for $\rho_n(A_1), \rho_n(A_2) \in \rho_n(\bfA)$ one obtains  
\begin{align*}
& \kappa_n(\rho_n(A_1)) \,  \kappa_n(\rho_n(A_2))
= \rho_{n-1}(A_1) \, \rho_{n-1}(A_2) \\
& \hspace*{21.5mm} = \rho_{n-1}(A_1 A_2) 
= \kappa_n(\rho_n(A_1 A_2)) =   \kappa_n(\rho_n(A_1) \rho_n(A_2)) \, , \\
& \kappa_n(\rho_n(A_1))^* = \rho_{n-1}(A_1)^* = \rho_{n-1}(A_1^*) =
\kappa_n(\rho_n(A_1^*)) = \kappa_n(\rho_n(A_1)^*) \, , 
\end{align*}
proving the statement. Lemma \ref{l3.4} also implies that 
$\| \kappa_n(\rho_n(A)) \|_{n-1} 
\leq  \| \rho_n(A) \|_n$, $A \in \bfA$. Thus 
$\kappa_n$ acts continuously on $\fK_n$, 
$$
\| \kappa_n(K_n) \|_{n-1} \leq \| K_n \|_n \, , \quad K_n \in \fK_n \, .
$$  

It follows from these observations that the inverse limit 
$\bfK$ is a C*-algebra, where the  
algebraic operations are pointwise defined, \eg 
$$
\biK_1 \biK_2 \doteq \{ K_{1 n} K_{2 n} \}_{n \in \NN_0} \, , \quad
\biK_1^* \doteq \{ K_{1 n}^* \}_{n \in \NN_0} \, , \quad \biK_1, \biK_2 \in \bfK \, .
$$
Its C*-norm is given by
$\| \biK \|_\infty = \sup_n \| K_n \|_n$, $\biK \in \bfK$. 
Moreover, $\bfK$ is faithfully represented on Fock space: let  
$\Phi = \sum_{n=0}^\infty \Phi_n \in \cF$, where 
$\Phi_n \in \cF_n$ and $\sum_{n=0}^\infty \| \Phi_n \|^2 < \infty$. 
The representation of~$\bfK$, denoted by $\rho$, is obtained by putting  
$$
\rho(\biK) \, \Phi \doteq \sum_{n=0}^\infty 
K_n \Phi_n \, , \quad \biK \in \bfK \, .
$$
Note that 
$\| \rho(\biK) \| = \sup_n \| K_n \|_n = \| \biK \|_\infty$, so the
representation is faithful. 

\medskip 
The inverse limit $\bfK$ will prove to be a convenient tool in the analysis 
of the action of dynamics. It is therefore gratifying that the 
algebra~$\bfA$ can be extended in a straightforward manner to an algebra which 
matches with the inverse limit: according to Lemmas \ref{l3.2} 
and \ref{l3.4},  every element $A \in \bfA$ defines a sequence  
$\biK(A) = \{ \rho_n(A) \in \fK_n \}_{n \in \NN_0}$ 
which is coherent, $\kappa_n(\rho_n(A)) = \rho_{n-1}(A)$, $n \in \NN_0$. 
Hence, $\biK(A) \in \bfK$.
There is also a certain converse: given a coherent sequence 
$\biK = \{ K_n \in \fK_n \}_{n \in \NN_0} \in \bfK$,  there 
exists  according to Lemma~\ref{l3.3} for each $n \in \NN_0$
some operator $A_n \in \bfA$ such that 
$\rho_n(A_n) = K_n$. Lemma~\ref{l3.4} then implies that 
$K_{n-1} = \kappa_n(K_n) = \kappa_n(\rho_n(A_n)) = \rho_{n - 1}(A_n)$.
Iterating the inverse maps, one arrives at the relations 
$\rho_m(A_n) = K_m$, $m = 0, \dots , n$. 
So the operator $A_n$ reproduces the first \ $n+1$ terms of
the given coherent sequence. The coherent sequence 
$\{ \rho_n(A_n) = K_n\}_{n \in \NN_0}$, resulting from
this construction, therefore reproduces the given $\biK$.
Since each $\biK \in \bfK$ is represented by a bounded 
operator on $\cF$, this leads us to the following simple 
characterization  of the mentioned extension of $\bfA$. 

\medskip \noindent
\textbf{Definition:} The algebra $\obfA$, extending $\bfA$, consists of 
the family of bounded operators on $\cF$ whose members $A$ satisfy
$$
A \upharpoonright \oplus_{m = 0}^n \, \cF_m \in 
\bfA \upharpoonright \oplus_{m = 0}^n \, \cF_m \, , \quad 
n \in \NN_0 \, .
$$
Thus $\obfA$ and $\bfA$ differ as sets only on states with an infinite 
particle number.

\medskip \noindent
\textbf{Remark:} 
Simple examples of operators in $\obfA$ which are not 
contained in $\bfA$ are bounded functions of the 
particle number operators $N(L)$ for 
finite dimensional symplectic subspaces $L \subset \cD(\RR^s)$.
Since the spectral projections of $N(L)$ are contained
in~$\bfA$, the restriction of any such function of $N(L)$ to 
$\oplus_{m=0}^n \cF_m$ is represented by some
element of $\bfA$, $n \in \NN_0$.
But the operator itself is not contained 
in $\bfA$ if the underlying function does not tend to 
a constant asymptotically.

\bigskip
It is apparent from the preceding discussion that the
algebra $\obfA$ is isomorphic to the inverse limit $\bfK$. 
In particular, it is a C*-algebra with norm given by
$\| A \|_\infty = \sup_n \| \rho_n(A) \|_n$, $A \in \obfA$.
Even though the above alternative definition, 
relating~$\obfA$ directly to $\bfA$, is intuitively attractive, 
we will rely in the subsequent analysis on the more 
detailed information contained in  $\bfK$. 
We summarize the results obtained in this section in the following
theorem.

\begin{theorem} \label{t3.5} 
The map
$$
A \mapsto 
\{ \rho_n(A) = A \upharpoonright \cF_n \in \fK_n \}_{n \in \NN_0} \, ,
\quad A \in \overline{\bfA} \, ,
$$
establishes an isomorphism between the algebra $\overline{\bfA}$ 
and the inverse limit $\bfK$ of the inverse system of 
approximately finite dimensional 
C*-algebras $( \fK_n, \kappa_n )_{n \in \NN_0}$.  
\end{theorem}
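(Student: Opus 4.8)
The plan is to verify that the stated map $\Theta : A \mapsto \{ \rho_n(A) \}_{n \in \NN_0}$ is a well-defined, bijective $*$-homomorphism from $\obfA$ onto $\bfK$; since any $*$-isomorphism of C*-algebras is automatically isometric, this yields the asserted isomorphism. Most of the required ingredients are already in hand through Lemmas \ref{l3.3} and \ref{l3.4} and the discussion preceding the theorem, so the work consists mainly in assembling them in the right order and checking the formal algebraic properties.

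First I would check that $\Theta$ lands in $\bfK$. For $A \in \obfA$ the restriction $A \upharpoonright \oplus_{m=0}^n \cF_m$ agrees by definition with some element of $\bfA$, so each $\rho_n(A)$ lies in $\fK_n$ by Lemma \ref{l3.3}, and boundedness of $A$ as an operator on $\cF$ gives $\sup_n \| \rho_n(A) \|_n \le \| A \| < \infty$. The coherence relation $\kappa_n(\rho_n(A)) = \rho_{n-1}(A)$ then follows from Lemma \ref{l3.4} applied to that element of $\bfA$ which reproduces $A$ on $\oplus_{m=0}^n \cF_m$. Hence $\Theta(A) = \{ \rho_n(A) \}_{n \in \NN_0} \in \bfK$.

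Next I would treat the homomorphism and injectivity properties together. Because the generators of $\bfA$ commute with the gauge group and hence preserve the eigenspaces $\cF_n$ of $N$, every $A \in \obfA$ leaves each $\cF_n$ invariant, so each $\rho_n$ is a genuine $*$-representation of $\obfA$; as the algebraic operations in $\bfK$ are defined pointwise, $\Theta$ preserves products and adjoints. For injectivity I would use that $\cF = \oplus_n \cF_n$ and that $A$ is block diagonal with respect to this decomposition: if all blocks $\rho_n(A)$ vanish then $A$ vanishes on a dense subspace, whence $A = 0$.

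The substantive step is surjectivity, and this is where I expect the main work to lie. Given a coherent bounded sequence $\biK = \{ K_n \}_{n \in \NN_0} \in \bfK$, I would take the block-diagonal operator $\rho(\biK)$ already introduced before the theorem, which is bounded with $\| \rho(\biK) \| = \| \biK \|_\infty$; what remains is to show it belongs to $\obfA$. For this I would invoke the converse construction extracted from Lemmas \ref{l3.3} and \ref{l3.4}: for each fixed $n$ there is $A_n \in \bfA$ with $\rho_m(A_n) = K_m$ for all $m \le n$, so $\rho(\biK)$ agrees with $A_n$ on $\oplus_{m=0}^n \cF_m$, which by the definition of $\obfA$ places $\rho(\biK)$ in $\obfA$ with $\Theta(\rho(\biK)) = \biK$. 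Finally, since any $A \in \obfA$ is block diagonal its operator norm equals $\sup_n \| \rho_n(A) \|_n$, exactly the $\bfK$-norm of $\Theta(A)$, so $\Theta$ is isometric. The one point requiring care is that the coherence condition $\kappa_n(K_n) = K_{n-1}$ is precisely what lets the truncating operators $A_n$ be chosen consistently, so that the pointwise data $\{ K_n \}$ are reproduced on every finite-particle-number sector by a single element of $\bfA$; this consistency is the crux that turns a coherent sequence into a genuine element of the extended algebra.
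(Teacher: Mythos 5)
Your proposal is correct and follows essentially the same route as the paper, whose proof of Theorem \ref{t3.5} is contained in the discussion preceding it: Lemmas \ref{l3.2}--\ref{l3.4} give well-definedness and coherence of the sequence $\{\rho_n(A)\}_{n \in \NN_0}$, the converse construction of the truncating operators $A_n \in \bfA$ gives surjectivity, and the definition of $\obfA$ together with block-diagonality yields the $*$-homomorphism, injectivity and isometry properties exactly as you assemble them.
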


\section{Dynamics of observables} \label{s4}
\setcounter{equation}{0}

With this information about the structure of the
algebra $\overline{\bfA}$, we can turn
now to the discussion of dynamics. We will take advantage of the 
fact that the elements of the
algebra $\overline{\bfA}$ and the Hamiltonians under 
consideration commute with the particle number operator. So 
for any $n \in \NN$ we can 
restrict these operators to the subspaces $\cF_n \subset \cF$. We recall that 
$\overline{\bfA} \upharpoonright \cF_n = \fK_n$, the algebra
generated by symmetrized tensor products of compact operators and the unit
operator. The restrictions of the 
Hamiltonian $H  \upharpoonright \cF_n = H_n$ can be expressed
in terms of position and momentum operators, cf.\ 
equation~\eqref{e2.1}. 

It will be convenient in our analysis to extend 
the restricted Hamiltonian $H_n$ to 
the unsymmetric space $\cH_n$ by assigning the
underlying position and momentum operators to the
respective single-particle tensor factor with the same number.
We also extend 
the symmetric algebra $\fK_n$ on $\cF_n$ and embed it into 
the algebra $\fK(\II_n)$, $\II_n = \{1, \dots , n\}$,    
generated by all unsymmetrized compact and unit operators acting on 
the tensor factors of $\cH_n$. 
In more detail, given any ordered subset 
$\II_m \doteq \{ i_1, \dots , i_m \} \subset \II_n $
of $m \leq n$ elements,
we consider on $\cH_n$ the unique C*-tensor product 
$$
\fC_n(\II_m) \doteq \fC(i_1) \otimes \cdots \otimes \fC(i_m) 
$$
generated by the commuting algebras of compact operators
acting on the corresponding tensor factors of $\cH_n$.
Here we have omitted tensor factors of 
$1$ acting on the remaining components. The unital C*-algebra 
$\fK(\II_n)$ is then defined as the linear span of the  
algebras $\fC_n(\II_m)$ for all $\II_m \subset \II_n$
and $0 \leq m \leq n$, where we put $\II_0 = \emptyset$ and 
$\fC_n(\II_0) \doteq \CC \, \one_n$. The symmetric algebra $\fK_n$ 
is identified with the subalgebra of $\fK(\II_n)$
consisting of all operators which commute with the unitaries  
$U_n(\pi), \pi \in \Sigma_n$. 

Making use of arguments established in \cite{Bu3} for distinguishable 
particles oscillating about lattice points, we will show that the 
algebra $\fK(\II_n) \subset \cB(\cH_n)$ is stable
under the adjoint action of the unitaries 
$e^{it H_n}$ on $\cB(\cH_n)$, determined by the 
Hamiltonian $H_n$. This action is denoted by 
$\alpha_n(t) \doteq \mbox{Ad} \, e^{it H_n}$, $t \in \RR$.
We recall here these arguments for the sake of completeness. 

For potential $V=0$ one obtains the  
non-interacting Hamiltonian $H_{0 n}$.
The resulting adjoint action $\alpha_n^{(0)}(t) \doteq \mbox{Ad} 
\, e^{itH_{0 n}}$ leaves the subalgebra 
$\fK(\II_n) \subset \cB(\cH_n)$ invariant, 
$t \in \RR$. This is apparent since it does not mix 
tensor factors and the adjoint action of a unitary operator maps     
compact operators onto compact operators.
Morover, since the function $t \mapsto e^{itH_{0 n}}$
is continuous in the strong operator topology, it is also
clear that $t \mapsto \alpha_n^{(0)}(t)$ acts 
pointwise norm-continuously on~$\fK(\II_n)$.

Next, we consider the Dyson operators  
$\Gamma_n(t) \doteq e^{it H_{0 n}} e^{-it H_n}$, $t \in \RR$. 
They define the Dyson maps 
$\gamma_n(t): \cB(\cH_n) \rightarrow \cB(\cH_n)$ given by 
\begin{equation} \label{e4.1}
\gamma_n(t) \doteq \mbox{Ad} \, \Gamma_n(t) \, , \quad 
t \in \RR \, .
\end{equation}
We must show that their restrictions 
to $\fK(\II_n) \subset \cB(\cH_n)$  
map this subalgebra onto itself. For this implies 
by the preceding remarks 
that the algebra is stable under the automorphic action 
$\alpha_n(t) = \alpha_n^{(0)}(t) \, \scirc \, \gamma_n(-t)$, $t \in \RR$, 
of the given dynamics. As a matter of fact, 
it suffices to establish the inclusion 
$\gamma_n(t)(\fK(\II_n)) \subset \fK(\II_n)$
since this implies 
$\alpha_n(t)(\fK(\II_n)) \subset \fK(\II_n)$
and one has $\alpha_n(t)^{-1} = \alpha_n(-t)$,
$t \in \RR$. 

We pick now any $C \in \fK(\II_n)$ and consider 
the familiar Dyson series  
\begin{align} \label{e4.2}
& \gamma_n(t)(C) \notag \\ 
& \ = C + 
\sum_{l=1}^\infty i^l \! \int_0^{t} \! ds_l \int_0^{s_l} \! ds_{l-1} \cdots
\int_0^{s_2} \! ds_1 [ \dots [C, \biV_n(s_1)] \dots , \biV_n(s_l) ]  \, .
\end{align}
Here we have introduced the short hand notation 
$\biV_n \doteq \sum_{j \neq k} V(\biQ_j - \biQ_k)$ and put \   
\mbox{$\biV_n(s) \doteq \alpha_n^{(0)}(s)(\biV_n)$}, $s \in \RR$.  
The integrals are defined in the strong operator topology on 
$\cH_n$ and the sum \eqref{e4.2} converges absolutely in norm, uniformly
on compact subsets of $t \in \RR$, because $\biV_n$ is a bounded operator. 
Adopting arguments from \cite{Bu3}, we obtain the following
result for the derivations appearing in this expansion.

\begin{lemma} \label{l4.1}
Let $s \mapsto \bdelta_n(s)$, $s \in \RR$, be the function 
with values in the derivations on $\cB(\cH_n)$, given by 
$$
\bdelta_n(s)(B) \doteq  i \, [B, \biV_n(s)] \, ,  \quad B \in \cB(\cH_n) \, .
$$ 
Its primitive $t \mapsto \bDelta_n(t) \doteq \int_0^t \! ds \, \bdelta_n(s)$ 
is pointwise defined in the strong operator topology.
It maps $\fK(\II_n) \subset \cB(\cH_n)$ into itself, 
$\bDelta_n(t)(\fK(\II_n)) \subset \fK(\II_n)$, and  
is point\-wise norm continuous and bounded by \
$\| \bDelta_n(t)(B) \|_n \leq 2 \, |t| \, \| \biV_n \|_n \, \| B \|_n$, 
$t \in \RR$.
\end{lemma}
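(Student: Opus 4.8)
The plan is to separate the three assertions, treating the two analytic ones (well-definedness together with the norm bound, and pointwise norm continuity) as routine and concentrating on the stability of $\fK(\II_n)$. First I would record that, since $e^{isH_{0 n}}$ is a strongly continuous one-parameter unitary group and $\biV_n$ is bounded, the map $s \mapsto \biV_n(s) = \alpha_n^{(0)}(s)(\biV_n)$ is strongly continuous with $\|\biV_n(s)\|_n = \|\biV_n\|_n$ for all $s$. Hence $s \mapsto \bdelta_n(s)(B) = i\,[B,\biV_n(s)]$ is strongly continuous and uniformly bounded by $2\,\|\biV_n\|_n\,\|B\|_n$ on the relevant interval, so the primitive $\bDelta_n(t)(B) = \int_0^t \! ds\, \bdelta_n(s)(B)$ exists as a strong integral. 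Testing against vectors gives $\|\bDelta_n(t)(B)\|_n \le \int_0^{|t|}\! ds\,\|\bdelta_n(s)(B)\|_n \le 2\,|t|\,\|\biV_n\|_n\,\|B\|_n$, which is the stated bound; applying the same estimate to $\bDelta_n(t)(B) - \bDelta_n(t')(B) = \int_{t'}^{t}\! ds\,\bdelta_n(s)(B)$ shows that $t \mapsto \bDelta_n(t)(B)$ is Lipschitz, hence pointwise norm continuous.

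The substantive claim is $\bDelta_n(t)(\fK(\II_n)) \subset \fK(\II_n)$. Since $\bDelta_n(t)$ is bounded and $\fK(\II_n)$ is norm-closed and equals the closed linear span of the $\fC_n(\II_m)$, and since $\biV_n = \sum_{j \neq k} V(\biQ_j - \biQ_k)$ is a finite sum, it suffices by linearity and norm approximation to treat a single generator $C \in \fC_n(\II_m)$ of finite rank, which we may take to be an (unsymmetrized) tensor product of compact operators, one factor $\fC(i)$ for each $i \in \II_m$, and a single pair term $V_{jk}(s) \doteq V(\biQ_j(s) - \biQ_k(s))$, and to show that $\int_0^t\! ds\,[C,V_{jk}(s)]$ lies in $\fK(\II_n)$. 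I would classify by the overlap of $\{j,k\}$ with $\II_m$. If $\{j,k\}\cap\II_m = \emptyset$ the operators act on disjoint tensor factors and the commutator vanishes. If $\{j,k\}\subseteq\II_m$, then for every fixed $s$ the commutator $[C,V_{jk}(s)]$ is a sum of products of the compact $C$ on the factors $\II_m$ with the bounded operator $V_{jk}(s)$ supported within those factors, hence is again compact on the factors $\II_m$; it therefore lies in $\fC_n(\II_m)$ pointwise in $s$, and so does its integral.

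The remaining case, in which exactly one index, say $j$, lies in $\II_m$ while $k \notin \II_m$, is the heart of the matter and the main obstacle. Here the commutator acts on the factors $\II_m\cup\{k\}$, but at fixed $s$ it is only compact on the $\II_m$-factors while remaining unitary on the extra factor $k$: the factor $V_{jk}(s)$ constrains the relative position $\biQ_j-\biQ_k$ to the effective support of $V$ and the compact factor $\fC(j)$ localizes particle $j$, so that particle $k$ is localized in position but entirely unconstrained in momentum. Thus, pointwise in $s$, this term does \emph{not} lie in $\fK(\II_n)$, and the time integration is essential. The mechanism, imported from \cite{Bu3}, is that the free evolution propagates the Fourier modes $e^{i\bip(\biQ_j-\biQ_k)}$ of $V$ into Weyl operators whose phase-space translation grows linearly in $s$ with velocity proportional to the relative momentum $\biP_j-\biP_k$; commuting with $C$ retains a compact factor on the $j$-side, while integrating over the interval $[0,t]$ produces, by an oscillatory (Riemann--Lebesgue type) estimate, decay in the relative momentum. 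Combined with the short range of $V$ (this is where the hypothesis that $V$ vanishes at infinity enters, via norm approximation by compactly supported potentials), this supplies the missing momentum localization of particle $k$, so that $\int_0^t\! ds\,[C,V_{jk}(s)]$ is jointly compact on the factors $\II_m\cup\{k\}$ and hence belongs to $\fC_n(\II_m\cup\{k\})\subset\fK(\II_n)$. I expect the delicate point to be making this oscillatory gain precise and uniform enough to survive the integration against $\hat V$ and the passage from compactly supported approximants back to the given $V$; once that estimate from \cite{Bu3} is in hand, summing over the pair terms and over the tensor components of $C$ assembles the full inclusion $\bDelta_n(t)(\fK(\II_n))\subset\fK(\II_n)$.
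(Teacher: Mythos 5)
Your proposal is correct and follows essentially the same route as the paper: the same reduction to the individual pair terms $V_{jk}$, the same case analysis on the overlap of $\{j,k\}$ with $\II_m$, and the same key input from \cite{Bu3} that the time-integrated pair potential $\int_0^t \! ds\, V_{jk}(s)$ becomes compact in the relative variable, so that its commutator with a compact operator on factor $j$ lands in $\fC_n(\II_m \cup \{k\})$. The only presentational difference is that the paper first proves compactness of the integrated potential (via the square-and-polar-decomposition trick in the appendix) and only then commutes with $C$, which sidesteps the small but fixable question, implicit in your $\{j,k\}\subseteq\II_m$ case, of why a strong-operator integral of pointwise compact operators is again compact.
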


\noindent \textbf{Remark:} The bounded operators 
$\int_0^t \! ds \biV_n(s)$ are not contained in 
$\fK(\II_n)$. It is essential for this   
result that the C*-algebra $\fK(\II_n)$ 
is not simple (\ie has ideals). Outer bounded derivations can 
and do exist in such cases. 

\begin{proof}
Since $\biV_n(s)$, $s \in \RR$, is a finite sum of
operators, it suffices to establish the statement for its summands
which are given by, $j, k \in \II_n$ and $j \neq k$, 
$$
V_{j k}(s) \doteq 
\alpha_n^{(0)}(s)(V(\biQ_j - \biQ_k))
= V(\biQ_j - \biQ_k + 2s(\biP_j - \biP_k)) \, , \quad 
s \in \RR \, .
$$
These operators are elements of the C*-algebra
$\fR_n(j \frown k)$, which is generated by all   
continuous functions, vanishing at infinity, of 
$\big( \bia \, (\biQ_j - \biQ_k) + \bib \, (\biP_j - \biP_k) \big)$ 
for $\bia, \bib \in \RR^s$. As a matter of fact,
this algebra coincides with 
the resolvent algebra generated by the resolvents of    
these linear combinations of canonically conjugate  
operators and it is faithfully
represented on $\cH_n$, cf.~\cite[Thm.~4.10]{BuGr2}. 
The algebra is stable under the action of 
the automorphisms $\alpha_n^{(0)}(u)$, $u \in \RR$, and it 
contains a compact   
ideal $\fC_n(j \frown k) \subset \fR_n(j \frown k)$ 
which is isomorphic to the algebra of compact 
operators on $L^2(\RR^s)$, cf.~\cite[Thm.~5.4]{BuGr2}. 

The first and vital step consists of the proof that the 
integrals $\int_0^t \! ds \, V_{j k}(s)$, defined in the strong operator 
topology, are elements of the compact ideal 
$\fC_n(j \frown k)$. A similar result was established in 
\cite[Lem.~2.1]{Bu3} for the case, where the present non-interacting 
Hamiltonian is replaced by the Hamiltonian of an isotropic 
harmonic oscillator. That  Hamiltonian also induces an 
automorphic action of time translations 
on the algebra $\fR_n(j \frown k)$. 
The argument given in \cite{Bu3} can be applied in the present case 
without major modifications and is put into  the appendix. 

Knowing that the integrals $\int_0^t \! ds \, V_{j k}(s)$, $t \in \RR$,
belong to the compact ideal \mbox{$\fC_n(j \frown k)$}, we 
can proceed and apply the same arguments as in \cite[Lem.~2.2]{Bu3}:  
we pick indices \mbox{$\II_m \subset \II_n$} 
and consider the corresponding algebra 
$\fC_n(\II_m) \subset \fK(\II_n)$. 
Given \mbox{$C \in \fC_n(\II_m)$}, there then appear four different 
possibilities for the action of the derivation $\Delta_{j k}(t)$ on $C$,
defined by the commutator with $\int_0^t \! ds \, V_{j k}(s)$. 

\noindent  \hspace*{3pt} (i) \ If both indices  $j,k \not\in  \II_m$, then 
 $V_{j k}(s)$ commutes with $C$ for any $s \in \RR$, and consequently 
$\Delta_{j k}(t)(C) = 0$ for $t \in \RR$. 

\noindent \hspace*{3pt} (ii) \  If both indices $j,k \in  \II_m$, then 
$\fR_n(j \frown k)$, hence $\fC_n(j \frown k)$, lies in the multiplier algebra 
of the algebra $\fC_n(\II_m)$.
Thus $\Delta_{j k}(t)(C) \in \fC_n(\II_m)$, $t \in \RR$. 

\noindent \hspace*{3pt} (iii) \ If 
$k \in  \II_m$, but  $j \not\in  \II_m$,
then 
$$
\Delta_{j k}(t)(C) \in [\fC_n(j \frown k), \, \fC_n(\II_m)] \, .
$$
As has been shown in \cite[Lem.~2.2]{Bu3}, these   
commutators are elements of the algebra $\fC_n(\II_{m+1})$, 
where $\II_{m+1} \doteq \II_m \overset{o}{\cup} j$ and 
the union symbol $ \overset{o}{\cup}$ indicates 
that the index $j$ is to be inserted at its proper place
within the ordered set $\II_{m+1}$. Thus
$\Delta_{j k}(t)(C) \in \fC_n(\II_{m+1})$, $t \in\RR$. 

\noindent \hspace*{3pt} (iv) \ Finally, if 
$j \in  \II_m$, but  $k \not\in  \II_m$,
then the same argument as in the preceding step shows that 
$\Delta_{j k}(t)(C) \in \fC_n(\II_{m+1})$, $t \in \RR$, where 
now $\II_{m+1} \doteq \II_m \overset{o}{\cup} k$
and $k$ has to be inserted at its proper place. 

Since the algebra $\fK(\II_n)$ is equal to the linear span
of the algebras $\fC_n(\II_m)$ for arbitrary ordered index 
sets  $\II_m \subset \II_n$ and $0 \leq m \leq n$, we
conclude that 
$\bDelta_n(t)(\fK(\II_n)) \subset \fK(\II_n)$,
$t \in \RR$. The remaining statements follow from the simple estimate
$\| (\bDelta_n(t_2) - \bDelta_n(t_1))(B) \|_n
\leq 2 \, \| \biV_n \|_n \, \| B \|_n \, | t_2 - t_1|$, $B \in \cB(\cH_n)$. 
This completes the proof of the lemma.
\qed \end{proof}

This lemma shows that the first nontrivial term in the 
expansion \eqref{e4.2} is contained in 
$\fK(\II_n)$. We will show by induction, similarly to the 
argument in \ \cite[Lem.~3.1]{Bu3}, 
that all other summands in this expansion are also contained in~$\fK(\II_n)$. 
There we rely on the following technical 
result.

\begin{lemma} \label{l4.2}
Let $s \mapsto D(s) \in \fK(\II_n)$ be norm 
continuous and let $s \mapsto \bdelta_n(s)$ be the derivations, 
defined in Lemma \ref{l4.1}.  Then 
$s \mapsto \bdelta_n(s)(D(s))$ is continuous in the strong
operator (s.o.) topology, $s \in \RR$. Its primitive, 
defined in the s.o.\ topology, has values in  $\fK(\II_n)$, 
$\int_0^t \! ds \,  \bdelta_n(s)(D(s)) \in \fK(\II_n)$,
$t \in \RR$. In fact, one has in the sense of norm convergence  
$$
\lim_{k \rightarrow \infty} \sum_{j = 1}^k
\big(\bDelta_n(jt/k) - \bDelta_n((j-1)t/k)\big) \, (D(jt/k))
= \int_0^t \! ds \,  \bdelta_n(s)(D(s))  \, .
$$ 
Here $t \mapsto \bDelta_n(t)$ is the primitive of 
$s \mapsto \bdelta_n(s)$, defined in
Lemma \ref{l4.1}, which maps $\fK(\II_n)$ 
into itself, $s \in \RR$. The 
function \ $t \mapsto \int_0^t \! ds \,  \bdelta_n(s)(D(s))$
is norm continuous and bounded with bound given by 
$\, 2 \, \| \biV_n \|_n \,  
\int_0^{|t|} \! ds \, \| D(s) \|_n$, \ $t \in \RR$.
\end{lemma}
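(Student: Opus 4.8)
The plan is to keep the two topologies in play strictly separate: the primitive $\int_0^t \! ds \, \bdelta_n(s)(D(s))$ is defined only in the strong operator topology, whereas membership in $\fK(\II_n)$ and the asserted convergence are norm statements. The bridge between them is the elementary inequality $\| \int_0^t \! ds \, T(s) \|_n \leq \int_0^t \! ds \, \| T(s) \|_n$, valid for any strongly continuous, uniformly bounded family $s \mapsto T(s)$ whose integral is defined strongly. It follows by pairing the strong integral with unit vectors, passing the modulus inside, and taking the supremum; the function $s \mapsto \| T(s) \|_n$ is a supremum of the continuous maps $s \mapsto |\langle \phi, T(s) \psi \rangle|$, hence lower semicontinuous and measurable. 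Everything else reduces to combining this estimate with Lemma \ref{l4.1}.

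First I would establish the strong continuity of $s \mapsto \bdelta_n(s)(D(s)) = i [D(s), \biV_n(s)]$. Since $\alpha_n^{(0)}$ is implemented by the strongly continuous unitary group $e^{is H_{0n}}$ and $\biV_n$ is bounded, the family $\biV_n(s) = \alpha_n^{(0)}(s)(\biV_n)$ is strongly continuous and norm-bounded, uniformly in $s$, by $\| \biV_n \|_n$. Together with the norm continuity of $D$, a two-term splitting of each of the products $D(s) \biV_n(s)$ and $\biV_n(s) D(s)$ shows that $s \mapsto \bdelta_n(s)(D(s))$ is strongly continuous; being uniformly bounded on compact intervals, its strong operator primitive exists.

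The decisive step is the norm-convergent Riemann approximation. For each $k$ the $j$-th summand equals $\int_{(j-1)t/k}^{jt/k} \! ds \, \bdelta_n(s)(D(jt/k)) = \bigl( \bDelta_n(jt/k) - \bDelta_n((j-1)t/k) \bigr)(D(jt/k))$, which lies in $\fK(\II_n)$ because $\bDelta_n$ maps $\fK(\II_n)$ into itself by Lemma \ref{l4.1} and $D(jt/k) \in \fK(\II_n)$. Subtracting from the true integral term by term, the difference is $\sum_j \int_{(j-1)t/k}^{jt/k} \! ds \, \bdelta_n(s)(D(s) - D(jt/k))$; applying the norm inequality above together with $\| \bdelta_n(s)(B) \|_n \leq 2 \, \| \biV_n \|_n \, \| B \|_n$ bounds its norm by $2 \, \| \biV_n \|_n \sum_j \int_{(j-1)t/k}^{jt/k} \! ds \, \| D(s) - D(jt/k) \|_n$. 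As $D$ is norm continuous on the compact interval, hence uniformly continuous, this is at most $2 \, |t| \, \| \biV_n \|_n \, \sup_{|s - s'| \leq |t|/k} \| D(s) - D(s') \|_n \to 0$. Thus the Riemann sums converge in norm to the strong primitive, and since $\fK(\II_n)$ is norm closed the limit lies in it.

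Finally, the norm continuity and the stated bound of $t \mapsto \int_0^t \! ds \, \bdelta_n(s)(D(s))$ follow from the same norm inequality applied to $\int_{t_1}^{t_2} \! ds \, \bdelta_n(s)(D(s))$ and over $[0,|t|]$ respectively, using $\| \bdelta_n(s)(D(s)) \|_n \leq 2 \, \| \biV_n \|_n \, \| D(s) \|_n$. The main obstacle I anticipate is exactly the topology mismatch flagged in the first paragraph: the object of interest exists only strongly, yet both its membership in the C*-algebra $\fK(\II_n)$ and the claimed convergence are norm assertions. The resolution—controlling the norm of a strong integral by the integral of the norm of its integrand, and then invoking the norm-closedness of $\fK(\II_n)$—is what upgrades the Riemann-sum argument from a merely strong limit to a norm limit lying inside the algebra.
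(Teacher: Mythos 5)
Your proposal is correct and follows essentially the same route as the paper: the same Riemann-sum partition with the difference term $\sum_j \int_{(j-1)t/k}^{jt/k} ds\, \bdelta_n(s)(D(s)-D(jt/k))$ estimated via $2\,\|\biV_n\|_n$ and the uniform continuity of $D$, the identification of each summand as $(\bDelta_n(jt/k)-\bDelta_n((j-1)t/k))(D(jt/k)) \in \fK(\II_n)$ by Lemma \ref{l4.1}, and the norm-closedness of $\fK(\II_n)$ to place the limit in the algebra. Your explicit justification of $\|\int_0^t ds\, T(s)\|_n \leq \int_0^t ds\, \|T(s)\|_n$ and of the measurability of $s \mapsto \|T(s)\|_n$ merely makes precise a step the paper takes for granted.
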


\noindent \textbf{Remark:} The result holds also for maps  
$s \mapsto \bdelta_n(s)$ which are linear,
bounded, pointwise s.o. continuous, and have a
primitive $t \mapsto \bDelta_n(t)$ mapping $\fK_n$ into~itself.

\begin{proof}
Fixing $s_0 \in \RR$, we proceed to the splitting
$$
\bdelta_n(s)(D(s)) - \bdelta_n(s_0)(D(s_0)) =
(\bdelta_n(s) - \bdelta_n(s_0))(D(s_0)) +
\bdelta_n(s)(D(s) - D(s_0)) \, .
$$  
The first term on the right hand side vanishes in the limit
$s \rightarrow s_0$ in the s.o.\ topology since
$s \rightarrow \delta_n(s)$ is continuous in this 
topology, pointwise on~$\cB(\cH_n)$. The second term vanishes since 
$\bdelta_n(s)$ is uniformly bounded on bounded subsets of $\cB(\cH_n)$
and $D(s) \rightarrow D(s_0)$ in the norm topology.
Hence $\bdelta_n(s)(D(s)) \rightarrow \bdelta_n(s_0)(D(s_0))$
in the s.o.\ topology and the corresponding integrals 
$\int_0^t \! ds \, \bdelta_n(s)(D(s))$, $t \in \RR$, are well defined.
Let, without loss of generality, $t > 0$. By partitioning the
integration interval $0 \leq s \leq t$, we obtain the estimate
\begin{align} \label{e4.3} 
& \| \int_0^t \! ds \, \bdelta_n(s)(D(s)) - \sum_{j=1}^k
\int_{(j-1)t/k}^{jt/k} \! ds \, \bdelta_n(s)(D(jt/k)) \|_n  \nonumber \\
& = \| \sum_{j=1}^k 
\int_{(j-1)t/k}^{jt/k} \! ds \, \bdelta_n(s)(D(s) - D(jt/k)) \|_n \\
& \leq 2 \| \biV_n \| \, \sum_{j=1}^k  \int_{(j-1)t/k}^{jt/k} \! ds \,
\| D(s) - D(jt/k) \|_n \, . \nonumber
\end{align}
Because of the continuity properties of the function $s \mapsto D(s)$,
the upper bound tends to $0$ in the limit $k \rightarrow \infty$.
By assumption, $D(jk/k) \in \fK(\II_n)$ and, according to Lemma \ref{l4.1}, 
$\int_{(j-1)t/k}^{jt/k} \! ds \, \bdelta_n(s) = 
(\bDelta_n(jt/k) - \bDelta_n((j-1)t/k)$ maps  
$\fK(\II_n)$ into itself, $k  = 1, \dots , l$. Since 
$\fK(\II_n)$ is closed in the norm topology, this implies
$\int_0^t \! ds \, \bdelta_n(s)(D(s)) \in \fK(\II_n)$, $t \in \RR$. The
statement about the continuity 
and boundedness properties of the latter function follows  
easily from the fact that 
$\| \bdelta_n(s)(D(s)) \|_n \leq 2 \, \| \biV_n \|_n \, \| D(s) \|_n$, 
$s \in \RR$, completing the proof. \qed 
\end{proof}

We turn now to the analysis of 
the expansion \eqref{e4.2}. Given $l \in \NN$, the 
corresponding summand has the form,  
$C \in \fK(\II_n)$,  
$$
D_l(t)(C) \doteq \int_0^{t} \! ds_l \int_0^{s_l} \! ds_{l-1} \cdots
\int_0^{s_2} \! ds_1 \, \bdelta_n(s_l) \, \scirc \, \bdelta_n(s_{l-1})
\, \scirc \cdots \scirc \, \bdelta_n(s_1) \, (C) \, , 
$$
where the integrals are defined in the s.o.\ topology. 
For $l = 1$ it follows from Lemma \ref{l4.1} that 
$D_1(t)(C) \in \fK(\II_n)$, $t \in \RR$. 
Moreover, the function $t \mapsto D_1(t)(C)$ is norm
continuous and bounded by 
$\| D_1(t)(C) \|_n \leq 2 |t| \, \| \biV_n \|_n \| C \|_n$, $t \in \RR$. 
Given $l \in \NN$, \ the induction hypothesis 
consists of the assertion that $D_l(t)(C) \in \fK(\II_n)$, 
$t \mapsto D_l(t)(C)$ is norm continuous,  and 
$\| D_l(t)(C) \|_n \leq 2^l |t|^l/{\, l!}  \ \| \biV_n \|_n^l \, \| C \|_n$,
$t \in\RR$. For the induction step from $l$ 
to $l + 1$ we rely on Lemma \ref{l4.2}, noting 
that $D_{l + 1}(t)(C) = \int_0^t \! ds \, \bdelta_n(s)(D_l(s)(C))$,
where, by the preceding induction hypothesis, 
$s \mapsto D_l(s)(C) \in \fK(\II_n)$ 
is norm continuous and bounded.
According to Lemma \ref{l4.2} this implies that
$D_{l + 1}(t)(C) \in \fK(\II_n)$, 
$t \mapsto D_{l + 1}(t)(C)$ is norm continuous and 
\begin{align*}
\| D_{l+1}(t)(C) \|_n 
& \leq 2 \, \| \biV_n \| 
\int_0^{|t|} \! ds \, 2^l|s|^l / l! \  \| \biV_n \|_n^l \| C \|_n \\
& =   2^{l + 1}|t|^{l + 1} / (l+1)! \  \| \biV_n \|_n^{l + 1} \| C \|_n \, ,
\quad t \in \RR \, .
\end{align*}
This completes the induction. In view of the absolute 
convergence of the Dyson series \eqref{e4.2} in the norm topology, 
we have thus established the following fact.

\begin{lemma} \label{l4.3} 
Let $n \in \NN$. The Dyson maps $\gamma_n(t)$, $t \in \RR$,
defined in relation \eqref{e4.1}, map the subalgebra  
$\fK(\II_n) \subset \cB(\cH_n)$ onto itself, \ie they are 
automorphisms of this algebra. Moreover, the function 
$t \mapsto \gamma_n(t) \upharpoonright \fK(\II_n)$ is pointwise 
norm-continuous. 
\end{lemma}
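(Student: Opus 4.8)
The plan is to assemble the conclusion from the Dyson expansion \eqref{e4.2}, every summand of which has already been brought under control. For $C \in \fK(\II_n)$ write $\gamma_n(t)(C) = C + \sum_{l \geq 1} i^l D_l(t)(C)$. The induction just carried out shows $D_l(t)(C) \in \fK(\II_n)$ together with $\| D_l(t)(C) \|_n \leq 2^l |t|^l (l!)^{-1} \| \biV_n \|_n^l \, \| C \|_n$, so the series is dominated by $\| C \|_n \sum_l (2 |t| \, \| \biV_n \|_n)^l / l!$ and hence converges absolutely in norm, uniformly for $t$ in compact intervals. Since $\fK(\II_n)$ is norm closed, this already yields the inclusion $\gamma_n(t)(\fK(\II_n)) \subseteq \fK(\II_n)$ for every $t \in \RR$; and since each $t \mapsto D_l(t)(C)$ is norm continuous while the convergence is uniform on compacta, the sum $t \mapsto \gamma_n(t)(C)$ is norm continuous, which is the claimed pointwise norm continuity.

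The step that needs a genuinely new idea is surjectivity. Here I would deliberately avoid trying to use $\gamma_n(-t)$ as an inverse: the family $\gamma_n$ is \emph{not} a one-parameter group, because $H_{0 n}$ and $H_n$ do not commute, so $\gamma_n(t)^{-1} \neq \gamma_n(-t)$ in general. Instead I would route ``onto'' through the two honest one-parameter groups in play. First I note that $\alpha_n^{(0)}(t) = \mathrm{Ad}\, e^{itH_{0 n}}$ is an automorphism of $\fK(\II_n)$: since $e^{itH_{0 n}} = \bigotimes_i e^{it \biP_i^2}$ does not mix the tensor factors, its adjoint action carries the compacts on each factor onto themselves and fixes the unit factors, hence maps each $\fC_n(\II_m)$, and therefore $\fK(\II_n)$, onto itself.

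Next I would transfer the inclusion from $\gamma_n$ to $\alpha_n$ by means of the identity $\alpha_n(t) = \alpha_n^{(0)}(t) \scirc \gamma_n(-t)$ recorded before the lemma: combining the inclusion already proved at time $-t$ with the invariance of $\fK(\II_n)$ under $\alpha_n^{(0)}(t)$ gives $\alpha_n(t)(\fK(\II_n)) \subseteq \fK(\II_n)$ for all $t \in \RR$. Now $\alpha_n$ is a one-parameter group with $\alpha_n(t)^{-1} = \alpha_n(-t)$, so inclusion for both signs of $t$ forces $\alpha_n(t)$ to restrict to an automorphism of $\fK(\II_n)$. Finally, from $\gamma_n(t) = \alpha_n^{(0)}(t) \scirc \alpha_n(-t)$ I conclude that $\gamma_n(t)$ is a composition of two automorphisms of $\fK(\II_n)$, hence an automorphism; in particular it is onto.

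I expect the surjectivity to be the only real obstacle. All the analytic content---absolute convergence, the invariance $\bDelta_n(t)(\fK(\II_n)) \subseteq \fK(\II_n)$, and the continuity estimates---is already furnished by Lemmas \ref{l4.1} and \ref{l4.2}, and the Dyson series by itself delivers only the forward inclusion. The one point requiring care is to recognize that $\gamma_n$ fails to be a group and to recover ``onto'' by factoring $\gamma_n(t)$ through the genuine groups $\alpha_n$ and $\alpha_n^{(0)}$, both of which do restrict to automorphisms of $\fK(\II_n)$.
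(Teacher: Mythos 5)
Your proposal is correct and follows essentially the same route as the paper: the inclusion $\gamma_n(t)(\fK(\II_n)) \subseteq \fK(\II_n)$ comes from the term-by-term control of the Dyson series together with norm-closedness of $\fK(\II_n)$, and surjectivity is recovered exactly as in the paragraph preceding equation \eqref{e4.2}, by passing to the genuine one-parameter group $\alpha_n(t) = \alpha_n^{(0)}(t) \scirc \gamma_n(-t)$ and using $\alpha_n(t)^{-1} = \alpha_n(-t)$. Your explicit warning that $\gamma_n$ is not a group is a correct reading of the same point the paper makes implicitly.
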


It is now easy to show that the symmetric subalgebra 
$\fK_n \subset \fK(\II_n)$ of interest here 
is stable under the interacting dynamics. At this
point we make use of the fact that the Hamiltonians $H_n$, 
which were extended from equation \eqref{e2.1} to $\cH_n$,
commute with the unitaries $U_n(\pi)$,
$\pi \in \Sigma_n$. Hence the Dyson maps $\gamma_n(t)$ 
of $\fK(\II_n)$ commute with the operation of symmetrization,
$t \in \RR$. It therefore follows from the preceding lemma that 
$\gamma_n(t)(\fK_n) = \fK_n$, $t \in \RR$. 
Bearing in mind the properties of the non-interacting 
time evolution,  this implies 
$$
\alpha_n(t)(\fK_n) =
\alpha_n^{(0)}(t) \, \scirc \, \gamma_n(-t)(\fK_n) = \fK_n \, ,
\quad t \in \RR \, .
$$
As has been explained, 
the action of $t \mapsto \alpha_n^{(0)}(t)$ on $\fK_n$
is pointwise norm continuous. So we arrive at the following proposition. 

\begin{proposition} \label{p4.4}
Let $n \in \NN$. Given any dynamics, defined by a Hamiltonian $H_n$ 
of the form~\eqref{e2.1},
one has $\alpha_n(t)(\fK_n) = \fK_n$. Moreover, the 
function $t \mapsto \alpha_n(t)$ is pointwise norm
continuous on $\fK_n$, $t \in \RR$.
\end{proposition}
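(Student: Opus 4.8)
The plan is to exploit the factorization $\alpha_n(t) = \alpha_n^{(0)}(t) \, \scirc \, \gamma_n(-t)$ recorded above, reducing the claim to showing that each of the two factors preserves the symmetric subalgebra $\fK_n \subset \fK(\II_n)$ and acts pointwise norm continuously on it. The essential point is that $\fK_n$ is precisely the subalgebra of $\fK(\II_n)$ consisting of the operators commuting with all permutation unitaries $U_n(\pi)$, $\pi \in \Sigma_n$, so the entire argument rests on the permutation symmetry of the Hamiltonians in \eqref{e2.1}.

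First I would verify that $\gamma_n(t)(\fK_n) = \fK_n$. Since the operators $\sum_i \biP_i^2$ and $\biV_n = \sum_{j \neq k} V(\biQ_j - \biQ_k)$ in \eqref{e2.1} are symmetric in the particle labels, both $H_{0n}$ and $H_n$ commute with every $U_n(\pi)$, $\pi \in \Sigma_n$; hence so do the Dyson operators $\Gamma_n(t) = e^{it H_{0n}} e^{-it H_n}$ and their inverses, and consequently $\gamma_n(t) = \mbox{Ad}\, \Gamma_n(t)$ commutes with the symmetrization over $\Sigma_n$. Thus $\gamma_n(t)$ maps the $U_n(\Sigma_n)$-invariant subalgebra $\fK_n$ into itself; and since by Lemma \ref{l4.3} the map $\gamma_n(t)$ is an automorphism of $\fK(\II_n)$ whose inverse $\gamma_n(t)^{-1} = \mbox{Ad}\, \Gamma_n(t)^{-1}$ enjoys the same symmetry and therefore also preserves $\fK_n$, the restriction of $\gamma_n(t)$ to $\fK_n$ is onto. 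This gives $\gamma_n(t)(\fK_n) = \fK_n$.

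Next I would assemble the pieces. The free evolution $\alpha_n^{(0)}(t)$ likewise preserves $\fK_n$: it does not mix tensor factors, sends compacts to compacts, and commutes with symmetrization for the same permutation-symmetry reason; hence $\alpha_n(t)(\fK_n) = \alpha_n^{(0)}(t)\big(\gamma_n(-t)(\fK_n)\big) = \fK_n$. For pointwise norm continuity I would note that $t \mapsto \alpha_n^{(0)}(t)$ is pointwise norm continuous on $\fK(\II_n)$ (established before Lemma \ref{l4.1}) and $t \mapsto \gamma_n(-t)$ is pointwise norm continuous by Lemma \ref{l4.3}. Since both families consist of isometric $*$-automorphisms, the triangle-inequality estimate $\| \alpha_n^{(0)}(t)\, \gamma_n(-t)(C) - \alpha_n^{(0)}(t_0)\, \gamma_n(-t_0)(C)\|_n \leq \|\gamma_n(-t)(C) - \gamma_n(-t_0)(C)\|_n + \|\big(\alpha_n^{(0)}(t) - \alpha_n^{(0)}(t_0)\big)\big(\gamma_n(-t_0)(C)\big)\|_n$ shows the composite is pointwise norm continuous on $\fK_n$.

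The genuinely hard work has already been absorbed into Lemmas \ref{l4.1}--\ref{l4.3}; once those are in hand, the proposition is essentially a symmetry bookkeeping. The only step that requires a moment's care is the passage from ``$\gamma_n(t)$ is an automorphism of $\fK(\II_n)$ commuting with symmetrization'' to ``$\gamma_n(t)$ restricts to an automorphism of $\fK_n$.'' Here I would be most careful to argue surjectivity onto $\fK_n$ from the invertibility of the Dyson map rather than taking it for granted, observing in particular that the inverse is $\mbox{Ad}\, \Gamma_n(t)^{-1}$ and \emph{not} $\gamma_n(-t)$, since $H_{0n}$ and $H_n$ do not commute.
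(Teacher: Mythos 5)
Your proposal is correct and follows essentially the same route as the paper: the paper likewise factorizes $\alpha_n(t) = \alpha_n^{(0)}(t) \, \scirc \, \gamma_n(-t)$, invokes Lemma \ref{l4.3} for the Dyson maps, and uses the fact that $H_n$ and $H_{0n}$ commute with the permutation unitaries $U_n(\pi)$ so that $\gamma_n(t)$ commutes with symmetrization and hence maps $\fK_n$ onto itself. Your extra care about surjectivity (noting that the inverse of $\gamma_n(t)$ is $\mbox{Ad}\, \Gamma_n(t)^{-1}$ rather than $\gamma_n(-t)$) is a correct refinement of a point the paper leaves implicit.
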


Having determined the properties of the action of the dynamics 
$\alpha_n(t)$, $t \in \RR$, on the restricted algebra  
$\overline{\bfA} \upharpoonright \cF_n = \fK_n$, $n \in \NN_0$,
we turn now to the full algebra. To this end we consider
the unitary operators $e^{itH}$, $t \in \RR$, on 
Fock space $\cF$, 
where $H$ is of the form given in equation~\eqref{e1.1}.
Their adjoint action on  $\cB(\cF)$ is denoted 
by $\balpha(t) \doteq \mbox{Ad} \, e^{itH}$, 
$t \in \RR$. Recalling that $H$ commutes 
with the particle number operator, we have 
$e^{itH} \upharpoonright \cF_n = e^{itH_n}$, where $H_n$ if given 
in equation~\eqref{e2.1}. Moreover, 
$\balpha(t) \upharpoonright \cB(\cF_n) = \alpha_n(t)$,
$t \in \RR$, where $\cB(\cF_n)$ is embedded into 
$\cB(\cF)$ by putting $\cB(\cF_n) \upharpoonright \cF_m = 0$
for $m \neq n$.

Given $A \in \overline{\bfA}$, it follows from Theorem \ref{t3.5}
that $\rho_n(A) = K_n \in \cK_n$.    
By Proposition \ref{p4.4} we have  
$\alpha_n(t)(K_n) \in \fK_n$ for $t \in \RR$ and any $n \in \NN_0$. 
Thus the sequence $\{\alpha_n(t)(K_n)  \}_{n \in \NN_0}$
defines for each $t \in \RR$ some element of the directed system 
$(\fK_n, \epsilon_n)_{n \in \NN_0}$. It is not clear from the 
outset that this sequence complies with the coherence condition, \ie defines
an element of the inverse limit $\bfK$. (Note that by modifying
arbitrarily the Hamiltonians $H_n$ for different $n \in \NN$, one  
still obtains sequences of operators in the directed system.)
In the subsequent lemma we will show, however, that the Hamiltonians defined 
in equation \eqref{e1.1} comply with the coherence conditon.
The desired result about the stability of the algebra~$\obfA$ 
under the corresponding dynamics then follows from Theorem~\ref{t3.5}. 

\begin{lemma} \label{l4.5}
Let $\balpha(t)$, $t \in \RR$, be the 
one-parameter group of automorphisms on $\cB(\cF)$, 
fixed by a Hamiltonian as in relation \eqref{e1.1}, and let 
$\alpha_n(t)$, $t \in \RR$, be its restriction 
to $\fK_n \subset \cB(\cF_n)$, $n \in \NN_0$. 
Then 
$$
\kappa_n \, \scirc \, \alpha_n(t) =  \alpha_{n-1}(t) \, \scirc \, \kappa_n 
\, , \quad n \in \NN \, .
$$
Thus $\balpha(t)$, $t \in \RR$, induces an action on the 
inverse system $(\fK_n, \kappa_n)_{n \in \NN}$ which preserves its  
inverse limit $\bfK$. 
\end{lemma}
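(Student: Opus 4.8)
The plan is to identify the coherence map $\kappa_n$ with the physical operation of sending one particle to spatial infinity, as furnished by Lemma~\ref{l3.4}, and then to exploit that this operation commutes with the time evolution precisely because the pair potential $V$ vanishes at infinity. Intuitively, a particle dragged far away decouples from the remaining ones and evolves freely, while the other $n-1$ particles evolve under the $(n-1)$-particle dynamics $\alpha_{n-1}$; the intertwining relation $\kappa_n\scirc\alpha_n(t)=\alpha_{n-1}(t)\scirc\kappa_n$ is then the assertion that ``taking a particle to infinity'' and ``evolving in time'' may be performed in either order.

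To make this precise, fix $K_n\in\fK_n$. By Proposition~\ref{p4.4} we have $\alpha_n(t)(K_n)\in\fK_n$, and by Lemma~\ref{l3.3} this operator is $A\upharpoonright\cF_n$ for some $A\in\bfA$, so Lemma~\ref{l3.4} applies to it. Choosing $f_n,g_n$ with $\langle g_n,f_n\rangle\neq0$ and arbitrary $\bPhi^{n-1},\bPsi^{n-1}\in\cF_{n-1}$, the defining formula of $\kappa_n$ together with Lemma~\ref{l3.4} gives
$$
\lim_{\sbix\rightarrow\infty}\big\langle\bPsi^n(\bix),\,\alpha_n(t)(K_n)\,\bPhi^n(\bix)\big\rangle
= n^{-1}\,\big\langle\bPsi^{n-1},\,\kappa_n\big(\alpha_n(t)(K_n)\big)\,\bPhi^{n-1}\big\rangle\,\langle g_n,f_n\rangle\,.
$$
Writing $\alpha_n(t)(K_n)=e^{itH_n}K_n e^{-itH_n}$ and moving the unitaries onto the vectors, the left-hand side equals $\lim_{\sbix\rightarrow\infty}\big\langle e^{-itH_n}\bPsi^n(\bix),\,K_n\,e^{-itH_n}\bPhi^n(\bix)\big\rangle$, so everything reduces to controlling the evolved vectors $e^{-itH_n}\bPhi^n(\bix)$ as $\bix\to\infty$.

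The main step, and what I expect to be the principal obstacle, is an asymptotic decoupling estimate:
$$
\lim_{\sbix\rightarrow\infty}\big\|\, e^{-itH_n}\bPhi^n(\bix)
-\big(e^{-itH_{n-1}}\bPhi^{n-1}\big)\otimes_s\big(e^{-it\sbiP^2}|f_n(\bix)\rangle\big)\,\big\|=0\,,
$$
and likewise for $\bPsi^n(\bix)$; here the distant single-particle component evolves under the free one-particle Hamiltonian $\biP^2$ and the remaining factors under $H_{n-1}$. Since $H_n$ and $K_n$ commute with all $U_n(\pi)$, I would work with unsymmetrized representatives and split $H_n=\widetilde H_n+W_n$, where $\widetilde H_n\doteq H_{n-1}+\biP_n^2$ decouples the distinguished factor. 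Duhamel's formula then yields
$$
e^{-itH_n}-e^{-it\widetilde H_n}=-i\int_0^t\! ds\; e^{-isH_n}\,W_n\,e^{-i(t-s)\widetilde H_n}\,,
\qquad W_n\doteq 2\sum_{k=1}^{n-1}V(\biQ_k-\biQ_n)\,,
$$
and, by unitarity of $e^{-isH_n}$, it suffices to bound $\|W_n\,e^{-i(t-s)\widetilde H_n}\bPhi^n(\bix)\|$. Under the decoupled propagation the distant component is $e^{-i(t-s)\sbiP^2}|f_n(\bix)\rangle=|(e^{-i(t-s)\sbiP^2}f_n)(\bix)\rangle$, i.e.\ a fixed $L^2$-profile translated by $\bix$, whose displacement over $s\in[0,t]$ is bounded; each factor $V(\biQ_k-\biQ_n)$ therefore acts on a configuration in which the two relevant particles are far apart. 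As $V$ vanishes at infinity, a routine localization estimate (decomposing into the region $|\biQ_k-\biQ_n|\le R$, whose probability tends to $0$, and its complement, where $|V|<\varepsilon$) shows this norm tends to $0$ uniformly in $s$. This is the only place where the decay of $V$ enters, and making the localization argument fully rigorous is the delicate point.

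Granting the decoupling, I would substitute the factorized vectors back into the matrix element. The distant factor is the translate $|(e^{-it\sbiP^2}f_n)(\bix)\rangle$ of the $L^2$-function $e^{-it\sbiP^2}f_n$, so Lemma~\ref{l3.4}, extended from $\cD(\RR^s)$ to $L^2$ single-particle factors by density (using that translation is isometric) and applied now to $K_n$ with the $(n-1)$-particle vectors $e^{-itH_{n-1}}\bPhi^{n-1},\,e^{-itH_{n-1}}\bPsi^{n-1}$, gives
$$
\lim_{\sbix\rightarrow\infty}\big\langle\bPsi^n(\bix),\,\alpha_n(t)(K_n)\,\bPhi^n(\bix)\big\rangle
= n^{-1}\,\big\langle\bPsi^{n-1},\,\alpha_{n-1}(t)\big(\kappa_n(K_n)\big)\,\bPhi^{n-1}\big\rangle\,\langle g_n,f_n\rangle\,,
$$
where I used $\langle e^{-it\sbiP^2}g_n,\,e^{-it\sbiP^2}f_n\rangle=\langle g_n,f_n\rangle$. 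Comparing this with the first displayed identity and cancelling the nonzero factor $n^{-1}\langle g_n,f_n\rangle$, the equality of matrix elements for all $\bPhi^{n-1},\bPsi^{n-1}\in\cF_{n-1}$ forces $\kappa_n(\alpha_n(t)(K_n))=\alpha_{n-1}(t)(\kappa_n(K_n))$. Finally, by Theorem~\ref{t3.5} this intertwining relation means that $\balpha(t)$ maps coherent sequences to coherent sequences and hence restricts to an automorphism of the inverse limit $\bfK\cong\obfA$, which is the asserted stability statement.
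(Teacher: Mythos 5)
Your proof is correct in its essentials, but it takes a genuinely different route from the paper's. The paper never touches the Schr\"odinger picture: it splits $\alpha_n(t)=\alpha_n^{(0)}(t)\scirc\gamma_n(-t)$, disposes of the free part by noting that it does not mix tensor factors, and then reduces the intertwining relation for the Dyson cocycle to the single identity $\kappa_n\scirc\bDelta_n(t)=\bDelta_{n-1}(t)\scirc\kappa_n$ for the derivations underlying the Dyson series (using the Riemann-sum approximation of Lemma~\ref{l4.2} and norm continuity of $\kappa_n$ to propagate this through all orders). That identity is in turn proved by applying the matrix-element argument of Lemma~\ref{l3.4} to the commutators $[\biV(s),A]$, which requires first taking $V$ of compact support, constructing quasi-local approximants $A_r$ of a dense family of observables via the operators $X_n(f)$, and finally removing the support restriction by continuity of $\bDelta_n(t)$ in $V$. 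You instead keep the full unitary group, prove a vector-level asymptotic decoupling estimate via Duhamel's formula and the decay of $V$ (localizing the \emph{states} rather than the \emph{observables}), and then read off the intertwining from the matrix-element characterization of $\kappa_n$ applied on both sides. Your route is shorter and avoids both the Dyson expansion and the quasi-locality machinery for observables; what it costs is (i) the extension of Lemma~\ref{l3.4} from test functions and product vectors to arbitrary $L^2$ single-particle factors and arbitrary vectors in $\cF_{n-1}$ -- harmless, since the permutation-counting computation in that proof only uses that the translates tend weakly to zero and that compact operators turn this into norm convergence -- and (ii) some bookkeeping with the symmetrization, since $\widetilde H_n=H_{n-1}+\biP_n^2$ is not permutation invariant; one must run the Duhamel estimate on the unsymmetrized representative and apply $\overline{U}_n(\Sigma_n)$ at the end, which works because $H_n$ commutes with the permutations and the projection is a contraction. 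The localization step you flag as delicate is in fact routine: after the substitution $\bix_n\mapsto\bix_n+\bix$ the region $|\bix_k-\bix_n-\bix|\leq R$ carries vanishing probability by dominated convergence, and uniformity in $s$ is not even needed in the Duhamel integral since the integrand is uniformly bounded. Both proofs ultimately rest on the same physical mechanism -- the escaping particle decouples because $V$ vanishes at infinity -- implemented on opposite sides of the pairing between states and observables.
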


\begin{proof}
Given $n \in \NN$, we consider first the restriction of
the non-interacting dynamics $\alpha^{(0)}_n(t)$, $t \in \RR$,
to the algebra $\fK_n$. We recall that $\fK_n = 
\sum_{m = 0}^n \fC_{m,n}$, where $\fC_{m,n}$ is the symmetrized
tensor product of the algebra of compact operators $\fC_m$ on 
$\fF_m$ and of $n-m$ unit operators. The algebra $\fC_m$ in turn
coincides with the symmetrized $m$-fold C*-tensor product of the 
algebra $\fC_1$ of compact operators on $\cF_1$.
Taking into account that the non-interacting dynamics does not mix
tensor factors, this gives 
$$
\alpha^{(0)}_n(t) \upharpoonright (\fC_m \otimes_s 
\underbrace{1 \otimes_s \cdots \otimes_s 1}_{n-m}) =
(\alpha^{(0)}_m(t) \upharpoonright \fC_m ) \otimes_s 
\underbrace{1 \otimes_s \cdots \otimes_s 1}_{n-m} \, ,
\quad 0 \leq m \leq n \, .
$$
Since the adjoint action of unitary operators maps compact operators
into compact operators, the statement then follows for the non-interacting 
dynamics from the definition of the inverse maps $\kappa_n$, 
$n \in \NN_0$. 

In view of this observation it suffices to establish the  
modified statement of the lemma, where 
the automorphisms $\alpha_n(t)$ 
are replaced by the Dyson maps $\gamma_n(t)$, $t \in \RR$, 
defined in relation \eqref{e4.1}.
As a matter of fact, it suffices to show that the 
derivations  $\bDelta_n(t) = \int_0^t \! ds \, \bdelta_n(s)$,
defined in Lemma \ref{l4.1} and underlying the 
Dyson series \eqref{e4.2} of $\gamma_n(t)$, satisfy
\begin{equation} \label{e4.4}
\kappa_n \, \scirc \, \bDelta_n(t) 
= \bDelta_{n-1}(t) \, \scirc \, \kappa_n  \, , \quad n \in \NN \, ,
\end{equation}
where we put $\bDelta_0(t) \doteq 0$, $t \in \RR$. 

Before going into the proof of this relation, let us 
explain why it implies the result. 
As shown in the proof of Lemma \ref{l4.3},
the $l$th term in the Dyson series, obtained by 
iteration, has the form
$D_l(t) = \int_0^t \! ds \, \bdelta_n(s) (D_{l-1}(s))$,
where $s \mapsto D_{l-1}(s) \in \fK_n$ is norm continuous, $l \in \NN$. 
Moreover, it was shown in Lemma~\ref{l4.2} that these
integrals, defined in the strong operator topology, 
can be approximated in norm by the sums 
$$
\sum_{j = 1}^k
\big(\bDelta_n(jt/k) - \bDelta_n((j-1)t/k)\big) \, (D_l(jt/k)) \, .
$$
Anticipating the asserted relation, this implies 
\begin{align*}
& \kappa_n \big( \sum_{j = 1}^k
\big(\bDelta_n(jt/k) - \bDelta_n((j-1)t/k)\big) \, (D_l(jt/k)) \big) \\
& = \big( \sum_{j = 1}^k
\big(\bDelta_{n-1}(jt/k) - \bDelta_{n-1}((j-1)t/k)\big) \, \kappa_n(D_l(jt/k)) ) 
\, .
\end{align*}
Bearing in mind the norm continuity of the maps 
$\kappa_n : \fK_n \rightarrow \fK_{n-1}$,
one therefore arrives 
in the limit $k \rightarrow \infty$  with the help of Lemma \ref{l4.2}  
at the equality 
$$
\kappa_n(D_l(t)) = \kappa_n\big(\int_0^t \! ds \, \bdelta_n(s) (D_{l-1}(s) \big)
= \int_0^t \! ds \, \bdelta_{n-1}(s) (\kappa_n(D_{l-1}(s))) \, .
$$ 
One can apply now  the same argument to the function 
$s \mapsto D_{l-1}(s)$ on the right hand side
of this equality. Upon iteration, one sees that the action of 
$\kappa_n$ maps the multiple integrals, underlying the expansion 
of $\gamma_n(t)$ (acting on $\fK_n$)  
into the multiple integrals underlying
the expansion of $\gamma_{n-1}(t)$ (acting on $\fK_{n-1}$). 
The statement then follows from the convergence of the
Dyson expansions. 

Let us turn now to the proof of relation \eqref{e4.4}.
There we apply again the arguments used in the proof of Lemma \ref{l3.4}
on which the definition of the maps $\kappa_n$ was based.
To this end we need to reexpress the
derivations in relation \eqref{e4.4} by the underlying   
global operator, involving creation and annihilation 
operators. Thus we go back from the potentials
$\biV_n$, $n \in \NN$, to 
$$
\biV \doteq \int \! d\bix \! \! \int \! d\biy \  
 a^*(\bix) a^*(\biy) V(\bix - \biy) a(\bix) a(\biy) \, . 
$$
This will enable us to make use of locality properties of the operators. 
In a first step, we consider two-body potentials $V$ which are 
given by functions
with compact support. Then there exist families of operators 
$A \in \bfA$ such that their restrictions to $\cF_n$ are dense in $\fK_n$ and 
the commutator $[\biV, \alpha^{(0)}(s)(A)]$ can be approximated in 
norm by localized operators. Using the fact that 
$$
[\biV(s), A] = 
\balpha^{(0)}(s)( \, [\biV, \alpha^{(0)}(-s)(A)] \, )
$$ 
and applying the adjoint
of the leftmost automorphim $\balpha^{(0)}(s)$ to the states, 
the arguments given in Lemma \ref{l3.4} imply 
\begin{equation} \label{e4.5}
\lim_{\sbix \rightarrow \infty} \langle \bPsi^n(\bix), \, [\biV(s) , A] \, 
\bPhi^n(\bix) \rangle
= 1/n \ \langle \bPsi^{n-1}, \, [\biV(s) , A] \, \bPhi^{n-1} \rangle \, .  
\end{equation}
Here the wave functions $f_n, g_n$ in the lemma are chosen to 
coincide and to be normalized. By the dominated convergence
theorem, the same result is obtained for the integrated operators
$\int_0^t \! ds \, \biV(s)$, $t \in \RR$. Reexpressing the resulting
equality by the initial quantities, we arrive at
$$
\lim_{\sbix \rightarrow \infty} \langle \bPsi^n(\bix), \, 
\bDelta_n(t)(\rho_n(A)) \, \bPhi^n(\bix) \rangle
= 1/n \, \langle \bPsi^{n-1}, \, \bDelta_{n-1}(t)(\rho_{n-1}(A)) \, 
\bPhi^{n-1} \rangle \, . 
$$
Since $\kappa_n \, \scirc \rho_n = \rho_{n-1}$, this 
implies $\kappa_n \scirc \, \bDelta_n(t) = \bDelta_{n-1}(t) \scirc \, \kappa_n$
on a dense set of operators 
in $\fK_n$, whence, by the continuity propertis
of the maps, on all of $\fK_n$. Finally, the restrictions on 
the two-body potential $V$ 
can be removed due to the norm continuity of the derivations 
$\bDelta_n(t)$ with regard to
changes of the potential. So the statement follows.

It remains to exhibit the required operators $A \in \bfA$ and to 
control the localization properties and the 
norm of the commutators 
$[\biV, \balpha^{(0)}(s)(A)]$ on $\cF_n$, $n \in \NN$.
To this end we choose operators $X_n(f)$, $f \in \cD(\RR^s)$,
of the type used in the proof of Lemma \ref{l3.3}. Given 
any non-zero function $f$, they have the form 
$X_n(f) \doteq E_{[0,n]}(f) \, a(f)$, where $E_{[0,n]}(f)$
is the spectral projection of the number operator 
$N(f) \doteq \| f \|_2^{-2} \, a^*(f) a(f)$ corresponding to the  
spectrum in $[0,n]$. It was shown in Lemma \ref{l3.3} that the 
sums of products of the gauge invariant 
combinations $X_n(g)^* X_n(f) \upharpoonright \cF_n$ with  
$f,g \in \cD(\RR^s)$ are dense in~$\fK_n$.  

What matters here is the fact that the maps 
$f \mapsto  X_n(f) \upharpoonright \cF_n$
are continuous with regard to the norm on $L^2(\RR^s)$,  provided 
the functions $f$ are staying away from $0$. This is obvious
for the function $f \mapsto a(f) \upharpoonright \cF_n$,
where one has 
$$
\|(a(f) - a(g))\|_n \leq n^{1/2} \, \| f - g \|_2 \, . 
$$
In order to give an estimate for  
$\| E_{[0,n]}(f) - E_{[0,n]}(g) \|_n$, it is convenient 
to proceed to the resolvents of the underlying 
number operators. There 
one obtains for complex $z \not\in \NN_0$
\begin{align*}
\|(z - & N(f))^{-1} - (z - N(g))^{-1} \|_n \\
& \leq 1/d_\NN(z)^{2} \ \| N(f) - N(g) \|_n  
\leq 2 n/d_\NN(z)^{2} \ \big\| f/\|f\|_2 - g/\|g\|_2 
\big\|_2 \, 
\end{align*}
where $d_\NN(z)$ is the distance between $z$ and $\NN_0$. 
A similar estimate results for the difference of the projections by
performing a contour integration of the resolvents. So, summing
up, we arrive at 
$$
\|X_n(f) - X_n(g)\|_n \leq c_n \, 
\big(\|f\|_2 / \| g \|_2 + \| g \|_2 / \| f \|_2 \big) \ \|f - g \|_2 \, ,
$$
where $c_n$ depends only on $n$. This estimate shows that one can 
approximate any operator $X_n(f)$, $f \in L^2(\RR^s)$, in the 
given norm by operators $X_n(f_r)$, where $f_r$ are test functions
having support in a ball $\BB_r \subset \RR^s$
about $0$ with sufficiently large radius $r$.
We shall say that $X_n(f_r)$ is localized in $\BB_r$.  

This approximation procedure can also be applied to
arbitrary sums and products of the gauge invariant 
combinations $X_n(f)^*X_n(g)$, $f,g \in L^2(\RR^s)$.
It shows that for any operator $A$ in this family there exist 
members $A_r$ which are localized in~$\BB_r$ 
and $\| A - A_r \|_n \rightarrow 0$ for $r$ tending to infinity.
The action of the time translations $\balpha^{(0)}(s)$
on the operators 
does not affect this feature   
since it only induces isometries $f \mapsto f(s)$ of the
functions underlying the construction of $A$. Therefore, 
there are still operators~$A_r$, depending also on $s$, 
which are localized in $\BB_r$, and 
$\| \balpha^{(0)}(s)(A) - A_r \|_n \rightarrow 0 $
in the limit of large $r$. So the property of 
having localized approximants is stable under the
non-interacting time evolution. Adopting 
the terminology used in relativistic quantum field theory, one
may say that the operators in this family are quasi-local.  

Let us consider now the commutators
$[\biV, A]$, where $A$ belongs to the family of operators 
just constructed. One has 
$\|  [\biV, A] \|_n \leq 
2 n(n-1) \, \| V \| \, \| A \|_n$.  Moreover, for given $A$ and $s \in \RR$,  
there exist operators $A_r$ which are localized 
in $\BB_r$ and 
$\|  [\biV, (\balpha(-s)(A) - A_r \big)] \|_n \rightarrow 0$
for $r \rightarrow \infty$. 
Since $V$ has compact support, the operator
$[\biV, A_r]$ commutes with all operators $a^*(h)$,
where $h$ has support in the complement of a Ball $\BB_R$,
containing the region $\BB_r + \text{supp} \, V$. 

In the computation of the matrix elements of these operators, 
there appear now the time translated 
wave functions $h(-s)$ of the given vectors.
Again, there exist 
test functions $h_r$, having support in $\BB_r$, 
and $\| h(-s) - h_r \|_2 \rightarrow 0$ for $r \rightarrow \infty$. 
Hence 
$\| a^*(h(-s)) - a^*(h_r) \|_{n-1} = \| a(h(-s)) - a(h_r) \|_n
\rightarrow 0 $. Since the spatial translations commute with the
free time evolution and do not affect the norm, 
the same relation holds for the 
translated wave functions $h(-s,\bix)$, $h_r(\bix)$, uniformly
for $\bix \in \RR^s$. So one can replace the operators 
$a^*(f_n(-s,\bix))$, creating the particle component 
of the vectors which is shifted
away, by the operators $a^*(f_{n,r}(\bix))$. The latter 
operator is localized in $\BB_r + \bix$ and therefore commutes 
with $[\biV, A_r]$ for large translations $\bix$. 

One can now apply the arguments in Lemma \ref{l3.4} and thereby 
arrives at relation~\eqref{e4.5} for the approximating operators
$A_r$. Since the matrix elements are uniformly bounded 
in $s$ one also obtains the corresponding relation for its integrated version.
Recalling that the family of operators $\rho_n(A_r)$ is dense
in $\fK_n$, this proves relation~\eqref{e4.4} for 
two-body potentials $V$ with compact support. Since 
$\bDelta_n(t)$ is linear in~$V$ and
$\| \bDelta_n(t)(K_n) \|_n \leq 2 n (n-1) \| V \| \| K_n \|_n$,
$K_n \in \fK_n$, this result extends to arbitrary potentials
$V \in C_0(\RR^s)$ and $n \in \NN$, completing the proof.  
\qed \end{proof}

The preceding results clarify the physical meaning of the inverse maps
$\kappa_n$. They describe the operation of removing from $\rho_n(\bfA)$
all observables which are sensitive to a particle which is located very far
away and therefore does not interact anymore with the remaining particles. 
One thereby arrives at the observables in the representation $\rho_{n-1}$
and the corresponding dynamics, $n \in \NN$. On the mathematical 
side, this implies that the dynamics $\balpha$ is 
compatible with the structure of the inverse limit $\bfK$. This leads us 
to the main result of this section.

\begin{theorem} \label{t4.6} 
Let $\balpha \doteq \{\balpha(t)\}_{t \in \RR}$ 
be the group of automorphisms of $\cB(\cF)$,
defined by a Hamiltonian of the form given in \eqref{e1.1}. 
These automorphisms map the subalgebra 
$\overline{\bfA} \subset \cB(\cF)$ onto itself, \ie they are 
automorphisms of $\overline{\bfA}$. Moreover, they 
act pointwise continuously on it with regard to the 
seminorms $\| \, \cdot \, \|_n$, $n \in \NN_0$. 
For any $\balpha$ there is a (in the latter topology) dense 
subalgebra $\overline{\bfA}_{\, \alpha} \subset \overline{\bfA}$
on which~$\balpha$ acts pointwise norm continuously
with regard to the C*-norm $\| \, \cdot \, \|_\infty$. Thus
$(\overline{\bfA}_{\, \alpha}, \balpha)$ forms a C*-dynamical system
for the given dynamics. 
\end{theorem}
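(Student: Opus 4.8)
The plan is to read everything off the inverse-limit picture of Theorem~\ref{t3.5}, since parts one and two are then essentially immediate. I would identify $A \in \obfA$ with its coherent sequence $\biK = \{\rho_n(A) = K_n\}_{n}$, where $\kappa_n(K_n) = K_{n-1}$. For fixed $t$, Proposition~\ref{p4.4} gives $\alpha_n(t)(K_n) \in \fK_n$, and Lemma~\ref{l4.5} gives $\kappa_n(\alpha_n(t)(K_n)) = \alpha_{n-1}(t)(\kappa_n(K_n)) = \alpha_{n-1}(t)(K_{n-1})$, so the sequence $\{\alpha_n(t)(K_n)\}_n$ is again coherent. Since each $\alpha_n(t)$ is a $*$-automorphism of $\fK_n$ it is isometric for $\| \, \cdot \, \|_n$, whence $\sup_n \|\alpha_n(t)(K_n)\|_n = \|A\|_\infty < \infty$; thus the sequence lies in $\bfK$ and, because $\balpha(t) \upharpoonright \cB(\cF_n) = \alpha_n(t)$, it represents $\balpha(t)(A)$. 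Hence $\balpha(t)(\obfA) \subseteq \obfA$, and applying the same to $\balpha(-t) = \balpha(t)^{-1}$ yields equality. Part two is then Proposition~\ref{p4.4} evaluated level by level: $\|\balpha(t)(A) - \balpha(t_0)(A)\|_n = \|\alpha_n(t)(K_n) - \alpha_n(t_0)(K_n)\|_n \to 0$ as $t \to t_0$, which is exactly continuity in the seminorm topology.

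The substance lies in part three, for which I would use the standard regularization of a one-parameter automorphism group, but carried out level by level. For $f \in L^1(\RR)$ and $A \in \obfA$ I define $A_f$ by its components
\[
(A_f)_n \doteq \int_\RR dt\, f(t)\, \alpha_n(t)(K_n) \in \fK_n \, ,
\]
the Bochner integral being well defined because $t \mapsto \alpha_n(t)(K_n)$ is norm continuous (Proposition~\ref{p4.4}) and bounded by $\|K_n\|_n$. Boundedness is clear, $\|(A_f)_n\|_n \le \|f\|_1 \, \|K_n\|_n$, so $\|A_f\|_\infty \le \|f\|_1 \, \|A\|_\infty$. Coherence follows from Lemma~\ref{l4.5} together with the norm continuity of $\kappa_n$, which lets me pull $\kappa_n$ through the integral,
\[
\kappa_n\big((A_f)_n\big) = \int_\RR dt\, f(t)\, \alpha_{n-1}(t)(\kappa_n K_n) = (A_f)_{n-1} \, .
\]
Hence $\{(A_f)_n\}_n \in \bfK$, that is, $A_f \in \obfA$ by Theorem~\ref{t3.5}.

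Next I would check that $\balpha$ acts norm continuously on these regularized elements. A change of variables shows $\balpha(s)(A_f) = A_{f_s}$ with $f_s(t) \doteq f(t-s)$, so, using the isometry of $\alpha_n(t)$,
\[
\|\balpha(s)(A_f) - A_f\|_\infty \le \|f_s - f\|_1 \, \|A\|_\infty \longrightarrow 0 \quad (s \to 0) \, ,
\]
by continuity of translation in $L^1(\RR)$. I would then define $\obfA_{\, \alpha}$ to be the set of all $A \in \obfA$ for which $t \mapsto \balpha(t)(A)$ is $\| \, \cdot \, \|_\infty$-continuous. This is a norm-closed, $\balpha$-invariant $*$-subalgebra (closedness by a three-$\varepsilon$ argument using the isometry of $\balpha(t)$, the subalgebra property because multiplication and the $*$-operation are norm continuous on bounded sets), and it contains every $A_f$. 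Thus $(\obfA_{\, \alpha}, \balpha)$ is a C*-dynamical system.

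It remains to see that $\obfA_{\, \alpha}$ is dense in the seminorm topology; for this I would take an approximate identity $f_\varepsilon \ge 0$, $\int f_\varepsilon = 1$, supported in a shrinking neighbourhood of $0$, and estimate, for each fixed $n$,
\[
\|(A_{f_\varepsilon})_n - K_n\|_n \le \int_\RR dt\, f_\varepsilon(t)\, \|\alpha_n(t)(K_n) - K_n\|_n \longrightarrow 0
\]
as $\varepsilon \to 0$, again by Proposition~\ref{p4.4}. This is precisely the convergence $A_{f_\varepsilon} \to A$ in every seminorm $\| \, \cdot \, \|_n$, so $\obfA_{\, \alpha}$ is seminorm-dense. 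The conceptual crux, and the step I expect to demand the most care, is the interplay between the two topologies: the global group $\balpha$ is continuous only in the weaker seminorm topology, and $t \mapsto \balpha(t)(A)$ is in general \emph{not} $\| \, \cdot \, \|_\infty$-continuous, so one cannot Bochner-integrate it directly in the C*-norm. The remedy is exactly to integrate level by level in each $\fK_n$, where Proposition~\ref{p4.4} supplies norm continuity, and to invoke Lemma~\ref{l4.5} to guarantee that the separately smoothed components still satisfy the coherence condition and hence reassemble into a genuine element of $\obfA$.
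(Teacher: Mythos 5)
Your proposal is correct and follows essentially the same route as the paper: parts one and two are read off from Theorem~\ref{t3.5}, Proposition~\ref{p4.4} and Lemma~\ref{l4.5}, and part three uses the same $L^1$-mollification, carried out level by level in each $\fK_n$ with coherence of the smoothed components guaranteed by Lemma~\ref{l4.5}. The only (cosmetic) difference is that you take $\obfA_{\,\alpha}$ to be the full subalgebra of elements on which $\balpha$ acts $\| \, \cdot \, \|_\infty$-continuously, whereas the paper takes the unital subalgebra generated by the mollified operators; both contain the regularized elements $A_f$ and are therefore seminorm-dense.
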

\begin{proof}
It was shown in Lemma \ref{l4.5} that $\balpha(t)$, $t \in \RR$, defines
a group of automorphisms of the inverse limit $\bfK$ of 
$(\fK_n, \kappa_n)_{n \in \NN}$. The algebra $\bfK$ in turn is isomorphic to
$\overline{\bfA}$ according to Theorem \ref{t3.5}, proving the
first part of the statement. The stated continuity 
properties of theses automorphisms with regard to the 
topology induced by the seminorms then follow from
Proposition \ref{p4.4}. These continuity properties imply in 
particular that the mollified operators 
$\int \! ds \, f(s) \balpha(s)(A)$, defined in the strong operator
topology for $f \in L^1(\RR)$, $A \in \overline{\bfA}$, generate
a unital subalgebra $\overline{\bfA}_{\, \alpha} \subset \overline{\bfA}$
which is dense in $\overline{\bfA}$ with regard to the seminorms. 
For, the restricted integrals 
$\int \! ds \, f(s) \balpha(s)(A) \upharpoonright \cF_n$ are defined
in the norm topology and form a coherent sequence, $n \in \NN_0$.  
The stronger
continuity properties of the elements of $\overline{\bfA}_{\, \alpha}$
with regard to the action of the automorphisms then follow from the
estimate \ $\| \int \! ds \, f(s) \balpha(s)(A) \|_\infty
\leq \int \! ds \, |f(s)| \, \| A \|_\infty $ and the continuity
properties of the elements of $L^1(\RR)$ with regard to translations.
\qed \end{proof}

We conclude this section by discussing locality properties of 
the observables which entered already in preceding lemmas. 
They are a distinctive feature of  
field theory, having no counter part in the particle picture. 
The resolvent algebra $\bfR$ is, by construction, the 
C*-inductive limit of the net of its subalgebras $\bfR(\biO)$ 
based on the open, bounded  regions $\biO \subset \RR^s$. 
The algebras $\bfR(\biO)$ are generated by the 
subsets of resolvents 
$R(\lambda, f)$, where $\lambda \in \RR \backslash \{ 0 \}$
and $f \in \cD(\biO)$, the space of test functions having support in $\biO$. 
Since the algebras $\bfR(\biO)$ are stable under gauge 
transformations, we can proceed to their gauge invariant
subalgebras \mbox{$\bfA(\biO) \subset \bfR(\biO)$}, \ $\biO \subset \RR^s$,
and the algebra of all gauge invariant observables $\bfA$
is the C*-inductive limit of these subalgebras. It is an
immediate consequence of the canonical commutation relations
that elements of algebras assigned to disjoint regions in 
$\RR^s$ commute with each other, thereby implementing
the principle of locality (statistical independence of
spatially separated observables) at the kinematical level.

These locality properties carry over to the algebras 
$\overline{\bfA}(\biO)$, $\biO \subset \RR^s$, obtained by 
extending $\bfA(\biO)$.
Its elements consist of bounded operators $A$ on $\cF$
whose restrictions to all subspaces with limited 
particle number coincide with elements of $\bfA(\biO)$.
Again, the algebra $\overline{\bfA}$ 
is an inductive limit of its local subalgebras, but now with regard to 
the topology given by the seminorms \mbox{$\| \, \cdot \, \|_n$}, 
$n \in \NN_0$. In fact, given $A \in \overline{\bfA}$, 
there exists a sequence of operators $\{ A_n \in \bfA(\biO_n) \}_{n \in \NN}$, 
localized in bounded regions  $\biO_n$, 
such that \ $\lim_{n \rightarrow \infty} \|A_n - A \, \|_m = 0$
for any given $m \in \NN_0$. This sequence is obtained by 
approximating $A \upharpoonright \cF_n$ in norm 
by local operators $A_n \upharpoonright \cF_n$, 
which is possible by the definition of $\overline{\bfA}$
and the inductive structure of $\bfA$. 
These approximations have to be progressively improved 
for increasing $n$. Taking into account that the 
semi-norms $\| \, \cdot \, \|_m$ are monotonically increasing
with $m \in \NN_0$, one thereby obtains the desired 
approximating sequence $A_n$, satisfying 
$\| A - A_n \|_m \rightarrow 0$ in the limit of large $n$
for any $m \in \NN_0$.

Whereas the non-relativistic dynamics does not preserve the 
kinematical local structure of the observables, our results 
imply that there remain some quasi-local properties of the 
time translated observables in the sense that these observables
still  
commute at asymptotic spatial distances. We already made use of this fact
in case of the non-interacting dynamics in the proof of Lemma \ref{l4.5}.
But it applies to the interacting case, as well. Let $\balpha(\bix)$,
$\bix \in \RR^s$, be the automorphism 
group of spatial translations on $\bfA$ which acts on the 
generating resolvents according to 
$\balpha(\sbix) (R(\lambda, f)) \doteq R(\lambda, f(\bix))$,
where $f(\bix)$ denotes the test function $f$, translated by $\bix$. 
These automorphisms commute with the time translations 
$\balpha(t)$, $t \in \RR$, and can be extended to the algebra
$\overline{\bfA}$. Hence, 
putting $\balpha(t,\bix) \doteq \balpha(t) \, \scirc \, \balpha(\bix)$
and taking into account the preceding remarks about the inductive
structure of $\obfA$, we obtain  
the following corollary of Theorem \ref{t4.6}.

\begin{corollary} \label{c4.7}
Let $A, B \in \overline{\bfA}$ and let $t \in \RR$. Then 
$$
\lim_{\sbix \rightarrow \infty} \, 
\| \, [ \balpha(t, \bix)(A), B ] \, \|_n
= 0 \, , \quad n \in \NN_0 \, .
$$
\end{corollary}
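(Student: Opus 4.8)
The plan is to reduce the assertion to a statement about spatial translations alone. Since the spatial translations $\balpha(\bix)$ commute with the time translations $\balpha(t)$, one has $\balpha(t,\bix) = \balpha(t)\,\scirc\,\balpha(\bix) = \balpha(\bix)\,\scirc\,\balpha(t)$, whence $\balpha(t,\bix)(A) = \balpha(\bix)(A')$ with $A' \doteq \balpha(t)(A)$. By Theorem~\ref{t4.6} the operator $A'$ lies again in $\overline{\bfA}$. Thus it suffices to prove that for arbitrary $A', B \in \overline{\bfA}$ one has $\lim_{\sbix \rightarrow \infty} \| [\balpha(\bix)(A'), B] \|_n = 0$, $n \in \NN_0$; the time parameter no longer appears explicitly, and the whole difficulty of the interacting dynamics is absorbed into the single fixed observable $A'$.

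First I would fix the index $n$ and some $\varepsilon > 0$. According to the inductive structure of $\overline{\bfA}$ established just before the corollary, there exist local operators $A_\varepsilon \in \bfA(\biO_1)$ and $B_\varepsilon \in \bfA(\biO_2)$, localized in bounded regions $\biO_1, \biO_2 \subset \RR^s$, such that $\| A' - A_\varepsilon \|_n < \varepsilon$ and $\| B - B_\varepsilon \|_n < \varepsilon$. I would then observe that the spatial translations are implemented on Fock space by the unitary representation of $\RR^s$, which leaves each subspace $\cF_n$ invariant; hence $\balpha(\bix)$ acts on $\cF_n$ by conjugation with a unitary and is isometric with regard to the seminorm $\| \cdot \|_n$. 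In particular $\| \balpha(\bix)(A' - A_\varepsilon) \|_n < \varepsilon$ uniformly in $\bix$.

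The estimate then proceeds by the standard splitting
\begin{align*}
[\balpha(\bix)(A'), B] = {} & [\balpha(\bix)(A_\varepsilon), B_\varepsilon] + [\balpha(\bix)(A' - A_\varepsilon), B] \\
& + [\balpha(\bix)(A_\varepsilon), B - B_\varepsilon] \, .
\end{align*}
The two error commutators are bounded in $\| \cdot \|_n$ by $2 \varepsilon \| B \|_n$ and $2 \varepsilon \| A_\varepsilon \|_n \leq 2\varepsilon(\|A'\|_n + \varepsilon)$, respectively, uniformly in $\bix$. For the principal term I would note that $\balpha(\bix)(A_\varepsilon) \in \bfA(\biO_1 + \bix)$, since $\balpha(\bix)$ merely translates the test functions of the generating resolvents. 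Once $\bix$ is large enough that $(\biO_1 + \bix) \cap \biO_2 = \emptyset$, the kinematical locality of the resolvent algebra forces this commutator to vanish identically. Hence $\limsup_{\sbix \rightarrow \infty} \| [\balpha(\bix)(A'), B] \|_n \leq 2\varepsilon(\|A'\|_n + \|B\|_n + \varepsilon)$, and letting $\varepsilon \rightarrow 0$ yields the claim. I expect no serious obstacle once the reduction of the first paragraph is in place; the only point requiring genuine care is the uniformity in $\bix$ of the local approximation, which is secured precisely by the isometric action of the spatial translations in each seminorm.
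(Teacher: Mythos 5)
Your proof is correct and follows essentially the same route the paper intends: the paper derives the corollary from Theorem~\ref{t4.6} together with ``the preceding remarks about the inductive structure of $\obfA$'', which is precisely your reduction to $\balpha(\bix)(A')$ with $A'=\balpha(t)(A)\in\obfA$, followed by seminorm-approximation with local operators and kinematical locality for disjoint regions. The one point you rightly flag — uniformity in $\bix$ via the isometric action of translations on each $\|\,\cdot\,\|_n$ — is handled correctly, so nothing is missing.
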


So the time translated observables are still quasi-local in this sense. 
It is an interesting question whether for given 
interaction potential one can establish 
more specific bounds for these commutators, \eg  
of Lieb-Robinson type \cite{LiRo}, indicating   
limitations on the speed of propagation of causal effects.

\section{States}
\setcounter{equation}{0}

Having settled the framework, we turn now to its applications 
for the construction and interpretation of states in many body theory. 
It is the primary purpose of this part of our article 
to indicate how the algebraic approach sheds new light on some
topics of physical interest; functional analytic details are largely omitted. 

\vspace*{-2mm}
\subsection{Ground states and infra vacua} \label{s5.1} 

The pair potentials $V$ considered here may lead to 
bound states whose  
energy lies below the energy of the vacuum state $\Omega$, which is
put equal to $0$ by the definition of the Hamiltonians $H$ in 
equation~\eqref{e1.1}. This defect can be resolved by renormalizing the
Hamiltonians.
Let $H$ be given and let $-E(n)$ be the infimum of the spectrum of 
$H \upharpoonright \cF_n$, which exists for the class
of potentials considered here, $n \in \NN_0$. With the help of 
the particle number operator $N$, one then defines 
the renormalized Hamiltonian $H_r \doteq H + E(N)$. It is non-negative,
the vacuum $\Omega$ is its ground state, and it 
induces the same time evolution on 
$\overline{\bfA}$ as the given Hamiltonian~$H$
since $N$ commutes with the elements of this algebra. 
A particularly nice class of potentials $V$ are those of positive type 
(having non-negative Fourier transforms). They can have bound 
states as well.
One easily verifies for these potentials that the corresponding 
renormalized Hamiltonians $H_r \doteq H + V(0) N$ are non-negative,
where the value $V(0) > 0$ of the potential at the
origin resembles a chemical potential. We will return to these 
potentials in our discussion of equilibrium states.

A more interesting class of states are approximate ground states, 
formed by an infinity of low energy Bosons, such as 
infra-vacua or condensates. These states cannot be described by
vectors in Fock space, they lead to inequivalent representations
of the algebra $\overline{\bfA}$. Such states are obtained in
the present setting by forming suitable sequences of vectors
in Fock space which determine sequences of expectation functionals
on $\overline{\bfA}$. The limits of these functionals yield 
via the GNS-construction the desired representations. 
In more detail, one proceeds from sequences of states
with increasing particle number, retaining control of 
their localization properties and their total energy. 
For simplicity we consider two-body potentials $V$ which are 
non-negative and of short range. 
Given $n \in \NN$, one defines vector states in 
$\cF_n \subset \cF$ of the form
$$
\Psi_{L,n} \doteq (n!)^{-1/2} \, 
f_L \otimes_s \cdots \otimes_s f_L \, .
$$
Here $\bix \mapsto f_{L}(\bix) \doteq L^{-s/2} f(\bix/L)$
are normalized elements of the single particle space $L^2(\RR^s)$.
In order to gain control on the total energy, one
interprets the vectors $\Psi_{L,n}$ as asymptotic configurations of  
scattering states of particles. To this end one 
considers on $\cF_n$ the (free channel) M{\o}ller wave operators 
$\widehat{\Omega}_n$, which for the potentials considered here 
are limits of the operators $e^{itH_n} e^{-itH_{0 n}}$ at asymptotic times
$t \rightarrow \infty$, cf.~\cite{DeGe}. With their help one
then proceeds to the outgoing scattering states
$\widehat{\Psi}_{L,n} \doteq \widehat{\Omega}_n \, \Psi_{L,n}$. 
What matters here is the fact that the isometric M{\o}ller operators 
intertwine the interacting and non-interacting dynamics, 
$H_n \, \widehat{\Omega}_n = 
\widehat{\Omega}_n \, H_{0 n}$. 
Hence one obtains for the expectation value of the full energy operator 
the bounds 
$$
0 \leq \langle \widehat{\Psi}_{L,n}, \, H_n \, \widehat{\Psi}_{L,n} \rangle
= \langle \Psi_{L,n}, \, H_{0,n} \, \Psi_{L,n} \rangle
= n L^{-2} \int \! d \bix \, |\bpartial f(\bix) |^2 \, .
$$
The functionals \ 
$\widehat{\omega}_{L,n}(\, \cdot \,) \doteq 
\langle \widehat{\Psi}_{L,n}, \, \cdot \ \widehat{\Psi}_{L,n} \rangle$
on the C*-dynamical system $(\overline{\bfA}_{\, \alpha}, \balpha)$
retain finite total energy in the limit $n \rightarrow \infty$, 
$n L^{-2} = \mbox{const}$. Note that 
the integral can be made arbitrarily small
for suitable choices of $f$, so in a sense the
states may be viewed as approximate ground states. 
Moreover, choosing functions whose Fourier transforms 
have compact support, one has control on the
properties of the weak-*-limit points of these functionals;
such limit points always exist by compactness arguments 
(Banach-Alaoglu theorem). Making use of the fact that
one is dealing with C*-dynamical
systems, one can show that for any such function $f$ all limit 
functionals lead
via the GNS-construction to positive energy representations. 
In other words, the automorphisms inducing the 
time translations are unitarily implemented in these 
non-Fock representations  
and have positive selfadjoint generators, cf. \cite[Sec. II.5]{Bo}.

We conclude this section by noting that the 
states \ $\widehat{\omega}_{L,n}(\, \cdot \,)$  
describe at asymptotic times Bose-Einstein
condensates with particle density $\bix \mapsto n L^{-s} |f(\bix/L)|^2$. 
It is, however, not clear whether they have an interpretation
as condensates also at finite times. In view of the interest in this 
phenomenon, cf.\ \cite{CoDeZi,LiSeSoYn,Ve} and references 
quoted there, it seems worth while to explore this question
in more detail also from the present algebraic point of view.

\subsection{Particle  observables}
\setcounter{equation}{0}

As we have seen, the 
algebra of field theoretic observables $\overline{\bfA}$ does
not contain operators having continuous spectrum. In  
particular, it does not contain particle observables such as 
the momentum operators. In order to uncover from the algebra 
such observables one has to proceed to 
observations at asymptotic times. For
then the interaction between the fields fades away and the   
sub-leading particle aspects can surface. This fact was 
established first by Araki and Haag in relativistic quantum field 
theory \cite{ArHa, Ar}. Their reasoning can be carried over to the
present non-relativistic setting and we briefly outline their 
construction. In order to simplify the 
discussion, we restrict our attention again to potentials $V$
which are non-negative and of short range. Then there are 
no bound states and since the theory is also asymptotically 
complete, cf.~\cite{DeGe} and references quoted 
there, the isometric M{\o}ller operators are invertible.

Let 
$\widehat{\Omega}_n$ be the unitary M{\o}ller operator, mapping the states
$\Psi_n \in \cF_n$ onto outgoing scattering states,
denoted by $\widehat{\Psi}_n \doteq \widehat{\Omega}_n \Psi_n$, $n \in \NN_0$. 
One has by the very definition of the M{\o}ller operators the 
asymptotic equality
$$
\lim_{t \rightarrow \infty} \, 
\big(\langle \widehat{\Psi}_n, \, 
\balpha(t)(A) \, \widehat{\Psi}_n \rangle -
\langle \Psi_n, \, \balpha^{(0)}(t)(A) \, \Psi_n \rangle \big) = 0 \, ,
\quad A \in \overline{\bfA} \, .
$$
Bearing in mind the structure of  
$\overline{\bfA} \upharpoonright \cF_n = \fK_n$ and the fact
that the compact tensor factors appearing 
in these operators  are mapped
onto themselves by the non-interacting dynamics, one can show,   
similarly to the proof of Lemma \ref{l3.4}, that 
$\lim_{\, t \rightarrow \infty} 
\langle \Psi_n, \, \balpha^{(0)}(t)(A) \, \Psi_n \rangle  =
\langle \Omega, \, A \, \Omega \rangle $. 
Since $\widehat{\Omega}_n$ maps each $\cF_n$ onto itself, $n \in \NN_0$,
this implies that for all normalized vectors $\Phi \in \cF$ one has 
$$
\lim_{t \rightarrow \infty} \, 
\langle \Phi, \, \balpha(t)(A) \, \Phi \rangle = 
\langle \Omega, \, A \, \Omega \rangle \, ,
\quad A \in \overline{\bfA} \, .
$$
So the asymptotically dominant contributions 
to the expectation values of the observables arise from the Fock vacuum. 
 
For the next to leading order one subtracts from the 
observables the asymptotically dominant contributions and proceeds to 
$A_0 \doteq A - \langle \Omega, \, A \, \Omega \rangle \, \one$,
where one restricts attention to operators
$A \in \overline{\bfA}(\biO)$ which are localized in bounded regions
$\biO \subset \RR^s$. The operators $\balpha(t,\bix)(A_0)$ 
register in expectation values in the region \ $\biO + \bix$ at time $t$
deviations from the Fock vacuum.
Because of the spreading of wave packets, these deviations 
tend to $0$ at asymptotic times. In order to compensate this effect
one integrates the operators \ $\balpha(t,\bix)(A_0)$ over regions
in space whose diameter increases with time. In the proof that 
this is meaningful one makes 
use of the specific structure of the operators in $\overline{\bfA}$ and
applies arguments given in \cite[Lem.~2.2]{Bu3}, where the role 
of the energy operator has   
to be replaced by the particle number operator. One thereby 
finds that for any test function $h$ and any compact
region $\biO \subset \RR^s$ the spatial averages of the operators 
$A \in \overline{\bfA}(\biO)$ satisfy the norm estimates    
$$
\| \int \! d\bix \, h(\bix) \, \balpha(\bix) (A_0) 
\upharpoonright \cF_n \|_n \leq n \, c_{A} \, \| h \|_\infty \, ,
\quad n \in \NN_0 \, ,
$$
where $\| h \|_\infty$ denotes the supremum norm of $h$
and the constant $c_A$ depends only on the chosen operator $A$. 
Similarly to the leading order, this bound and the definition 
of the M{\o}ller operators imply  
that one can proceed again to an asymptotic equality 
\begin{align*}
& \lim_{t \rightarrow \infty} \, 
\Big( \langle \widehat{\Psi}_n, \, \balpha(t)(\! \int 
\! d\bix \, h(\bix/t) \, \balpha(\bix) (A_0)) \, \widehat{\Psi}_n \rangle \\
& - \langle \Psi_n, \, \balpha^{(0)}(t)(\! \int 
\! d\bix \, h(\bix/t) \, \balpha(\bix) (A_0)) \, \Psi_n \rangle \Big) 
= 0 \, .
\end{align*}
By a routine computation of the second (non-interacting) term one gets  
\begin{align*}
& \lim_{t \rightarrow \infty} \, \langle \Psi_n, \, \balpha^{(0)}(t)(\! \int 
\! d\bix \, h(\bix/t) \, \balpha(\bix) (A_0)) \, \Psi_n \rangle \\
& = c_s \int 
\! d\bip \, h(2 \bip) \langle \bip | A_0 |  \bip \rangle \, 
\langle \Psi_n, \widetilde{a}^*(\bip) 
\widetilde{a}(\bip) \Psi_n \rangle \, .
\end{align*}
Here $c_s$ is some dimension dependent constant, 
$\widetilde{a}^*, \widetilde{a}$ are the 
Fourier transforms of the creation and annihilation operators, and 
$\bip, \biq \mapsto  \langle \bip | A_0 | \biq \rangle$ denotes 
the integral kernel of $A_0 \upharpoonright \cF_1$ in momentum 
space. 
This kernel is a continuous function for all compactly  
localized operators; its restriction to the 
diagonal $\bip = \biq$ is called sensitivity function 
and encodes the response of the observable $A_0$ to single particle
excitations, cf.~\cite{ArHa, Ar}. 
These findings can be combined into a single asymptotic formula, 
\begin{equation} \label{e5.1} 
\lim_{t \rightarrow \infty}  \int 
\! d\bix \, h(\bix/t) \, \balpha(t, \bix) (A_0) 
= c_s \int 
\! d\bip \, h(2 \bip) \langle \bip | A_0 | \bip \rangle \,
\widehat{a}^*(\bip) \widehat{a}(\bip) \, ,
\end{equation}
where the limit exists on the domain of the number 
operator $N$ and $\widehat{a}^*$, $\widehat{a}$ denote the outgoing
creation and annihilation operators in momentum space.
An analogous formula holds at negative asymptotic times.
This formula replaces the familiar asymptotic
condition for field operators in case of the observables, which 
do not change the particle number. 

The above formula shows in particular that the particle 
momenta, desribed by the operators 
$\widehat{M} = 
\int \! d\bip \, m(\bip) \, \widehat{a}^*(\bip) \widehat{a}(\bip)$, 
where the functions $m$ result from the sensitivity and averaging 
functions, become meaningful observables at asymptotic times.
They can be determined from the underlying algebra in a 
universal manner which does not depend on the dynamics. 
Other important particle properties, such as the  
collision cross sections, can likewise be determined  
along these lines by considering products of the above
operators, cf.\ the discussion in~\cite{ArHa, Ar, BuPoSt}. 

The preceding
results hold also for potentials admitting bound states. There the 
Fock space $\cF$ splits at asymptotic times into the 
tensor product of Fock spaces corresponding to the elementary
particle and to its stable bound states. The bound states 
are described in the same manner as the underlying particles   
in spite of their possibly complex internal structure.
This internal structure is encoded in their  
sensitivity functions $\bip \mapsto \langle \bip, b | 
A_0 | \bip, b \rangle$, where $b$ labels the bound
states. Also in those cases, the leading asymptotic contributions 
to the expectation values of observables are described by the Fock vacuum. 
But in next to leading order one has to replace in formula 
\eqref{e5.1} the integral on the right hand side by sums of similar terms 
containing the sensitivity functions
and the creation and annihilation operators
of the underlying particle and of all bound states in the theory.

The present results rely to some extent on well known 
facts about asymptotic completeness in quantum mechanics, cf.\  
\cite{DeGe} and references quoted there. Within the present
algebraic setting one may, possibly, give more direct
proofs of relation \eqref{e5.1}. For, one has 
quite specific information about the properties of the observables,
in particular about their close relation to compact operators.
It may therefore be possible to establish in the present concrete 
setting the existence of the limit in relation \eqref{e5.1} by a 
refinement of the Arveson spectral theory of spacetime 
automorphism groups~\cite{Arv}. What is needed 
is a finer characterization of the spectra of the automorphisms 
in analogy to the measure classes for Hilbert space operators.
In order to arrive at the desired convergence one would have to 
exclude in the present context the analogue of 
``singular continuous spectrum''. 
For some first steps into this direction,  cf.\ \cite{Dy}.

\subsection{Equilibrium states} \label{s5.3}

The conventional method of constructing equilibrium states in quantum field 
theory is based on the consideration of Gibbs-von Neumann ensembles in 
bounded regions (boxes). In the present algebraic framework one deals from
the outset with observables in infinite space, so one has to proceed 
differently. One replaces the sharp boundaries of a box by 
external confining forces which are conveniently described by a
harmonic oscillator potential. Thus one considers  
Hamiltonians of the form, $L > 0$, 
\begin{align*}
H_L  &  \doteq  
\int \! d\bix \big( \bpartial a^*(\bix) \, \bpartial a(\bix) 
+ (\bix^2/L^4) \, a^*(\bix) a(\bix) \big) \\
& + \int \! d\bix \! \! \int \! d\biy \ a^*(\bix) a^*(\biy)
V(\bix - \biy) a(\bix) a(\biy) \, ,
\end{align*}
where we assume here that the interaction potential $V$ is of positive type. 
Proceeding to the renormalized Hamiltonians  
$H_{L \, r} \doteq H_L + V(0) N $, the corresponding 
canonical ensembles can then be described by 
density matrices on Fock space $\cF$.
In fact, one has $H_{L \, r} \geq H_{0 L}$, where 
$H_{0 L}$ is the Hamiltonian of the harmonic oscillator, arising for
interaction potential $V=0$. So the partition functions
exist for $H_{L \, r}$ by the Golden-Thompson inequality.   

Similarly to the Hamiltonian in 
equation \eqref{e1.1}, the unitary operators 
$e^{itH_L}$ induce automorphisms
$\balpha_L(t) \doteq \mbox{Ad} \, e^{itH_L}$, $t \in \RR$,
of the algebra $\overline{\bfA}$, and Theorem~\ref{t4.6} 
applies accordingly for any given $L$. 
Moreover, in the limit of large~$L$, 
these automorphisms converge pointwise 
to the original dynamics $\balpha(t)$, $t \in \RR$, 
in the topology on~$\overline{\bfA}$ determined by the seminorms.
Proofs of these statement are given in the appendix. 

For the construction of equilibrium states 
in the thermodynamic limit, one picks any $L > 0$
and considers for given inverse temperature $\beta > 0$ 
and chemical potential 
$\mu \leq - V(0)$ the family of states on $\overline{\bfA}$ 
$$
\omega_{\beta, \mu, L}(\, \cdot \,) \doteq
\mbox{Tr} \, (e^{-\beta(H_L - \mu N)} \, \cdot \,) / 
\mbox{Tr} \, e^{-\beta(H_L - \mu N)} \, .
$$
These states satisfy the KMS-condition for the 
dynamics $\balpha_L(t)$, $t \in \RR$, so they describe
equilibria \cite{BrRo}. Moreover, as $L$ approaches infinity,
they have weak-*-limit points $\{ \omega_{\beta, \mu} \}$
on $\overline{\bfA}$ by the Banach-Alaoglu Theorem. 
The limit states need neither be unique,  
as is typically the case in the 
presence of phase transitions. Nor need they 
describe pure 
phases, such as in the presence of spontaneous breakdown
of symmetries, where mixtures of phases can appear.  

It is also not clear from the outset that the limit states describe 
equilibria for the dynamics $\balpha(t)$, $t \in \RR$. 
In case of no interaction, one can show that the limit states 
obtained this way agree with the familiar quasi-free KMS
states obtained in the thermodynamic limit,
cf.~\mbox{\cite[Sec.~5.2.5]{BrRo}}.
A description of quasi-free states in the framework of 
the resolvent algebra can be found in~\cite[Sec.~4]{BuGr2}.
In the presence of interaction some more detailed 
analysis of the limit states is required, however. 
There one can rely again on  
methods developed for C*-dynamical system, such as  
$(\overline{\bfA}_\alpha, \balpha)$, cf.~\cite{BrRo} and
references quoted there. 

The present formalism, describing observables in infinite space, 
provides also a basis for the
discussion of the effects of perturbations of KMS states and their 
return to equilibrium. Moreover, it ought to cover states where 
the translation 
symmetry is spontanteously broken, such as crystals, or 
states in motion, such as fluids. There exists 
an extensive literature on these topics and we refrain
from giving references here; comprehensive lists may be 
found in \cite{BrRo, PiSa}.

We conclude this section by noting that one can also 
study in the present framework the formation of 
Bose-Einstein condensates, trapped by a 
harmonic potential, cf. \cite{CoDeZi,LiSeSoYn,Ve}. There 
arises the interesting question whether 
the condensation phenomenon disappears if the 
external potential is turned off, \ie whether the limit states  
agree on $(\overline{\bfA}_\alpha, \balpha)$ with the 
Fock vacuum. This is likely to be the case in the 
Gross-Pitaevskii model because of the assumed repulsive 
interaction. Yet 
for interaction potentials of positive type, admitting 
bound states, some more interesting limit states  
may appear. 

\section{Summary and outlook}
\setcounter{equation}{0}

In the present article we have established a 
consistent algebraic framework for the treatment of 
interacting non-relativistic Bose fields.
The novel feature of our approach consists of the fact that 
''large field problems'' are avoided from the outset by 
dealing with the resolvents of the fields. Thus in 
singular states, describing accumulations of 
particles with infinite density,  
these operators simply vanish. On the mathematical
side, this implies that the resolvent algebra has 
ideals; but this is inevitable if a   
kinematical algebra is to admit a sufficiently
rich family of different dynamics \cite[Sec.~10.18]{BuGr2}. 
As a consequence of our approach, the 
algebra of gauge invariant observables $\obfA$ 
attains a mathematically convenient structure, 
being the inverse limit of a family of approximately finite
dimensional C*-algebras. The underlying inverse maps
describe the effect of the removal of 
a particle on the structure of the observables
and their dynamics.

The dynamics 
considered in this article, describing attractive
and repulsive interaction potentials, 
act by automorphisms on the algebra of observables $\obfA$ 
in a (with regard to the topology fixed by a natural 
set of seminorms) 
continuous and quasi-local manner. The class of 
admissible potentials can be extended with some additional 
effort to singular two-body interactions and unbounded 
external potentials. Moreover, for given dynamics $\balpha$, one 
can proceed to C*-dynamical systems $(\bfA_\alpha, \balpha)$, where 
$\bfA_\alpha$ is a dense subalgebra of $\obfA$ in the
seminorm topology. 
So interacting Bosons can be described by C*-dynamical 
systems, contrary to obstructions 
conceived in the literature \cite{BrRo,NaTh}. 
 
The present results also shed light on the relation between the 
field theoretic setting and the quantum mechanical 
particle picture, which is based on position and momentum operators. 
Whereas the field-theoretic observable algebra is built from  
operators of finite rank, the particle algebra generated
by position and momentum operators accomodates an
abundance of operators with continuous spectrum. Thus from 
the present point of
view it seems advantageous to first construct the states of 
interest in the field theoretic setting and only then turn
to their analysis and physical interpretation.

In order to illustrate this idea, we have briefly discussed some 
standard problems in many body theory. We have sketched how 
the existence of ground states,
including approximate ground states (infra vacua), can 
be established in the present
framework. Particle features are  
uncovered by proceeding to asymptotic times.
Suitable spatial averages of localized observables converge 
in this limit 
to operators which describe the asymptotic particle momenta.
In a similar manner one can also compute 
collision cross sections, \textit{etc}. 

Finally, we have addressed 
the problem of constructing thermal equilibrium states
for pair potentials of positive type, which include   
potentials allowing for bound states. There 
one is profitting from the fact that the dynamics of a thermal
system, which is trapped by an external harmonic  
potential, also acts by automorphisms on the algebra of 
observables. Turning off the external potential, 
the automorphisms converge on all observables to the 
dynamics of the infinite system. 
Moreover, thermodynamic limits of the trapped equilibrium states exist. 
Yet these limit states may neither be unique nor 
need they necessarily describe equilibria.
Some more detailed analysis is required 
in order to clarify their 
specific properties in each particular case.  

Another topic of interest is the extension of 
the present analysis to the 
non-gauge invariant Bose fields. It seems 
appropriate to proceed there to the C*-algebra containing, 
besides $\obfA$,  
the operators $F_f \doteq a^*(f) \, (1 + a^*(f) a(f))^{-1/2} $, 
$f \in \cD(\RR^s)$, and their adjoints. 
These operators are isometries if $f$ 
is normalized. They are limits of the operators 
$F_{f, \kappa} \doteq a^*(f) \, (1 + a^*(f) a(f))^{-\kappa} $,
$\kappa > 1/2$, which are contained in the resolvent algebra of fields.
Gauge invariant combinations of the operators of the form 
$F_f F_g^*$ and $F_f^* F_g$ already 
appear as elements of the algebra of observables
$\overline{\bfA}$. We conjecture 
that the partially time translated operators 
$\balpha(t)(F_f) \, F_g^*$  and
$F_f \, \balpha(t)(F_g^*)$, 
$t \in \RR$, are also elements of $\obfA$.
The obvious relations $\balpha(t)(F_f) = (\balpha(t)(F_f) F_f^*) \, F_f$ 
and $\balpha(t)(F_f^*) =  F_f^* \, (F_f \balpha(t)(F_f^*))$ 
then imply that the C*-algebra, generated by 
$\obfA$ and the operators $F_f$, \mbox{$f \in \cD(\RR^s)$},  
is stable under the dynamics. We will return to this
problem in a future publication.

Let us mention in conclusion that the present ideas can also 
be applied to the algebra of observables generated by Fermi fields. 
It was shown by Bratteli \cite{Br} that
this algebra has a structure which is similar to that of 
bosonic systems, established here. 
So the present ideas, regarding the dynamics, can be carried over there.
The dynamics of Fermi fields was already studied by Narnhofer 
and Thirring in~\cite{NaTh}. For technical reasons, 
these authors restricted attention to pair potentials with an 
ultraviolet cutoff. It seems that one can 
proceed also there to the generic class of pair
potentials, considered in the present investigation.

\section*{Appendix}
\setcounter{equation}{0}

We establish here results stated in the main text 
in the proof of Lemma~\ref{l4.1} and in Subsection~\ref{s5.3}. 
Given $n \in \NN$, we consider on  
the unsymmetrized Hilbert space $\cH_n$ the 
Hamiltonians 
$$
H_{L n} \doteq \sum_i (\biP_i^2 + \biQ_i^2/L^4) + 
\sum_{j \neq k} V(\biQ_j - \biQ_k)
$$
for $L > 0$, respectively $L = \infty$ (no external forces). 
These Hamiltonians augment the Hamiltonian $H_n$, 
given in equation~\eqref{e2.1}, by an external harmonic potential, where 
we identify $H_{n \, \infty} = H_n$. 
It follows from standard results, cf.\ \cite[Sec.~VIII.7]{ReSi},
that $H_{L n} \rightarrow H_n$ in the strong resolvent sense as
$L \rightarrow \infty$ and consequently 
$e^{it H_{L n}} \rightarrow e^{itH_n}$ in the strong operator topology,
uniformly on compact subsets of $t \in \RR$. 

The adjoint action of the 
unitaries $e^{itH_{L n}}$ on the algebra $\cB(\cH_n)$ is denoted by 
$\alpha_{L n}(t) \doteq \mbox{Ad} \, e^{itH_{L n}}$, where we identify  
$\alpha_{n \infty}(t) = \alpha_n(t)$, $t \in \RR$. 
Each group of automorphisms $\alpha_{L n}(t)$, $t \in \RR$, leaves the 
algebra $\fK(\II_n) \subset \cB(\cH_n)$ invariant and 
acts pointwise norm-continuously
on it. This is apparent for the dynamics 
$\alpha_{L n}^{(0)}(t)$, $t \in \RR$, where the interaction potential 
is put equal to $V = 0$. In order to see that this assertion is also true 
in the interacting case, we proceed as in the main text and
consider the automorphisms  
$\gamma_{L n}(t) \doteq \alpha_{L n}^{(0)}(t) \, \scirc \, \alpha_{L n}(-t)$,
$t \in \RR$, and their expansion in terms of 
multiple integrals as in equation \eqref{e4.2}. 
The basic ingredients in this expansion are the derivations 
and their primitives, $s, t \in \RR$,   
\begin{equation} \label{eA.1}
\bdelta_{L n}(s) \doteq [B, \biV_{\! L n}(s)] \, , \quad
\bDelta_{L n}(t)(B) \doteq \int_0^t \! ds \, 
\bdelta_{L n}(s)(B)  \, , \quad B \in \cB(\cH_n) \, , \tag{A.1}
\end{equation}
cf.\ Lemma \ref{l4.1}. 
Here \ $\biV_{\, L n}(s) \doteq \balpha_{L n}^{(0)}(s)(\biV_n)$,
where \ $\biV_n = \sum_{j \neq k} V(\biQ_j - \biQ_k)$.
Making use of the notation in the proof of that lemma, 
we put \ $\biQ_{j \frown k} \doteq \biQ_j - \biQ_k$, 
$\biP_{j \frown k} \doteq \biP_j - \biP_k$. For the 
individual terms appearing in the 
time shifted sum we then obtain  
$$
V_{L \, j k}(s) \doteq \balpha_{L n}^{(0)}(s)(V(\biQ_{j \frown k}))
= V(c_L(s) \, \biQ_{j \frown k} + s_L(s) \, \biP_{j \frown k}) 
\, \in \, \fR_n(j \frown k) \, .
$$
Here $\, s \mapsto c_L(s) \doteq \cos(2s/L^2)$, 
 $\, s \mapsto s_L(s) \doteq L^2 \, \sin(2s/L^2)$ for finite $L$
and \mbox{$s \mapsto c_\infty(s) \doteq 1$}, 
$s \mapsto s_\infty(s) \doteq 2s$ for $L = \infty$. 
The derivations $\bDelta_{L n}(t)$ in equation~\eqref{eA.1}
are the sum of derivations 
$\Delta_{L \, jk}(t)$, where the potential 
$\biV_{L n}(s)$  in equation~\eqref{eA.1} is replaced by
the pair potential $V_{L \, jk}(s)$.

As was explained in Lemma \ref{l4.1}, an important step 
in the analysis of the derivations $\Delta_{L \, jk}(t)$
consists of the demonstration that the underlying 
integrals $\int_0^t \! ds V_{L \, jk}(s)$, defined in the 
strong operator topology,  belong to the compact
ideal $\fC_n(j \frown k) \subset \fR_n(j \frown k)$. 
Since $\fR_{n}(j \frown k)$ is faithfully represented 
on $L^2(\RR^s)$, it suffices to 
show that the integrals act there as compact operators. 

Since the two-body potential $V$ is an element of $C_0(\RR^s)$, 
the time shifted products $V_{L \, jk}(s') V_{L \, jk}(s'')$ are compact 
operators on the representation 
space $L^2(\RR^s)$ whenever $s',s'' \in \RR$ satisfy 
$h_L(s',s'') \doteq
(s_{L}(s') c_{L}(s'') - s_{L}(s'') c_{L}(s')) \neq 0$. For then the 
two underlying time shifted position operators are canonically 
conjugate with Planck constant $h_{L}(s',s'') \neq 0$. 
Thus, apart from a set of measure zero, 
the function $s',s'' \mapsto V_{L \, jk}(s') V_{L \, jk}(s'')$
on $\RR^2$ has values in the compact operators
on $L^2(\RR^s)$. Moreover, it is bounded in norm.
So the double integral 
$\int_0^t \! ds' \int_0^t \! ds'' \, V_{L \, jk}(s') V_{L \, jk}(s'')
= \big( \int_0^t \! ds \, V_{L \, jk}(s) \big)^2$, defined in 
the strong operator topology, 
is a compact operator for any $t \in \RR$. 
Hence 
its square root is also compact and by polar decomposition 
we arrive at the conclusion that $\int_0^t \! ds \, V_{L jk}(s)$ is 
a compact operator for any $t \in \RR$ and any choice of~$L$. 

By the arguments worked out in the proof of 
Lemma~\ref{l4.1}, one can show then that the algebra 
$\fK(\II_n)$ is stable under the action of 
the derivations $\bDelta_{L n}(t)$, $t \in \RR$.
Moreover, there exists an appropriate generalization 
of  Lemma~\ref{l4.2}, cf.\ the remark that follows it.  
Thus, by the inductive construction of the Dyson 
series, one arrives at a generalization of Proposition~\ref{p4.4}.  
Moreover, the arguments establishing Lemma~\ref{l4.5} can 
still be applied since the interaction
potential $V$ is not changed and vanishes asymptotically.
Hence also the modified dynamics satisfies the coherence condition
$\kappa_n \scirc \, \alpha_{L n}(t) = \alpha_{n-1 \, L}(t) \scirc \, \kappa_n$,
$n \in \NN$. It implies that the algebra $\obfA$ is stable under
the action of the global dynamics $\balpha_L(t)$, $t \in \RR$,
and the statements of Theorem~\ref{t4.6} remain unchanged
for any $L > 0$.

Next, we adress the question regarding the convergence properties of 
the dynamics $\alpha_{L n}(t)$, $t \in \RR$, in the limit of large $L$.
In the non-interacting case, $V=0$, the corresponding 
automorphisms $\alpha_{L n}^{(0)}(t)$ leave each subalgebra 
$\fC_n(\II_m) \subset \fK(\II_n)$ invariant since they do not
mix tensor factors, 
cf.\ the discussion before relation~\ref{e4.1}.
It therefore follows from the convergence properties 
of the underlying unitary operators $e^{itH_{0 \,L n}(t)}$, mentioned above, that 
$\alpha_{L n}^{(0)}(t) \rightarrow \alpha_{n}^{(0)}(t)$
pointwise in norm on $\fK(\II_n)$ if $L$ tends to infinity, $t \in \RR$. 

In order to establish this fact also for the interacting dynamics,
we need to compare the expansions \eqref{e4.2} 
of the Dyson maps 
for different choices of $L$. This is accomplished by 
analyzing the norm distance between the operators 
$\int_0^t \! du V_{L \, jk}(s)$, $t \in \RR$,  for different values of 
$L$. It requires a refinement of the preceding arguments.
We consider again the function 

\vspace*{-5mm}
$$
s',s'' \mapsto V_{L \, jk}(s') V_{L \, jk}(s'') =
e^{is'H_{1 L}} V(\biQ_{j\frown k}) \, e^{i(s''- s')H_{1 L}} \,
V(\biQ_{j\frown k}) \, e^{-is''H_{1 L}} \, ,
$$
where we put $H_{1 L} \doteq \biP_{j\frown k}^2 + \biQ_{j\frown k}^2/L^4$. 
Adopting the Dirac bra-ket notation, the kernel of the middle term 
of this operator is given in configuration space by 
\begin{align*} \label{eA.2} 
\bix, & \biy  \rightarrow 
\langle \bix | \, V(\biQ) \, e^{i(s''-s')H_{1 L}} \, V(\biQ) \, | 
\biy \rangle \tag{A.2}  \\[0.5mm] 
& = i N_L(s'-s'') \, V(\bix) \,  
  e^{i (\big(\sbix^2 + \sbiy^2)\cos(2(s'-s'')/L^2) - 2 \sbix \sbiy
  \big)/2L^2\sin(2(s'-s'')/L^2)}
\ V(\biy) \, .
\end{align*}
Here $N_L(s'-s'') = \big(2 \pi i L^2 \sin(2(s'-s'')/L^2)\big)^{-s/2}$ and
we made use of the Green's function (Mehler kernel) of the 
Hamiltonian $H_{1 L}$. Choosing temporarily for the potential $V$ some 
test function, the kernel \eqref{eA.2} is square integrable   
if the pair $(s',s'')$ lies in the region of regularity 
$$
\biR_L \doteq \{(s',s'') \in \RR^2 : 2(s' - s'') \not\in \pi L^2 \, \ZZ \} \, . 
$$ 
Thus the function 
$s',s'' \mapsto V(\biQ) \, e^{i(s''- s')H_{1 L}} \, V(\biQ)$ \ 
on $\biR_L$ has values in the Hilbert-Schmidt class.  
It is continuous in $s',s''$ 
with regard to the Hilbert-Schmidt norm and converges 
in this topology to 
$s',s'' \mapsto V(\biQ) \, e^{i(s''-s')H_{1}} \, V(\biQ)$,
where $H_1 = \biP^2$, in the limit 
$L \rightarrow \infty$, uniformly on any given compact
subset of~$\biR_L$. 

Since any $V \in C_0(\RR^s)$ can be 
approximated in the supremum norm by testfunctions, the preceding statements 
remain true for such potentials if one replaces the terms 
``Hilbert-Schmidt class'' by ``compact operators'' 
and ``Hilbert-Schmidt norm'' by 
``operator norm'' on $L^2(\RR^s)$, denoted by $\| \, \cdot \, \|_1$.
So for any potential $V \in C_0(\RR^s)$, the function 
\begin{equation} \label{eA.3}
\tag{A.3}
s',s'' \mapsto \| \,  V(\biQ) \, e^{i(s''-s')H_{1 L}} \, V(\biQ) - 
V(\biQ) \, e^{i(s''-s')H_{1}} \, V(\biQ)  \|_1 
\end{equation}
is bounded for $(s',s'') \in \RR^2$, continuous 
on $\biR_L$, and it tends to~$0$
in the limit $L \rightarrow \infty$, uniformly on any given 
compact subset of $\biR_L$. Next, since for $s' \neq s''$ 
the operators $V(\biQ) \, e^{i(s''-s')H_{1}} \, V(\biQ)$, involving
the free Hamiltonian, are
compact and $e^{is'H_{1 L}} \rightarrow  e^{is'H_{1}}$
in the strong operator topology for large $L$, the function 
$$
s',s'' \mapsto 
\|(e^{is'H_{1 L}} - e^{is'H_{1}}) V(\biQ) \, e^{i(s''- s')H_{1}} \, V(\biQ)\|_1
$$
has the same properties as the function in \eqref{eA.3}. This 
applies also to the corresponding function of the adjoint operators.
The preceding facts imply 
\begin{align*}
& \lim_{L \rightarrow \infty} 
\| \int_0^t \! ds' \! \int_0^t \! ds'' \, 
(V_{L \, j k}(s') V_{L \, j k}(s'') - V_{j k}(s') V_{j k}(s'') ) \|_1 \\
& \leq  \int_0^t \! ds' \int_0^t \! ds'' \, 
\lim_{L \rightarrow \infty}  
\| (V_{L \, j k}(s') V_{L \, j k}(s'') - V_{j k}(s') V_{j k}(s'') ) \|_1
= 0 
\end{align*}
by the dominated convergence theorem. As a matter of fact, this 
convergence is uniform on compact subsets of $t \in \RR$ 
since for sufficiently large $L$ the
regions $\biR_L$ contain the square $[0,t]\times[0,t]$,
apart from its diagonal. In a similar manner one obtains. 
\begin{align*}
& \lim_{L \rightarrow \infty} \| \int_0^t \! ds' \,
(V_{L \, jk}(s') - V_{jk}(s')) \int_0^t \! ds'' \, V_{jk}(s'') \|_1 =  0 \, , \\
& \lim_{L \rightarrow \infty} \| \int_0^t \! ds' \,
V_{jk}(s') \int_0^t \! ds'' \, (V_{L \, jk}(s'') - V_{jk}(s'')) \|_1 = 0 \, , 
\end{align*}
because one of the integrals 
appearing in these relations is a compact operator
and the other one converges to $0$ in the strong operator topology. 
Putting everything together, we arrive at  
\begin{align*}
\lim_{L \rightarrow \infty} & \,  
\| \int_0^t \! ds \, ( V_{L \, jk}(s) - V_{jk}(s) )  \|_1^2 \\
& = \lim_{L \rightarrow \infty} 
\| \int_0^t \! ds' \! \int_0^t \! ds'' \, (V_{L \, jk}(s') - V_{jk}(s')) 
(V_{L \, jk}(s'') - V_{jk}(s'')) \|_1 = 0 \, ,
\end{align*} 
where the convergence is uniform on compact subsets of $t \in \RR$. 
It follows from these facts that the derivations 
$t \rightarrow \bDelta_{L n}(t)$
on~$\fK(\II_n)$, on which the Dyson series~\eqref{e4.2} are based,
converge for large $L$ to the original derivation 
$t \rightarrow \bDelta_{n}(t)$. More precisely, 
denoting the norm of linear maps on $\fK_n$ by 
${}_n\| \, \cdot \, \|$, one has 
\begin{align*}
{}_n\| \bDelta_{L n}(t) - \bDelta_{n}(t) \| 
& = {}_n\| \sum_{j \neq k} \, \int_0^t \! ds \, 
[\big( V_{L \, j k}(s) - V_{j k}(s) \big), \, \cdot \, \, ] \| \\
& \leq 2 n(n-1) \, \| \int_0^t \! ds \, ( V_{L \, j k}(s) - V_{j k}(s)) \|_1 \, . 
\end{align*}
Hence, $\lim_{L \rightarrow \infty} {}_n\| \bDelta_{L n}(t) - \bDelta_{n}(t) \| = 0$,
uniformly on compact subsets of $t \in \RR$. 

With this information, we can turn now to the discussion of
the Dyson cocycles. Given $K_n \in \fK_n$,
we represent $\gamma_{L n}(t)(K_n)$ by the series \eqref{e4.2}
of multiple integrals. As has been explained, its inductive construction 
produces at $(l+1)$st order terms of the form   
$t \mapsto D_{L \, l+1}(t) = \int_0^t \! ds \, \bdelta_{L n}(s)(D_{L \, l}(s))$.
Making use of this information, we will show that at 
each order $l \in \NN$, one has
$\lim_{L \rightarrow \infty} \|  D_{L \, l}(t)  - D_l(t) \|_n = 0$,
uniformly on compact sets of $t \in \RR$. 
Since the Dyson series converges absolutely in norm, 
uniformly for $L > 0$, this implies 
$$
\lim_{L \rightarrow \infty} \|\gamma_{L n}(t)(K_n) - \gamma_n(t_n)(K_n) \|_n  = 0 \, 
, \quad K_n \in \fK_n \, , \ n \in \NN \, . 
$$
The proof of convergence of the terms in the series is
obtained again by iteration.
For $l = 1$ one has $D_{L \, 1}(t) = \bDelta_{L n}(t)(K_n)$.
Hence, by the preceding results, 
$\lim_{L \rightarrow \infty} \| D_{L \, 1}(t) - D_1(t) \|_n = 0$,
uniformly on compact subsets of \mbox{$t \in \RR$}. 
Assuming that the corresponding statement holds at order $l$, we proceed to 
\begin{align*}
& D_{L \, l+1}(t) - D_{l+1}(t) \\
& = \int_0^t \! ds \, \bdelta_{L n}(s) \big( D_{L \, l}(s) - D_l(s) \big)
+ \int_0^t \! ds \, (\bdelta_{L n}(s) - \bdelta_n(s))(D_l(s)) \, .
\end{align*}
The first integral is bounded by 
$$
\big\| \int_0^t \! ds \, \bdelta_{L n}(s) \big( D_{L \, l}(s) - D_l(s) ) \big\|_n
\leq |t| \, 2n(n-1) \, \| V \| \, 
\sup_{0 \leq s \leq t} \, \| D_{L \, l}(s) - D_l(s) \|_n \, .
$$
Hence according to the induction hypothesis, 
it vanishes in the limit $L \rightarrow \infty$, uniformly on 
compact subsets of \mbox{$t \in \RR$}. Putting 
$\partial\bDelta_{L n}(t) \doteq \int_0^t \! ds \, (\bdelta_{L n}(s) 
- \bdelta_n(s))$, 
the second integral in the above decomposition 
can be presented as norm limit,
$$
\lim_{k \rightarrow \infty} \sum_{j = 1}^k
\big(\partial\bDelta_{L n}(jt/k) - \partial\bDelta_n((j-1)t/k)\big) \, 
(D(jt/k)) \, .
$$ 
According to Lemma \ref{l4.2}, cf.\ also the remark thereafter,
this sequence converges to the integral in the
norm topology of $\fK_n$ for $k \rightarrow \infty$.
What matters here is the fact that this convergence is
uniform for $L > 0$ and $t$ varying in compact subsets of 
$\RR$. This follows from the bound  
${}_n\| (\bdelta_{L n}(s) - \bdelta_n(s)) \|
\leq 4 n(n-1) \, \| V \| $ which holds
uniformly for $L > 0$ and $s \in \RR$. 
It enters in the estimate \eqref{e4.3} in Lemma~\ref{l4.2}. 
Because of the uniform convergence, 
it suffices to look at the individual terms in the approximating sum. 
Now \mbox{$\lim_{L \rightarrow \infty} {}_n\|\partial\bDelta_{L n}(t)\| = 0$},
uniformly on compact sets of $t \in \RR$. 
Thus each term in the sum vanishes in the norm topology of~$\fK_n$,
hence the second integral in the above decomposition also 
vanishes in this topology, uniformly in $t$. 
This shows that the function $t \mapsto D_{L \, l+1}(t)$ has all required
properties, completing the induction. 
So we conclude that the automorphisms $\balpha_L(t)$ 
converge pointwise on $\obfA$ to $\balpha(t)$, \viz.
$$
\lim_{L \rightarrow \infty} 
\| \balpha_L(t)(A) - \balpha(t)(A) \|_n = 0 \, , \quad 
A \in \obfA, \ n \in \NN_0 \, .
$$
This completes the proofs of statements made in the main text.

\vspace{7mm}
%\newpage
\noindent {\Large \bf Acknowledgment} \\[1mm]
I profitted from a stimulating exchange with Jakob Yngvason 
on a preliminary version of this article. I
am also grateful to Hendrik Grundling and Roberto Longo 
for valuable comments and   
I would like to thank the Mathematics Institute of the   
University of G\"ottingen for their generous hospitality.

\end{document}